\newtheorem{theorem}{Theorem}
\newtheorem{lemma}{Lemma}
\newtheorem{observation}{Observation}
\newtheorem{corollary}{Corollary}
\newtheorem{remark}{Remark}
\newtheorem{definition}{Definition}
\newenvironment{proof}{\par\noindent\textit{Proof.} }{\hfill\framebox(5,5){}\medskip\par}
\newenvironment{myenumeratei}{
\begin{enumerate}
\vspace{-.05in}
\itemsep -.025in

}
{\end{enumerate}}
\newenvironment{myenumeratea}{
\begin{enumerate}
\vspace{-.05in}
\itemsep -.025in

}
{\end{enumerate}}
\newcommand{\com}[1]{}
\begin{document}
\begin{center}
\begin{tabular}{c}
{\Large Uniquely Restricted Matchings in Interval Graphs}
\vspace{.2in}\\
{Mathew C. Francis\footnotemark[1]\hspace{.5in} Dalu Jacob\footnotemark[2]\hspace{.5in} Satyabrata Jana\footnotemark[3]}\vspace{.1in}\\
{Indian Statistical Institute, Chennai Centre}
\end{tabular}
\end{center}
\footnotetext[1]{E-mail: \texttt{mathew@isichennai.res.in}}
\footnotetext[2]{E-mail: \texttt{dalu1991@gmail.com}}
\footnotetext[3]{E-mail: \texttt{satyamtma@gmail.com}}
\vspace{.2in}

\bibliographystyle{plain}
\begin{abstract}
A matching $M$ in a graph $G$ is said to be uniquely restricted if there is no other matching in $G$ that matches the same set of vertices as $M$. We describe a polynomial-time algorithm to compute a maximum cardinality uniquely restricted matching in an interval graph, thereby answering a question of Golumbic et al. (``Uniquely restricted matchings'', M.~C.~Golumbic, T.~Hirst and M.~Lewenstein, \textit{Algorithmica}, 31:139--154, 2001). Our algorithm actually solves the more general problem of computing a maximum cardinality ``strong independent set'' in an interval nest digraph, which may be of independent interest. Further, we give linear-time algorithms for computing maximum cardinality uniquely restricted matchings in proper interval graphs and bipartite permutation graphs.
\end{abstract}
\section{Introduction}
Let $G = (V,E)$ be a graph. A set of edges $M \subseteq E(G)$ is said to be a \emph{matching} if no two edges of $M$ share a common vertex. The set of vertices in $V$ that have an edge of $M$ incident on them are called the \emph{vertices matched by $M$}. A matching $M$ is said to be \emph{uniquely restricted} if there is no other matching that matches the same set of vertices as $M$. The problem of finding a maximum cardinality uniquely restricted matching in an input graph is known to be NP-complete even for the special cases of split graphs and bipartite graphs~\cite{GHL01}. In their paper initiating the study of uniquely restricted matchings, Golumbic, Hirst and Lewenstein~\cite{GHL01} present linear-time algorithms for the problem on threshold graphs, proper interval graphs, cacti and block graphs, while leaving open the question of whether polynomial-time algorithms exist for the problem on interval graphs and permutation graphs. In Section~\ref{sec:int}, we answer the question for interval graphs by constructing a polynomial-time algorithm that computes a maximum cardinality uniquely restricted matching in any interval graph. The algorithm is actually a dynamic programming algorithm that computes a maximum cardinality ``strong independent set'' in an interval nest digraph (see Section~\ref{sec:int} for definitions)---this problem is shown to be more general than the problem of computing a maximum cardinality uniquely restricted matching in interval graphs. Before that, in Sections~\ref{sec:propint} and~\ref{sec:bipperm}, we present linear-time dynamic programming algorithms for computing a maximum cardinality uniquely restricted matching in proper interval graphs and bipartite permutation graphs respectively. We note that the linear-time algorithm described for the problem for proper interval graphs in~\cite{GHL01} does not appear to work in all cases.  

\section{Preliminaries}
We consider only finite graphs.
Wherever it is not specified otherwise, ``graph'' shall mean an undirected graph. An edge between vertices $u$ and $v$ in an undirected graph is denoted as $uv$ and a directed edge (arc) from $u$ to $v$ in a directed graph (digraph) is denoted as $(u,v)$. We shall denote the vertex set and edge set of a graph or digraph $G=(V,E)$ by $V(G)$ and $E(G)$ respectively. All graphs and digraphs considered are simple---i.e., there are no loops or multiple edges.

Let $M$ be a matching in a graph $G$. An even cycle in $G$ is said to be an \emph{alternating cycle with respect to $M$} if every second edge of the cycle belongs to $M$~\cite{GHL01}. The following theorem characterizes uniquely restricted matchings in terms of alternating cycles.

\begin{theorem}[\cite{GHL01}]\label{thm:urmaltcycle}
Let $G=(V,E)$ be a graph. A matching $M$ in $G$ is uniquely restricted if and only if there is no alternating cycle with respect to $M$ in $G$.
\end{theorem}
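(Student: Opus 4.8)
The plan is to handle both directions through the symmetric difference of two matchings, which is the natural tool: if $M$ and $M'$ are matchings of $G$, then in the subgraph $(V, M \cup M')$ every vertex has degree at most $2$, so $M \triangle M'$ decomposes into vertex-disjoint simple paths and even cycles whose edges alternate between $M \setminus M'$ and $M' \setminus M$.

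For the direction ``if there is no alternating cycle, then $M$ is uniquely restricted'', I would argue the contrapositive. Assume $M$ is not uniquely restricted, so there is a matching $M' \neq M$ covering exactly the vertices covered by $M$. Consider the path/cycle decomposition of $M \triangle M'$. The crucial step is to show there is no path component. Suppose $P$ were such a component and let $v$ be an endpoint of $P$; the unique edge of $M \triangle M'$ at $v$ lies in exactly one of $M, M'$, say in $M \setminus M'$ (the other case is symmetric). Then $v$ is covered by $M$, hence also by $M'$; but the $M'$-edge at $v$ cannot lie in $M' \setminus M$ (that would give $v$ degree $2$ in $M \triangle M'$, contradicting that $v$ is an endpoint of $P$) and cannot lie in $M \cap M'$ either, since the unique $M$-edge at $v$ is the one in $M \setminus M'$ --- a contradiction. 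Hence every component of $M \triangle M'$ is an even cycle alternating between $M$ and $M'$, and since $M \neq M'$ there is at least one such cycle; it is an alternating cycle with respect to $M$.

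For the converse, suppose $C$ is an alternating cycle with respect to $M$. Then $E(C)$ is the disjoint union of the perfect matching $M_C := M \cap E(C)$ of $C$ and the perfect matching $M_C' := E(C) \setminus M$ of $C$, and $M_C'$ is nonempty. I would set $M' := (M \setminus M_C) \cup M_C'$. Since $C$ is alternating, every vertex of $C$ is matched in $M$ by one of its two incident edges on $C$, so that edge lies in $M_C$; hence no edge of $M \setminus M_C$ is incident to $V(C)$, and therefore $M'$ is a matching (the disjoint union of the matching $M \setminus M_C$, which avoids $V(C)$, and the perfect matching $M_C'$ of $C$). Both $M$ and $M'$ cover exactly $V(C)$ together with the vertices covered by $M \setminus M_C$, yet $M' \neq M$ because $M_C'$ is a nonempty subset of $M'$ disjoint from $M$. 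So $M$ is not uniquely restricted.

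The only step that genuinely needs care is the ``no path component'' claim in the first direction; it is exactly where the hypothesis that $M$ and $M'$ cover the same vertex set is used, and everything else is routine bookkeeping. I expect a reader should slow down there and nowhere else.
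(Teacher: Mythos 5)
Your proof is correct. The paper itself states this theorem without proof, citing Golumbic, Hirst and Lewenstein, so there is nothing internal to compare against; your symmetric-difference argument (decomposing $M\triangle M'$ into alternating paths and even cycles, ruling out path components using the equal-coverage hypothesis, and in the converse direction swapping the two perfect matchings of the alternating cycle) is exactly the standard proof of this result, and every step, including the delicate ``no path component'' claim, is handled correctly.
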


\begin{lemma}\label{lem:altc4}
Let $G=(V,E)$ be any graph and let $\{uv,u'v'\}$ be a matching in it. The following statements are equivalent:
\begin{myenumeratei}
\item\label{it:c4} There is a cycle of length 4 containing the edges $uv$ and $u'v'$ in $G$.
\item\label{it:altc4} There is an alternating cycle with respect to $\{uv,u'v'\}$ in $G$.
\item\label{it:neigh} Each of $u,v$ has at least one neighbour in $\{u',v'\}$ and each of $u',v'$ has at least one neighbour in $\{u,v\}$.
\end{myenumeratei}
\end{lemma}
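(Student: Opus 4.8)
The plan is to prove the cyclic chain of implications \ref{it:c4} $\Rightarrow$ \ref{it:altc4} $\Rightarrow$ \ref{it:neigh} $\Rightarrow$ \ref{it:c4}; since $\{uv,u'v'\}$ is a matching, the four vertices $u,v,u',v'$ are distinct, and I will use this freely. For \ref{it:c4} $\Rightarrow$ \ref{it:altc4}: a $4$-cycle has exactly two pairs of vertex-disjoint (``opposite'') edges, and since $uv$ and $u'v'$ share no vertex, a $4$-cycle containing both of them must have them forming one such pair; hence the edges of that cycle alternate between $\{uv,u'v'\}$ and its complement as one traverses the cycle, so it is an alternating cycle with respect to $\{uv,u'v'\}$. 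For \ref{it:altc4} $\Rightarrow$ \ref{it:neigh}: if $C$ is an alternating cycle with respect to $M=\{uv,u'v'\}$ of length $2k$, then $C$ is an even cycle in a simple graph, so $2k\ge 4$; but $C$ contains exactly $k$ distinct edges of $M$ and $|M|=2$, forcing $k=2$. Thus $C$ is a $4$-cycle containing both $uv$ and $u'v'$, and since these two edges are vertex-disjoint, $C$ has edge set either $\{uv,vu',u'v',v'u\}$ or $\{uv,vv',v'u',u'u\}$; in either case, reading off these edges yields exactly the four adjacencies asserted in \ref{it:neigh} (each of $u,v$ having a neighbour in $\{u',v'\}$, and each of $u',v'$ having a neighbour in $\{u,v\}$).

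For the last implication \ref{it:neigh} $\Rightarrow$ \ref{it:c4}, the key observation is that \ref{it:neigh} says exactly that the bipartite graph $H$ with parts $\{u,v\}$ and $\{u',v'\}$, whose edge set consists of those edges of $G$ having one end in each part, has no isolated vertex. A bipartite graph with both parts of size $2$ and no isolated vertex must contain two vertex-disjoint edges: assuming without loss of generality that $uu'\in E(H)$, either $vv'\in E(H)$ as well, or else the only neighbour of $v$ in $H$ is $u'$, in which case $v'$ --- not being isolated --- must be adjacent to $u$, and $\{uv',vu'\}$ is a vertex-disjoint pair. Hence one of $\{uu',vv'\}$ or $\{uv',vu'\}$ consists of two edges of $G$, and together with the edges $uv$ and $u'v'$ it forms a $4$-cycle of $G$ containing both $uv$ and $u'v'$, giving \ref{it:c4}. (Alternatively, one could invoke Hall's theorem for $H$, whose marriage condition is immediate from the absence of isolated vertices.)

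The argument is entirely elementary. The only two points that deserve a moment's care are the length bound ``an alternating cycle with respect to a two-edge matching has length exactly $4$'', used in the second implication, and the short case analysis (equivalently, the Hall's-theorem argument) extracting two vertex-disjoint cross edges in the third implication. I do not expect any genuine obstacle beyond stating these two small steps cleanly.
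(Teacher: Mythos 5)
Your proof is correct and follows essentially the same elementary case analysis as the paper, which proves \ref{it:c4}$\Leftrightarrow$\ref{it:altc4} as immediate and handles \ref{it:altc4}$\Leftrightarrow$\ref{it:neigh} by exactly the same dichotomy you use (either $\{uu',vv'\}$ or $\{uv',vu'\}$ are both edges, yielding the cycle $uvv'u'u$ or $uvu'v'u$). Your version is marginally more explicit in two places the paper takes as read --- that an alternating cycle with respect to a two-edge matching must have length exactly $4$, and the packaging of the final step as ``a $2\times 2$ bipartite graph with no isolated vertex has a perfect matching'' --- but the underlying argument is the same.
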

\begin{proof}
\ref{it:c4}$\Leftrightarrow$\ref{it:altc4} is straightforward. We shall therefore only prove \ref{it:altc4}$\Leftrightarrow$\ref{it:neigh}.

If there is an alternating cycle with respect to the matching $\{uv,u'v'\}$, then it is either $uvu'v'u$ or $uvv'u'u$. In any case, it is clear that each of $u,v$ has at least one neighbour in $\{u',v'\}$ and each of $u',v'$ has at least one neighbour in $\{u,v\}$.

Now suppose that each of $u,v$ has at least one neighbour in $\{u',v'\}$ and each of $u',v'$ has at least one neighbour in $\{u,v\}$. Then, $(vu'\notin E(G)$ or $v'u\notin E(G))\Rightarrow (uu'\in E(G)$ and $vv'\in E(G))$. This means that if $vu'\notin E(G)$ or $v'u\notin E(G)$, then $uvv'u'u$ is an alternating cycle with respect to $\{uv,u'v'\}$ in $G$. If on the other hand both $vu',v'u\in E(G)$, then $uvu'v'u$ is an alternating cycle with respect to $\{uv,u'v'\}$ in $G$.
\end{proof}

\section{Proper Interval Graphs}\label{sec:propint}
By ``interval'' we shall mean a closed interval on the real line. An interval is denoted as $[a,b]$ where $a,b\in\mathbb{R}$ and $a\leq b$, and is the set $\{x\in\mathbb{R}\colon a\leq x\leq b\}$.
Given a graph $G$, a collection $\{I_u\}_{u\in V(G)}$ of intervals is said to be an \emph{interval representation} of $G$ if for distinct $u,v\in V(G)$, we have $uv\in E(G)$ if and only if $I_u\cap I_v\neq\emptyset$. Graphs which have interval representations are called \emph{interval graphs}. The following theorem about uniquely restricted matchings in interval graphs is from~\cite{GHL01}.

\begin{theorem}[\cite{GHL01}]\label{thm:urmintc4}
Let $G=(V,E)$ be an interval graph. Let $M$ be a matching $M$ in $G$. Then the following statements are equivalent:
\begin{myenumeratei}
\item\label{it:Murm} $M$ is uniquely restricted
\item\label{it:noaltC4} There is no alternating cycle of length 4 with respect to $M$ in $G$
\item\label{it:ee'urm} For any two edges $e,e'\in M$, $\{e,e'\}$ is a uniquely restricted matching in $G$
\end{myenumeratei}
\end{theorem}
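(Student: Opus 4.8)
My plan is to prove the cycle (i)$\Rightarrow$(ii)$\Rightarrow$(iii)$\Rightarrow$(i), with only the last implication using that $G$ is an interval graph. The implication (i)$\Rightarrow$(ii) is immediate from Theorem~\ref{thm:urmaltcycle}, since an alternating $C_4$ is in particular an alternating cycle. For (ii)$\Rightarrow$(iii) I would argue the contrapositive: if $\{e,e'\}$ is not uniquely restricted for some $e,e'\in M$, then by Theorem~\ref{thm:urmaltcycle} there is an alternating cycle with respect to $\{e,e'\}$, hence by Lemma~\ref{lem:altc4} a cycle of length $4$ through $e$ and $e'$; as $e,e'\in M$ and $M$ is a matching, the other two edges of that $C_4$ avoid $M$, so it is an alternating $C_4$ with respect to $M$ and (ii) fails. (Read backwards, the same bookkeeping gives (iii)$\Rightarrow$(ii), so (ii) and (iii) are in fact equivalent in any graph.)

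For the substantive implication I would prove $\neg$(i)$\Rightarrow\neg$(ii). Suppose $M$ is not uniquely restricted; by Theorem~\ref{thm:urmaltcycle} there is an alternating cycle with respect to $M$, and I fix one with the fewest edges, say $C=u_1v_1u_2v_2\cdots u_kv_ku_1$ with $u_\ell v_\ell\in M$ for every $\ell$ (indices modulo $k$). If $k=2$ then $C$ is an alternating $C_4$ and $\neg$(ii) holds, so the point is to rule out $k\ge 3$ by contradicting the minimality of $C$. The first step is a reduction using ``mixed'' chords: call an edge $u_iv_j$ of $G$ with $j\notin\{i,i-1\}$ a mixed chord of $C$ (it is never an edge of $C$). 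If such a chord exists, then the arc of $C$ from $u_i$ to $v_j$ that begins with the matching edge $u_iv_i$ and ends with the matching edge $u_jv_j$, closed up by the chord $u_iv_j$, is again an alternating cycle with respect to $M$, with between $4$ and $2k-2$ edges --- contradicting minimality. Hence $C$ has no mixed chord: fixing an interval representation $\{I_w\}_{w\in V(G)}$ of $G$, we have $I_{u_i}\cap I_{v_j}=\emptyset$ whenever $j\notin\{i,i-1\}$.

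Then I would use the representation. Let $x\in V(C)$ have the leftmost right endpoint among the intervals of $V(C)$; relabelling along $C$ (which carries a symmetry swapping the two endpoints of every matching edge) I may assume $x=u_i$, and write $\beta$ for the right endpoint of $I_{u_i}$. The two $C$-neighbours $v_{i-1},v_i$ of $u_i$ then both contain $\beta$, so $v_{i-1}v_i\in E(G)$ and $\beta\in I_{v_i}$. The finish is a short case analysis. If $u_{i-1}$ is adjacent to $u_i$ or to $v_i$, then $\{u_{i-1}v_{i-1},\,u_iv_i\}$ satisfies the neighbour condition of Lemma~\ref{lem:altc4} (using $u_iv_{i-1},\,v_{i-1}v_i\in E(G)$), yielding a $C_4$ through two edges of $M$, i.e.\ a strictly shorter alternating cycle. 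Otherwise $I_{u_{i-1}}$ lies entirely to the right of both $\beta$ and the right endpoint of $I_{v_i}$, which forces $I_{v_{i-1}}$ (since it contains $\beta$ and meets $I_{u_{i-1}}$) to contain the whole segment from $\beta$ to the right endpoint of $I_{v_i}$. I then split on whether $\beta\in I_{u_{i+1}}$ and whether $v_{i+1}v_i\in E(G)$: in the case $\beta\in I_{u_{i+1}}$ and $v_{i+1}v_i\in E(G)$, the pair $\{u_iv_i,\,u_{i+1}v_{i+1}\}$ satisfies the condition of Lemma~\ref{lem:altc4} and again gives a shorter alternating cycle; in every other case a short interval computation forces $I_{u_{i+1}}$ and $I_{v_{i-1}}$ to overlap, making $u_{i+1}v_{i-1}$ a mixed chord of $C$ (here $k\ge 3$ is used) --- which was already excluded. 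Every branch contradicts the minimality of $C$, so $k=2$ and $\neg$(ii) holds.

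The routine parts are the equivalence (ii)$\Leftrightarrow$(iii) and the mixed-chord reduction. The main obstacle I anticipate is the last paragraph: the positional case analysis that extracts, from ``$C$ has no mixed chord'' together with the extremal choice of $x$, either a length-$4$ alternating cycle or a brand new mixed chord. This is precisely where interval-ness is needed beyond chordality --- a $6$-cycle with a triangle added on one side is chordal but not an interval graph (it has an asteroidal triple), so chordality alone cannot force the alternating cycle down to length $4$.
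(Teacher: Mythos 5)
Your proposal is correct and follows essentially the same strategy as the paper's proof: reduce everything to showing that a non-uniquely-restricted matching admits an alternating $C_4$ (the equivalence of \ref{it:noaltC4} and \ref{it:ee'urm} being pure bookkeeping via Lemma~\ref{lem:altc4} and Theorem~\ref{thm:urmaltcycle}), take a shortest alternating cycle, rule out the parity-mixed chords by the cycle-splitting argument, and then exploit an extremal vertex of the interval representation to contradict minimality. The only real difference is the endgame: the paper orders the vertices by left endpoints, uses the resulting property ($u<v<w$ and $uw\in E(G)$ imply $uv\in E(G)$) at the maximum vertex, and exhibits the forbidden chord $u_ku_3$ in three lines, which replaces your longer case analysis around the leftmost right endpoint.
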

\begin{proof}
We shall first show \ref{it:Murm}$\Leftrightarrow$\ref{it:noaltC4}. By Theorem~\ref{thm:urmaltcycle}, if $M$ is uniquely restricted, then there is no alternating cycle of any length with respect to $M$ in $G$. So it suffices to show that when $M$ is not uniquely restricted, there is an alternating cycle of length 4 with respect to $M$ in $G$. Let $<$ be an ordering of the vertices of $G$ according to a non-decreasing order of the left endpoints of their intervals in an interval representation of $G$. It is easy to see (and folklore) that the ordering $<$ has the property that for any $u,v,w\in V(G)$ such that $u<v<w$, $uw\in E(G)\Rightarrow uv\in E(G)$. Suppose that the matching $M$ is not uniquely restricted. Then, by Theorem~\ref{thm:urmaltcycle}, there exists some alternating cycle with respect to $M$ in $G$. Let $C=u_1u_2u_3\ldots u_ku_1$ be an alternating cycle with respect to $M$ of smallest possible length in $G$. Clearly, $k$ is even and $4\leq k\leq |V(G)|$. If $k=4$, then this cycle is an alternating cycle of length 4 with respect to $M$ and we are done. So let us suppose for the sake of contradiction that $k>4$. If $u_iu_j\in E(G)$ for some two vertices $u_i$ and $u_j$ that are not consecutive on the cycle $C$ (i.e., $u_iu_j$ is a ``chord'' of $C$) and $i$ and $j$ are of different parity, then one of the two cycles into which the chord $u_iu_j$ splits $C$ will be an alternating cycle with respect to $M$ of length smaller than $k$. As this is a contradiction to the assumption that $C$ is the alternating cycle with respect to $M$ of smallest possible length, we can assume that such chords are ``forbidden'', or in other words, there are no such chords for $C$. We shall also assume without loss of generality that $u_1=\max_<\{u_1,u_2,\ldots,u_k\}$ and that $u_k<u_2$ (otherwise we can relabel the vertices of the cycle $C$ to satisfy both these conditions). From the special property of the ordering $<$ and the fact that $u_ku_1\in E(G)$, it can be seen that if $u_k<u_3$, then $u_ku_3\in E(G)$. On the other hand, if $u_3<u_k$, we again have $u_ku_3\in E(G)$ as $u_3u_2\in E(G)$. As $u_ku_3$ is a forbidden chord, we have a contradiction.

The proof for \ref{it:noaltC4}$\Leftrightarrow$\ref{it:ee'urm} follows directly from Lemma~\ref{lem:altc4} and Theorem~\ref{thm:urmaltcycle}.
\end{proof}

If every interval in an interval representation is of unit length, then it is said to be a \emph{unit interval representation}. \emph{Unit interval graphs} are the graphs which have unit interval representations. Unit interval graphs are also called \emph{proper interval graphs} as these are exactly the graphs that have \emph{proper interval representations}---interval representations in which no interval strictly contains another interval.

\begin{definition}\label{def:proporder}
For a graph $G=(V,E)$, an ordering $<$ of $V(G)$ is said to be a \emph{proper vertex ordering} if for $u,v,w\in V(G)$  such that $u<v<w$, $uw\in E(G)\Rightarrow uv,vw\in E(G)$.
\end{definition}

Proper vertex orderings are called ``proper orderings'' in~\cite{GHL01}.
Note that in a proper interval representation of a graph $G$, the ordering of the vertices of $G$ according to the left endpoints of the intervals corresponding to them is the same as their ordering according to the right endpoints of their intervals, since no interval is contained in another. It is easy to see that this ordering is a proper vertex ordering of $G$. In fact, it is folklore that a graph is a unit interval graph if and only if it has a proper vertex ordering.

In~\cite{GHL01}, a linear-time algorithm to compute a maximum cardinality uniquely restricted matching in proper interval graphs is presented. But this algorithm fails to find out the exact solution in some cases (see Fig.~\ref{fig:counter}). In this section, we present a linear time algorithm for computing a maximum cardinality uniquely restricted matching in a given proper interval graph $G$.

\begin{figure}
\renewcommand{\vertexset}{(1,0,0),(2,1.5,0),(3,3,0),(4,4.5,0),(5,6,0),(6,7.5,0),(7,9,0)}
\renewcommand{\edgeset}{(1,2,,2),(2,3),(3,4),(4,5),(5,6),(6,7,,2),(1,3,,,.5),(1,4,,,1),(2,4,,,.5),(3,5,,2,1),(4,6,,,.5),(4,7,,,1),(5,7,,,.5)}
\renewcommand{\radius}{.1}
\begin{center}
\begin{tikzpicture}
\drawgraph
\node at (0,-.3) {1};
\node at (1.5,-.3) {2};
\node at (3,-.3) {3};
\node at (4.5,-.3) {4};
\node at (6,-.3) {5};
\node at (7.5,-.3) {6};
\node at (9,-.3) {7};
\end{tikzpicture}
\end{center}
\caption{A unit interval graph with vertices arranged according to a proper vertex ordering. The bold edges represent a maximum cardinality uniquely restricted matching. The algorithm from~\cite{GHL01}, given a proper vertex ordering as input, always produces a uniquely restricted matching consisting of edges between consecutive vertices. But every such matching in this graph, given this particular vertex ordering, has at most two edges.}\label{fig:counter}
\end{figure}
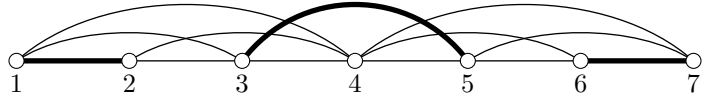

We can assume that $G$ is connected as if it is not, we can easily run the algorithm separately in each component to find a maximum cardinality uniquely restricted matching in each of them and then take a union of those to obtain a maximum cardinality uniquely restricted matching in $G$. We can also assume that a proper interval representation of the graph $G$, and therefore a proper vertex ordering of $G$, is available, as there are well known linear-time algorithms that can generate the proper interval representation of a graph, given its adjacency list~\cite{Corneilproper}.
\medskip

Let $<$ be a proper vertex ordering of a graph $G$. For an edge $e=uv\in E(G)$, we define $l_<(e)=\min_<\{u,v\}$ and $r_<(e)=\max_<\{u,v\}$. We shorten $l_<(e)$ and $r_<(e)$ to just $l(e)$ and $r(e)$ when the proper vertex ordering $<$ is clear from the context.

\begin{lemma}[\cite{GHL01}]\label{lem:unitintdisjurm}
Let $G$ be a proper interval graph with a proper vertex ordering $<$. If $\{e,e'\}$ is a uniquely restricted matching in $G$, then either $r(e)<l(e')$ or $r(e')<l(e)$.
\end{lemma}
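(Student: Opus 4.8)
The plan is to prove the contrapositive: assuming that $\{e,e'\}$ is a matching in $G$ but neither $r(e)<l(e')$ nor $r(e')<l(e)$ holds, I will exhibit an alternating cycle of length $4$ with respect to $\{e,e'\}$, which by Lemma~\ref{lem:altc4} (equivalently by Theorem~\ref{thm:urmintc4}) shows that $\{e,e'\}$ is not uniquely restricted. Write $e=uv$ with $u=l(e)$, $v=r(e)$, and $e'=u'v'$ with $u'=l(e')$, $v'=r(e')$. Since $\{e,e'\}$ is a matching, $\{u,v\}\cap\{u',v'\}=\emptyset$, so the four vertices are distinct; without loss of generality assume $l(e)<l(e')$, i.e. $u<u'$. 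The negation of the lemma's conclusion then reduces (given $u<u'$) to the single assumption $u'<v$, that is, the left endpoint of $e'$ in the ordering lies strictly between the two endpoints of $e$.

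The key step is to apply the proper vertex ordering property (Definition~\ref{def:proporder}) to the triple $u<u'<v$ together with the edge $uv\in E(G)$: this immediately yields $uu'\in E(G)$ and $u'v\in E(G)$. So $u'$ is adjacent to both endpoints of $e$. It remains to check the condition \ref{it:neigh} of Lemma~\ref{lem:altc4}, namely that each of $u,v$ has a neighbour in $\{u',v'\}$ and each of $u',v'$ has a neighbour in $\{u,v\}$. We already have that $u$ and $v$ are each adjacent to $u'$, and that $u'$ is adjacent to both $u$ and $v$; the only remaining requirement is that $v'$ has a neighbour in $\{u,v\}$. Since $u'v'\in E(G)$ and $u<u'$, and since $u'<v$, I will consider the position of $v'$ relative to $v$. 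If $v'<v$, then from $u<u'<v'$ (using $u<u'$ and $u'<v'$) — wait, this needs the triple ordered correctly: the relevant triple is among $u,u',v,v'$; I will order these four vertices and in each case use the proper vertex ordering property on an appropriate triple containing a known edge ($uv$, or $u'v'$) to force the missing adjacency. Concretely, if $v'\le v$ then $u<u'<v'\le v$... I should instead split on whether $v'<v$ or $v<v'$: in the first case apply the property to $u<u'<v'$ with edge... hmm, $uv'$ is not yet known. The cleaner route: from $u<u'<v$ and edge $uv$ we get $u'v\in E$; if additionally $v'<v$ then from $u'<v'<v$ and edge... still need an edge spanning. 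The robust argument is: $\{u,v,u',v'\}$ has $uv,u'v'\in E$ and $u'\in(u,v)$; since $u'v'\in E$, whichever of $v',v$ comes later, the triple consisting of the earlier of $\{u,u'\}$, the middle one, and the later endpoint — I will simply enumerate the at most three orderings of $v$ versus $\{u',v'\}$ and invoke Definition~\ref{def:proporder} each time.

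The main obstacle, and the only delicate point, is the case analysis on the relative order of $v$ and $v'$ to secure the adjacency of $v'$ with $\{u,v\}$; everything else is an immediate application of the proper vertex ordering property to the triple $u<u'<v$. Once condition \ref{it:neigh} of Lemma~\ref{lem:altc4} is verified, that lemma gives an alternating $4$-cycle with respect to $\{e,e'\}$, contradicting unique restrictedness via Theorem~\ref{thm:urmintc4}, and the proof is complete. (Alternatively, one can cite \cite{GHL01} directly, but the above self-contained derivation via Lemma~\ref{lem:altc4} is short enough to include.)
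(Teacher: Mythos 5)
Your overall strategy is exactly the paper's: negate the conclusion, reduce (after taking $u=l(e)<l(e')=u'$ without loss of generality) to the single condition $u'<v$ where $v=r(e)$, derive $uu',u'v\in E(G)$ from the triple $u<u'<v$ and the edge $uv$, and then verify condition \ref{it:neigh} of Lemma~\ref{lem:altc4} to get an alternating $4$-cycle contradicting Theorem~\ref{thm:urmaltcycle}. However, the one step you yourself flag as ``the only delicate point''---showing that $v'=r(e')$ has a neighbour in $\{u,v\}$---is never actually carried out, and your visible attempts at it stall because you keep reaching for triples whose spanning edge is not yet known (e.g.\ $u<u'<v'$ would need $uv'\in E(G)$, which is what you are trying to prove). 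As submitted, the proof is incomplete at precisely the point where its content lies.

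The repair is short and is what the paper does. If $v<v'$, apply Definition~\ref{def:proporder} to the triple $u'<v<v'$ with the known edge $u'v'$, obtaining $vv'\in E(G)$. If $v'<v$, apply it to the triple $u<v'<v$ with the known edge $uv$, obtaining $uv'$ and $v'v\in E(G)$. (Alternatively, in this second case you could use the triple $u'<v'<v$ with the spanning edge $u'v$ that you had already derived---you wrote ``still need an edge spanning'' without noticing you already had one.) Either way, in both cases every vertex of $\{u,v\}$ has a neighbour in $\{u',v'\}$ and vice versa, Lemma~\ref{lem:altc4} yields the alternating cycle, and the contradiction with Theorem~\ref{thm:urmaltcycle} finishes the argument. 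With that case analysis written out, your proof coincides with the paper's.
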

\begin{proof}
As $\{e,e'\}$ is a matching, we can be sure that $l(e),r(e),l(e'),r(e')$ are all distinct vertices. Suppose that $l(e')<r(e)$ and $l(e)<r(e')$. Let us assume without loss of generality that $l(e)<l(e')$. Since $l(e)r(e)\in E(G)$ and $l(e)<l(e')<r(e)$, by Definition~\ref{def:proporder}, we have $l(e)l(e'),r(e)l(e')\in E(G)$. Now if $r(e)<r(e')$, then as $l(e')<r(e)<r(e')$ and $l(e')r(e')\in E(G)$, we have by Definition~\ref{def:proporder} that $l(e')r(e),r(e')r(e)\in E(G)$. On the other hand, if $r(e')<r(e)$, then as $l(e)<r(e')<r(e)$ and $l(e)r(e)\in E(G)$, Definition~\ref{def:proporder} gives us $l(e)r(e'),r(e)r(e')\in E(G)$. Thus, in any case, each of $l(e),r(e)$ has a neighbour in $\{l(e'),r(e')\}$ and each of $l(e'),r(e')$ has a neighbour in $\{l(e),r(e)\}$. We can now use Lemma~\ref{lem:altc4} to conclude that there is an alternating cycle with respect to $\{e,e'\}$ in $G$. But as $\{e,e'\}$ is a uniquely restricted matching, this contradicts Theorem~\ref{thm:urmaltcycle}.

\end{proof}

\begin{lemma}\label{lem:unitinturmpair}
Let $G$ be a proper interval graph with a proper vertex ordering $<$ and let $e,e'\in E(G)$. Then $\{e,e'\}$ is a uniquely restricted matching in $G$ if and only if $l(e),l(e')$ are distinct and nonadjacent or $r(e),r(e')$ are distinct and nonadjacent.
\end{lemma}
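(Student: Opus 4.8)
The plan is to prove both directions of the biconditional by reducing everything to the characterization of uniquely restricted matchings via alternating cycles of length 4 (Theorem~\ref{thm:urmintc4}) together with Lemma~\ref{lem:altc4}. For the easier direction, suppose $\{e,e'\}$ is a uniquely restricted matching. By Lemma~\ref{lem:unitintdisjurm} we may assume without loss of generality that $r(e)<l(e')$. I would then argue that $l(e)$ and $l(e')$ are distinct (clear, since all four endpoints are distinct) and nonadjacent: if $l(e)l(e')\in E(G)$, then since $l(e)<r(e)<l(e')$ and $l(e)l(e')\in E(G)$, Definition~\ref{def:proporder} forces $r(e)l(e')\in E(G)$; combined with $l(e)l(e')\in E(G)$, this gives each of $l(e),r(e)$ a neighbour in $\{l(e'),r(e')\}$, and trivially $l(e')$ has the neighbour $l(e)$ in $\{l(e),r(e)\}$, and $r(e')$ has $l(e')$ as a neighbour---wait, we need $r(e')$ to have a neighbour in $\{l(e),r(e)\}$. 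So instead I would observe symmetrically that either $l(e)l(e')\notin E(G)$ or $r(e)r(e')\notin E(G)$ must hold, because if both edges were present then by the proper ordering argument (using $l(e)<r(e)<l(e')<r(e')$ and propagating adjacencies exactly as in the proof of Lemma~\ref{lem:unitintdisjurm}) we would get all the cross-adjacencies needed for Lemma~\ref{lem:altc4}\ref{it:neigh}, hence an alternating cycle, contradicting Theorem~\ref{thm:urmaltcycle}. Note $l(e),l(e')$ and $r(e),r(e')$ are automatically distinct since $\{e,e'\}$ is a matching, so ``distinct and nonadjacent'' reduces to ``nonadjacent'' here.

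For the converse, suppose (say) $l(e)$ and $l(e')$ are distinct and nonadjacent; I want to show $\{e,e'\}$ is a uniquely restricted matching. First I would check $\{e,e'\}$ is actually a matching: since $l(e)\neq l(e')$ and $l(e)\not\sim l(e')$, the edges $e$ and $e'$ are distinct, and I must rule out that they share an endpoint. If they shared an endpoint it would have to be $r(e)=r(e')$ or $r(e)=l(e')$ or $l(e)=r(e')$; in each such case one shows a contradiction with $l(e)\not\sim l(e')$ using the proper ordering (for instance, if $r(e)=r(e')$ and, say, $l(e)<l(e')<r(e)=r(e')$, then $l(e)\sim r(e)$ forces $l(e)\sim l(e')$). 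Then, assuming $\{e,e'\}$ is a matching, by Theorem~\ref{thm:urmintc4} it suffices to show there is no alternating $C_4$; by Lemma~\ref{lem:altc4}, it suffices to show it is \emph{not} the case that each of the four endpoints has a neighbour among the opposite edge's endpoints. But $l(e)$ is nonadjacent to both $l(e')$ and (we may arrange) is the smaller of the two left endpoints, so I need to rule out $l(e)\sim r(e')$ as well---if $l(e)<l(e')$ and $l(e)\sim r(e')$ then since $l(e')$ lies between $l(e)$ and $r(e')$ and $l(e)r(e')\in E(G)$, Definition~\ref{def:proporder} gives $l(e)\sim l(e')$, a contradiction. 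Hence $l(e)$ has \emph{no} neighbour in $\{l(e'),r(e')\}$, so condition \ref{it:neigh} of Lemma~\ref{lem:altc4} fails, and therefore there is no alternating cycle, so $\{e,e'\}$ is uniquely restricted. The case where instead $r(e),r(e')$ are distinct and nonadjacent is handled by a symmetric argument (reversing the vertex ordering turns a proper vertex ordering into a proper vertex ordering and swaps the roles of $l$ and $r$).

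The main obstacle is the bookkeeping around the ``distinct'' clause and the case analysis on the relative order of the four endpoints $l(e),r(e),l(e'),r(e')$: several sub-cases arise depending on which of these coincide or on their interleaving pattern, and in each one I must invoke the proper-ordering propagation of adjacencies in the right direction. Lemma~\ref{lem:unitintdisjurm} does most of the heavy lifting for one direction by already establishing disjointness of the intervals $[l(e),r(e)]$ and $[l(e'),r(e')]$ in the ordering, which collapses the case analysis to essentially a single interleaving pattern; the converse direction is where the small ``is it a matching'' verification and the extra adjacency $l(e)\not\sim r(e')$ deduction need care. None of this is deep, but it must be written out carefully to be convincing.
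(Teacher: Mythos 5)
Your proposal is correct and follows essentially the same route as the paper: the forward direction observes that if both $l(e)l(e')$ and $r(e)r(e')$ were edges there would be an alternating cycle of length $4$ (the paper simply exhibits the cycle $l(e)l(e')r(e')r(e)l(e)$ directly, so neither Lemma~\ref{lem:unitintdisjurm} nor any adjacency propagation is actually needed there), and the backward direction shows via Definition~\ref{def:proporder} that the extremal endpoint has no neighbour on the other edge, so condition \emph{(iii)} of Lemma~\ref{lem:altc4} fails. Your explicit verification that $\{e,e'\}$ is a matching in the converse direction is a point the paper leaves implicit, and your ordering-reversal symmetry for the $r(e),r(e')$ case replaces the paper's direct mirror-image computation; both are fine.
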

\begin{proof}
Suppose that $\{e,e'\}$ is a uniquely restricted matching. Then clearly $l(e),r(e),l(e'),r(e')$ are all distinct vertices. If $l(e)l(e')\in E(G)$ and $r(e)r(e')\in E(G)$, then we have the alternating cycle $l(e)l(e')r(e')r(e)$ $l(e)$ with respect to $\{e,e'\}$ in $G$, which is a contradiction to Theorem~\ref{thm:urmaltcycle}.

Let us now prove the other direction of the claim. Suppose that $l(e)$ and $l(e')$ are distinct and nonadjacent. We shall assume without loss of generality that $l(e)<l(e')$. As $l(e)l(e')\notin E(G)$ and $l(e)<l(e')<r(e')$, we have from Definition~\ref{def:proporder} that $l(e)$ has no neighbour in $\{l(e'),r(e')\}$. By Lemma~\ref{lem:altc4} and Theorem~\ref{thm:urmaltcycle}, this implies that $\{e,e'\}$ is a uniquely restricted matching in $G$. Now let us suppose that $r(e)$ and $r(e')$ are distinct and nonadjacent. Again, we shall assume without loss of generality that $r(e)<r(e')$. As $l(e)<r(e)<r(e')$ and $r(e)r(e')\notin E(G)$, we have from Definition~\ref{def:proporder} that $r(e')$ has no neighbour in $\{l(e),r(e)\}$. Lemma~\ref{lem:altc4} and Theorem~\ref{thm:urmaltcycle} can now be used to infer that $\{e,e'\}$ is a uniquely restricted matching in $G$.
\end{proof}

\begin{lemma}\label{lem:e1e2e3}
Let $G$ be a proper interval graph with a proper vertex ordering $<$. 
Let $e_1,e_2,e_3$ be distinct edges of $G$ such that $l(e_1)\leq l(e_2)\leq l(e_3)$ and $r(e_1)\leq r(e_2)\leq r(e_3)$. If $\{e_1,e_3\}$ is not a uniquely restricted matching in $G$, then neither $\{e_1,e_2\}$ nor $\{e_2,e_3\}$ is a uniquely restricted matching in $G$.
\end{lemma}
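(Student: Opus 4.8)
The plan is to reduce everything to the characterization of uniquely restricted pairs in Lemma~\ref{lem:unitinturmpair}, used in contrapositive form. Since $e_1,e_2,e_3$ are distinct, Lemma~\ref{lem:unitinturmpair} applies to each of the pairs $\{e_1,e_2\}$, $\{e_2,e_3\}$ and $\{e_1,e_3\}$; reading it contrapositively, a pair $\{e,e'\}$ of distinct edges is \emph{not} a uniquely restricted matching exactly when $l(e)$ and $l(e')$ are equal or adjacent \emph{and} $r(e)$ and $r(e')$ are equal or adjacent. So the hypothesis becomes ``$l(e_1),l(e_3)$ equal or adjacent and $r(e_1),r(e_3)$ equal or adjacent,'' and the two desired conclusions become the same statements with the index pair $(1,3)$ replaced by $(1,2)$ and by $(2,3)$.

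The key step I would isolate is an elementary monotonicity fact about a proper vertex ordering: if $a\le b\le c$ are vertices of $G$ and $a,c$ are equal or adjacent, then $a,b$ are equal or adjacent and $b,c$ are equal or adjacent. Indeed, if $a=c$ then $a=b=c$; if $a=b$ or $b=c$ the conclusion is immediate; and otherwise $a<b<c$ with $ac\in E(G)$, so Definition~\ref{def:proporder} directly gives $ab\in E(G)$ and $bc\in E(G)$.

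Assembling the pieces is then routine. Applying the monotonicity fact with $(a,b,c)=(l(e_1),l(e_2),l(e_3))$---valid since $l(e_1)\le l(e_2)\le l(e_3)$---turns ``$l(e_1),l(e_3)$ equal or adjacent'' into ``$l(e_1),l(e_2)$ equal or adjacent'' together with ``$l(e_2),l(e_3)$ equal or adjacent.'' Applying it again with $(a,b,c)=(r(e_1),r(e_2),r(e_3))$ does the same for the right endpoints. Pairing the $l$- and $r$-statements for $\{e_1,e_2\}$ and feeding them back into the contrapositive of Lemma~\ref{lem:unitinturmpair} shows $\{e_1,e_2\}$ is not a uniquely restricted matching, and the identical argument handles $\{e_2,e_3\}$.

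I do not expect a real obstacle here: every step is either a substitution into Lemma~\ref{lem:unitinturmpair} or an application of Definition~\ref{def:proporder}. The only place to be slightly careful is the degenerate cases in which some of $l(e_1),l(e_2),l(e_3)$ (or of $r(e_1),r(e_2),r(e_3)$) coincide; these are precisely why the characterization is phrased with ``equal or adjacent'' rather than ``distinct and adjacent,'' and stating the monotonicity fact in the same language is what lets those cases be dispatched without fuss.
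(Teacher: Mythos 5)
Your proposal is correct and follows essentially the same route as the paper: both arguments reduce the claim to the contrapositive of Lemma~\ref{lem:unitinturmpair} and then transfer adjacency from the pair $(l(e_1),l(e_3))$ (resp.\ $(r(e_1),r(e_3))$) to the intermediate endpoints via Definition~\ref{def:proporder}. The only difference is cosmetic: the paper first dispatches the case where the target pair is not even a matching and then works with strict inequalities and genuine edges, whereas you absorb all degeneracies into the uniform ``equal or adjacent'' phrasing, which is a slightly cleaner packaging of the same argument.
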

\begin{proof}
We shall show that $\{e_2,e_3\}$ is not a uniquely restricted matching in $G$. The proof for the case of $\{e_1,e_2\}$ is similar and is left to the reader. If $\{e_2,e_3\}$ is not even a matching in $G$, then we are immediately done. So we shall assume otherwise---i.e., $l(e_2),r(e_2),l(e_3),r(e_3)$ are all distinct vertices of $G$. This means that $l(e_1)\leq l(e_2)<l(e_3)$ and $r(e_1)\leq r(e_2)<r(e_3)$. By Lemma~\ref{lem:unitinturmpair}, we now have $l(e_1)l(e_3),r(e_1)r(e_3)\in E(G)$. This implies by Definition~\ref{def:proporder} that $l(e_2)l(e_3),r(e_2)r(e_3)\in E(G)$. Lemma~\ref{lem:unitinturmpair} now implies that $\{e_2,e_3\}$ is not a uniquely restricted matching.
\end{proof}

From Lemma~\ref{lem:unitintdisjurm}, it follows that the edges of any uniquely restricted matching $M$ in a proper interval graph $G$ with a proper vertex ordering $<$ can be labelled as $e_1,e_2,\ldots,e_{|M|}$ such that $l(e_1)<r(e_1)<l(e_2)<r(e_2)<\cdots<l(e_{|M|})<r(e_{|M|})$.
We say that the uniquely restricted matching $M$ \emph{starts with} the edge $e_1$.

We now give a stronger version of Theorem~\ref{thm:urmintc4} for the case of proper interval graphs.

\begin{theorem}\label{thm:consecedges}
Let $G$ be a proper interval graph and $<$ a proper vertex ordering of it.
Let $M=\{e_1, e_2,\ldots,e_t\}$ be a matching in $G$ where $l(e_1)<l(e_2)<\cdots<l(e_t)$. The matching $M$ is uniquely restricted in $G$ if and only if $\{e_i,e_{i+1}\}$ is a uniquely restricted matching in $G$, for each $i \in \{1,2,\ldots,t-1\}$.
\end{theorem}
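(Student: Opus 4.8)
The plan is to deduce both directions from Theorem~\ref{thm:urmintc4}, which for interval graphs reduces unique restrictedness of a matching to unique restrictedness of each pair of its edges. The forward direction is then immediate: if $M$ is uniquely restricted, then by the equivalence \ref{it:Murm}$\Leftrightarrow$\ref{it:ee'urm} of Theorem~\ref{thm:urmintc4}, every pair of edges of $M$ is a uniquely restricted matching, in particular each $\{e_i,e_{i+1}\}$.

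For the converse, assume $\{e_i,e_{i+1}\}$ is uniquely restricted for all $i\in\{1,\dots,t-1\}$; the goal becomes to show that \emph{every} pair $\{e_i,e_j\}$ with $i<j$ is uniquely restricted, after which Theorem~\ref{thm:urmintc4} finishes the job. First I would fix the geometry of the endpoints: applying Lemma~\ref{lem:unitintdisjurm} to each consecutive pair, and using $l(e_i)<l(e_{i+1})$ together with $l(e)<r(e)$ for every edge $e$, the only consistent possibility is $r(e_i)<l(e_{i+1})$, so that in fact $l(e_1)<r(e_1)<l(e_2)<r(e_2)<\cdots<l(e_t)<r(e_t)$; in particular both endpoint sequences $(l(e_i))_i$ and $(r(e_i))_i$ are strictly increasing. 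Then, for $i<j$: if $j=i+1$ we are done by hypothesis, and if $j\ge i+2$ I would apply Lemma~\ref{lem:e1e2e3} to the triple $(e_i,e_{i+1},e_j)$---its ordering hypotheses $l(e_i)\le l(e_{i+1})\le l(e_j)$ and $r(e_i)\le r(e_{i+1})\le r(e_j)$ hold by the previous step---in contrapositive form: were $\{e_i,e_j\}$ not uniquely restricted, Lemma~\ref{lem:e1e2e3} would force $\{e_i,e_{i+1}\}$ not to be uniquely restricted, contradicting the hypothesis. Hence every pair of edges of $M$ is a uniquely restricted matching, and Theorem~\ref{thm:urmintc4} (\ref{it:ee'urm}$\Rightarrow$\ref{it:Murm}) yields that $M$ is uniquely restricted.

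I do not anticipate a genuine obstacle: all the work has already been front-loaded into Lemmas~\ref{lem:unitintdisjurm} and~\ref{lem:e1e2e3} and Theorem~\ref{thm:urmintc4}, and notably no induction on $t$ is needed once the pairwise statement is established for all $i<j$. The one point demanding care is the bookkeeping that makes Lemma~\ref{lem:e1e2e3} applicable---namely checking that both the left-endpoint order and the right-endpoint order of $e_1,\dots,e_t$ coincide with the index order, so that $e_i,e_{i+1},e_j$ can legitimately play the roles of $e_1,e_2,e_3$---and correctly reading off the contrapositive of Lemma~\ref{lem:e1e2e3}, i.e., that unique restrictedness of the ``short'' pair $\{e_i,e_{i+1}\}$ propagates to the ``long'' pair $\{e_i,e_j\}$ and not the other way around.
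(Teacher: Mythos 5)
Your proposal is correct and uses essentially the same ingredients as the paper's proof: Lemma~\ref{lem:unitintdisjurm} to pin down the interleaved endpoint ordering $l(e_1)<r(e_1)<\cdots<l(e_t)<r(e_t)$, and Lemma~\ref{lem:e1e2e3} to propagate unique restrictedness from consecutive pairs to arbitrary pairs before invoking Theorem~\ref{thm:urmintc4}. The only difference is organizational---the paper argues by contradiction via a non-uniquely-restricted pair $\{e_p,e_q\}$ minimizing $q-p$ and applies Lemma~\ref{lem:e1e2e3} to $(e_p,e_{q-1},e_q)$, whereas you argue directly with the triple $(e_i,e_{i+1},e_j)$---which does not change the substance of the argument.
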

\begin{proof}
As every subset of a uniquely restricted matching is also a uniquely restricted matching, to prove the theorem, it is sufficient to show that whenever $M$ is not a uniquely restricted matching, $\{e_i,e_{i+1}\}$ is not a uniquely restricted matching in $G$ for some $i\in\{1,2,\ldots,t-1\}$. Suppose that $M$ is not a uniquely restricted matching. By Theorem~\ref{thm:urmintc4}, there exists some subset $\{e_p,e_q\}$ of $M$, where $1\leq p<q\leq t$, such that $\{e_p,e_q\}$ is not a uniquely restricted matching.
We choose $p$ and $q$ such that for any $p',q'$ with $1\leq p'<q'\leq t$ and $q'-p'<q-p$, $\{e_{p'},e_{q'}\}$ is a uniquely restricted matching in $G$. Suppose that $q>p+1$. Observing that $l(e_p)<l(e_{q-1})<l(e_q)$, we can deduce that if $r(e_q)<r(e_{q-1})$ or $r(e_{q-1})<r(e_p)$, then by Lemma~\ref{lem:unitintdisjurm}, either $\{e_{q-1},e_q\}$ or $\{e_p,e_{q-1}\}$ respectively is not a uniquely restricted matching, in each case contradicting our choice of $p$ and $q$. Therefore, we get $r(e_p)<r(e_{q-1})<r(e_q)$. Now we can apply Lemma~\ref{lem:e1e2e3} to conclude that $\{e_{q-1},e_q\}$ is not a uniquely restricted matching, again contradicting our choice of $p$ and $q$. Thus we infer that $q=p+1$. For $i=p$, we now have that $\{e_i,e_{i+1}\}$ is not a uniquely restricted matching, thereby completing the proof.
\end{proof}

\begin{corollary}\label{cor:augment}
Let $G$ be a proper interval graph with a proper vertex ordering $<$ and let $M$ be a uniquely restricted matching in $G$ starting with an edge $e'\in E(G)$. Let $e\in E(G)$ be such that $r(e)<l(e')$ and $\{e,e'\}$ is a uniquely restricted matching in $G$. Then $\{e\}\cup M$ is a uniquely restricted matching in $G$ starting with $e$.
\end{corollary}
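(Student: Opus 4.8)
The plan is to reduce everything to Theorem~\ref{thm:consecedges}. First I would record that since $M$ starts with $e'$, by Lemma~\ref{lem:unitintdisjurm} and the definition of ``starts with'', its edges can be labelled $e'=e_1,e_2,\ldots,e_t$ (where $t=|M|$) so that $l(e_1)<r(e_1)<l(e_2)<r(e_2)<\cdots<l(e_t)<r(e_t)$. Combined with the hypothesis $r(e)<l(e')=l(e_1)$, this shows that both endpoints of $e$ lie strictly to the left, in the ordering $<$, of every vertex covered by $M$; in particular $e$ shares no vertex with any $e_i$, so $\{e\}\cup M$ is a matching, and moreover $l(e)<l(e_1)<l(e_2)<\cdots<l(e_t)$.

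Next I would apply Theorem~\ref{thm:consecedges} to the matching $\{e\}\cup M$ with this ordering of left endpoints. The theorem says $\{e\}\cup M$ is uniquely restricted if and only if every pair of consecutive edges (in the left-endpoint order) forms a uniquely restricted matching. The consecutive pairs here are $\{e,e_1\}$ and $\{e_i,e_{i+1}\}$ for $i\in\{1,\ldots,t-1\}$. The first pair $\{e,e_1\}=\{e,e'\}$ is uniquely restricted by hypothesis. Each pair $\{e_i,e_{i+1}\}$ is a subset of the uniquely restricted matching $M$ and is therefore itself uniquely restricted (equivalently, this is exactly the consecutive-pair condition for $M$ guaranteed by Theorem~\ref{thm:consecedges} applied to $M$). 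Hence all consecutive pairs are uniquely restricted, and so $\{e\}\cup M$ is uniquely restricted.

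Finally, since $l(e)<r(e)<l(e_1)<r(e_1)<\cdots<l(e_t)<r(e_t)$, the edge $e$ is the one whose left endpoint is smallest among the edges of $\{e\}\cup M$, so $\{e\}\cup M$ starts with $e$, as required. I do not anticipate any real obstacle: the statement is essentially a direct corollary, and the only points needing care are verifying that $\{e\}\cup M$ is genuinely a matching and that the strictly increasing left-endpoint order required by Theorem~\ref{thm:consecedges} holds — both of which are immediate from $r(e)<l(e')$ together with the ordering of endpoints inside $M$.
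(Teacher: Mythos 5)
Your proposal is correct and follows exactly the route the paper intends: the paper's own proof is the one-line remark that the corollary ``follows directly from Theorem~\ref{thm:consecedges}'', and your write-up simply supplies the details of that reduction (verifying that $\{e\}\cup M$ is a matching with strictly increasing left endpoints, that the new consecutive pair $\{e,e'\}$ is uniquely restricted by hypothesis, and that the remaining consecutive pairs are uniquely restricted as subsets of $M$). Nothing further is needed.
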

\begin{proof}
The proof follows directly from Theorem~\ref{thm:consecedges}.
\end{proof}
\medskip

From here onwards, we assume that $G$ is a connected proper interval graph with a proper vertex ordering $<$. Let $V(G)=\{v_1,v_2,\ldots,v_n\}$ where $v_1<v_2<\cdots<v_n$.

\begin{observation}\label{obs:consecedge}
For $1\leq i<n$, $v_iv_{i+1}\in E(G)$.
\end{observation}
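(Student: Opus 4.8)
The plan is to use connectedness of $G$ together with the defining property of a proper vertex ordering (Definition~\ref{def:proporder}) to show that the ``gap'' between $v_i$ and $v_{i+1}$ cannot be bridged by any longer edge without forcing the short edge $v_iv_{i+1}$ to exist as well. Concretely, I would argue by contradiction: suppose $v_iv_{i+1}\notin E(G)$ for some $1\le i<n$, and partition $V(G)$ into $A=\{v_1,\ldots,v_i\}$ and $B=\{v_{i+1},\ldots,v_n\}$, both nonempty. Since $G$ is connected, there must be at least one edge $v_jv_k\in E(G)$ with $v_j\in A$ and $v_k\in B$, i.e.\ $j\le i<i+1\le k$; and since $v_iv_{i+1}\notin E(G)$, we cannot have $(j,k)=(i,i+1)$, so either $j<i$ or $k>i+1$.

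Next I would reduce this edge to $v_iv_{i+1}$ in (at most) two steps using Definition~\ref{def:proporder}. If $j<i$, then from $v_j<v_i<v_k$ and $v_jv_k\in E(G)$ we get $v_iv_k\in E(G)$; replacing $v_j$ by $v_i$, we may henceforth assume the crossing edge is $v_iv_k$ with $k\ge i+1$. If now $k>i+1$, then from $v_i<v_{i+1}<v_k$ and $v_iv_k\in E(G)$ we get $v_iv_{i+1}\in E(G)$. Either way we reach $v_iv_{i+1}\in E(G)$, contradicting our assumption. (The symmetric reduction works if instead $k>i+1$ is the feature we start from: from $v_j<v_{i+1}<v_k$ one first obtains $v_jv_{i+1}\in E(G)$, and then if $j<i$, from $v_j<v_i<v_{i+1}$ one obtains $v_iv_{i+1}\in E(G)$.)

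I do not anticipate any real obstacle here; the only point requiring a little care is to make sure the case analysis on whether $j<i$ or $k>i+1$ is exhaustive and that each application of Definition~\ref{def:proporder} has its three vertices in strictly increasing order, which is guaranteed because $v_1<v_2<\cdots<v_n$ and all of $j,i,i+1,k$ are distinct indices in the relevant subcases. This completes the proof.
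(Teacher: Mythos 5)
Your proposal is correct and follows essentially the same argument as the paper: both assume $v_iv_{i+1}\notin E(G)$, use connectedness to obtain an edge crossing the cut between $\{v_1,\ldots,v_i\}$ and $\{v_{i+1},\ldots,v_n\}$, and then apply Definition~\ref{def:proporder} twice to contract that edge to $v_iv_{i+1}$, yielding a contradiction. The only difference is cosmetic (the paper first moves the right endpoint to $v_{i+1}$ and then the left endpoint to $v_i$, while you do it in the opposite order).
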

\begin{proof}
Suppose that for some $i\in\{1,2,\ldots,n-1\}$, $v_iv_{i+1}\notin E(G)$. If for some pair of vertices $v_p,v_q$, where $1\leq p\leq i<i+1\leq q\leq n$, we have $v_pv_q\in E(G)$, then from Definition~\ref{def:proporder}, $v_pv_q\in E(G)\Rightarrow v_pv_{i+1}\in E(G)\Rightarrow v_iv_{i+1}\in E(G)$, which is a contradiction. Therefore, there does not exist any edge $v_pv_q\in E(G)$ such that $p\leq i$ and $q\geq i+1$. But this would mean that there is no edge in $G$ between a vertex in $\{v_1,v_2,\ldots,v_i\}$ and a vertex in $\{v_{i+1},v_{i+2},\ldots,v_n\}$, implying that $G$ is disconnected. This contradicts our assumption that $G$ is a connected graph.
\end{proof}
\medskip

For $u\in V(G)$, we define $\lambda(u)=\min_<\{v\in N(u)\cup\{u\}\}$ and $\rho(u)=\max_<\{v\in N(u)\cup\{u\}\}$.
We now associate a pair of edges with each edge of $G$. The following observation is an easy consequence of the property of proper vertex orderings given in Definition~\ref{def:proporder}.

\begin{observation}\label{obs:lambdarho}
For any vertex $u\in V(G)$, $N(u)=\{x\in V(G)\colon \lambda(u)\leq x\leq\rho(u)\}$.
\end{observation}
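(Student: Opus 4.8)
The plan is to establish the two inclusions separately. Observe first that the set on the right-hand side automatically contains $u$, since $\lambda(u)\leq u\leq\rho(u)$ directly from the definitions of $\lambda$ and $\rho$; thus the substance of the claim is that the vertices lying between $\lambda(u)$ and $\rho(u)$ in the ordering $<$ are precisely the vertices of the closed neighbourhood $N(u)\cup\{u\}$.

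The inclusion $N(u)\cup\{u\}\subseteq\{x\in V(G)\colon\lambda(u)\leq x\leq\rho(u)\}$ is immediate: every element of $N(u)\cup\{u\}$ is at least $\lambda(u)$ and at most $\rho(u)$, because $\lambda(u)$ and $\rho(u)$ are by definition the $<$-minimum and $<$-maximum of $N(u)\cup\{u\}$. For the reverse inclusion, I would take any $x\in V(G)$ with $\lambda(u)\leq x\leq\rho(u)$. If $x\in\{u,\lambda(u),\rho(u)\}$ then $x\in N(u)\cup\{u\}$ trivially, so suppose otherwise; then either $\lambda(u)<x<u$ or $u<x<\rho(u)$. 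In the first case $\lambda(u)\neq u$, so $\lambda(u)$ is an actual neighbour of $u$, i.e.\ $\lambda(u)u\in E(G)$; applying Definition~\ref{def:proporder} to the triple $\lambda(u)<x<u$ then gives $xu\in E(G)$, so $x\in N(u)$. The case $u<x<\rho(u)$ is handled symmetrically, using $u\rho(u)\in E(G)$ and Definition~\ref{def:proporder} applied to $u<x<\rho(u)$ to conclude $ux\in E(G)$.

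I expect no genuine obstacle here: the whole argument is a single invocation of the proper-vertex-ordering property in each of two symmetric cases. The only things needing a moment's attention are the boundary values $x\in\{u,\lambda(u),\rho(u)\}$, which must be peeled off first, and the small point that before applying Definition~\ref{def:proporder} one should note that $\lambda(u)$ (respectively $\rho(u)$) is genuinely distinct from $u$ whenever some vertex lies strictly between it and $u$, so that it is indeed a neighbour of $u$.
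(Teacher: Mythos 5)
Your proof is correct and is exactly the argument the paper intends: the paper offers no written proof at all, merely asserting the observation is ``an easy consequence'' of Definition~\ref{def:proporder}, and your two-inclusion argument (definitional containment one way, a single application of the proper-vertex-ordering property to the triples $\lambda(u)<x<u$ and $u<x<\rho(u)$ the other way) is the intended filling-in. You also rightly note the small wrinkle that the right-hand side contains $u$ itself, so the statement is really about $N(u)\cup\{u\}$; handling that boundary case explicitly, as you do, is the only care the argument requires.
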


Let $e\in E(G)$. The \emph{left successor} of $e$, denoted by $\sigma_l(e)$, is defined to be the edge $v_{i+1}v_{i+2}$, where $v_i=\rho(l(e))$. It is clear from Observation~\ref{obs:consecedge} that $\sigma_l(e)$ exists in $G$ if and only if $\rho(l(e))<v_{n-1}$.
The \emph{right successor} of $e$, denoted by $\sigma_r(e)$, is defined to be the edge $\lambda(v_{i+1})v_{i+1}$ where $v_i=\rho(r(e))$. Note that for any vertex $u\in V(G)$ for which $\lambda(u)\neq u$, we have $\lambda(u)u\in E(G)$. Moreover, by Observation~\ref{obs:consecedge}, it follows that for every vertex $u\in V(G)\setminus\{v_1\}$, $\lambda(u)\neq u$. Therefore, it can be concluded that $\sigma_r(e)$ exists in $G$ if and only if $\rho(r(e))<v_n$.

\begin{observation}\label{obs:esuccurm}
Let $e\in E(G)$.
\begin{myenumeratea}
\item\label{it:elsurm} If $\sigma_l(e)$ exists, then $r(e)<l(\sigma_l(e))$ and $\{e,\sigma_l(e)\}$ is a uniquely restricted matching in $G$.
\item\label{it:ersurm} If $\sigma_r(e)$ exists, then $r(e)<l(\sigma_r(e))$ and $\{e,\sigma_r(e)\}$ is a uniquely restricted matching in $G$.
\end{myenumeratea}
\end{observation}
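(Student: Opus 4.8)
The plan is to treat the two parts in parallel. In each part I first pin down, in terms of the index $i$ (with $v_i=\rho(l(e))$ in part~\ref{it:elsurm} and $v_i=\rho(r(e))$ in part~\ref{it:ersurm}; the hypothesis that the successor exists guarantees the needed indices are within range), exactly which vertices of $V(G)$ are $l(\sigma_\bullet(e))$ and $r(\sigma_\bullet(e))$. Then I prove the ordering inequality $r(e)<l(\sigma_\bullet(e))$. Observe that this single inequality already forces $l(e)<r(e)<l(\sigma_\bullet(e))<r(\sigma_\bullet(e))$ to be four distinct vertices, so $\{e,\sigma_\bullet(e)\}$ is automatically a matching; it then only remains to rule out the forbidden $C_4$, which I do by verifying the appropriate alternative of Lemma~\ref{lem:unitinturmpair}.

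For part~\ref{it:elsurm}, set $v_i=\rho(l(e))$, so by definition $\sigma_l(e)=v_{i+1}v_{i+2}$, whence $l(\sigma_l(e))=v_{i+1}$ and $r(\sigma_l(e))=v_{i+2}$. Since $e=l(e)r(e)\in E(G)$, the vertex $r(e)$ is a neighbour of $l(e)$, so $r(e)\le\rho(l(e))=v_i<v_{i+1}=l(\sigma_l(e))$, giving the required inequality. For unique restrictedness I would check the ``$l$-side'' alternative of Lemma~\ref{lem:unitinturmpair}: $l(e)$ and $v_{i+1}$ are distinct (as $l(e)\le v_i<v_{i+1}$) and nonadjacent, the latter because any vertex strictly larger than $\rho(l(e))=v_i$ in the ordering is, by the very definition of $\rho$ (cf.\ Observation~\ref{obs:lambdarho}), neither equal to nor adjacent to $l(e)$. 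Lemma~\ref{lem:unitinturmpair} then yields that $\{e,\sigma_l(e)\}$ is uniquely restricted.

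For part~\ref{it:ersurm}, set $v_i=\rho(r(e))$, so $\sigma_r(e)=\lambda(v_{i+1})v_{i+1}$; since $G$ is connected and $v_{i+1}\neq v_1$ we have $\lambda(v_{i+1})<v_{i+1}$ with $\lambda(v_{i+1})v_{i+1}\in E(G)$, hence $l(\sigma_r(e))=\lambda(v_{i+1})$ and $r(\sigma_r(e))=v_{i+1}$. The unique-restrictedness check is again routine: $r(e)$ and $v_{i+1}$ are distinct and nonadjacent because $\rho(r(e))=v_i<v_{i+1}$, so Lemma~\ref{lem:unitinturmpair} applies via its ``$r$-side'' alternative. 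The step I expect to be the main obstacle is the ordering inequality $r(e)<l(\sigma_r(e))=\lambda(v_{i+1})$, which---unlike in part~\ref{it:elsurm}---is not immediate. I would argue by contradiction: if $\lambda(v_{i+1})\le r(e)$, then since $r(e)\le\rho(r(e))=v_i$ we get $\lambda(v_{i+1})\le v_i<v_{i+1}$ together with $\lambda(v_{i+1})v_{i+1}\in E(G)$; feeding the chain $\lambda(v_{i+1})\le r(e)<v_{i+1}$ into Definition~\ref{def:proporder} (or reading off the edge directly when $\lambda(v_{i+1})=r(e)$) gives $r(e)v_{i+1}\in E(G)$, contradicting $\rho(r(e))=v_i<v_{i+1}$. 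Hence $r(e)<\lambda(v_{i+1})$, which as noted also certifies that $\{e,\sigma_r(e)\}$ is a matching, and the argument is complete.
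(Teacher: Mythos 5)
Your proof is correct and takes essentially the same route as the paper: both parts rest on Lemma~\ref{lem:unitinturmpair}, invoked with exactly the same nonadjacency witnesses ($l(e)$ with $l(\sigma_l(e))$ in part (a), and $r(e)$ with $r(\sigma_r(e))$ in part (b)). The only small divergence is that you establish $r(e)<l(\sigma_r(e))$ directly from the definition of $\rho$ and the proper vertex ordering, whereas the paper deduces it from Lemma~\ref{lem:unitintdisjurm} after unique restrictedness is in hand; both arguments are valid.
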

\begin{proof}
We shall first prove~\ref{it:elsurm}. As $l(e)r(e)\in E(G)$, we know that $\rho(l(e))\geq r(e)$. By definition of $\sigma_l(e)$, we have $r(e)\leq\rho(l(e))<l(\sigma_l(e))$. Therefore, we have $l(e)l(\sigma_l(e))\notin E(G)$. This implies by Lemma~\ref{lem:unitinturmpair} that $\{e,\sigma_l(e)\}$ is a uniquely restricted matching in $G$.

Now let us prove~\ref{it:ersurm}. It is clear from the definition of $\sigma_r(e)$ that $\rho(r(e))<r(\sigma_r(e))$. Therefore, we have $r(e)<r(\sigma_r(e))$ and $r(e)r(\sigma_r(e))\notin E(G)$. From Lemma~\ref{lem:unitinturmpair}, we can now conclude that $\{e,\sigma_r(e)\}$ is a uniquely restricted matching in $G$. By Lemma~\ref{lem:unitintdisjurm}, we also get that $r(e)<l(\sigma_r(e))$.
\end{proof}

Now for each edge $e\in E(G)$ we define a set $U(e)$ of edges as follows.
$$U(e)=\left\{\begin{array}{cl}\{e\}&\mbox{if neither }\sigma_l(e)\mbox{ nor }\sigma_r(e)\mbox{ exist}\\
\{e\}\cup U(\sigma_l(e))&\mbox{if }\sigma_r(e)\mbox{ does not exist, or if both exist and }|U(\sigma_l(e))|\geq |U(\sigma_r(e))|\\
\{e\}\cup U(\sigma_r(e))&\mbox{if }\sigma_l(e)\mbox{ does not exist, or if both exist and }|U(\sigma_l(e))|<|U(\sigma_r(e))|
\end{array}\right.$$

From Observation~\ref{obs:esuccurm}, we know that $r(e)<l(\sigma_l(e))$ and $r(e)<l(\sigma_r(e))$. This means that $U(e)$ is well-defined. The next two lemmas will show that $U(e)$ is always a uniquely restricted matching starting with $e$ of maximum possible cardinality (among the uniquely restricted matchings starting with $e$ in $G$).

\begin{lemma}\label{lem:Uisurm}
For any edge $e\in E(G)$, $U(e)$ is a uniquely restricted matching starting with $e$.
\end{lemma}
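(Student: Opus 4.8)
The statement to prove is that $U(e)$ is a uniquely restricted matching starting with $e$. The natural approach is induction on the recursive structure used to define $U(e)$ --- equivalently, induction on $|U(e)|$, or on the length of the recursion chain, which terminates because by Observation~\ref{obs:esuccurm} each successor edge lies strictly to the right of $e$ (in the sense that $r(e) < l(\sigma_l(e))$ and $r(e) < l(\sigma_r(e))$), so the process cannot go on forever in a finite graph.

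\textbf{Base case.} When neither $\sigma_l(e)$ nor $\sigma_r(e)$ exists, $U(e) = \{e\}$, which is trivially a uniquely restricted matching, and it starts with $e$ since it is its only edge.

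\textbf{Inductive step.} Suppose $U(e) = \{e\} \cup U(f)$ where $f$ is either $\sigma_l(e)$ or $\sigma_r(e)$, whichever the definition selects. By the induction hypothesis, $U(f)$ is a uniquely restricted matching starting with $f$. By Observation~\ref{obs:esuccurm} (part~\ref{it:elsurm} or~\ref{it:ersurm} as appropriate), $r(e) < l(f)$ and $\{e, f\}$ is a uniquely restricted matching in $G$. Now I would invoke Corollary~\ref{cor:augment} directly, with the roles $e' := f$ and $e := e$: since $U(f)$ is a uniquely restricted matching starting with $f$, $r(e) < l(f)$, and $\{e,f\}$ is a uniquely restricted matching, Corollary~\ref{cor:augment} tells us that $\{e\} \cup U(f) = U(e)$ is a uniquely restricted matching starting with $e$. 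That completes the induction.

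\textbf{Main obstacle.} There is really no hard part here; the lemma is essentially an assembly of Observation~\ref{obs:esuccurm} and Corollary~\ref{cor:augment}. The only point requiring a moment's care is the well-definedness/termination of the recursion --- that $U(e)$ is a finite object so that induction is legitimate --- but this is already guaranteed by the strict-rightward-progress remark immediately preceding the lemma, so it can simply be cited. (The genuinely substantive content, namely that $U(e)$ has \emph{maximum} cardinality among uniquely restricted matchings starting with $e$, is deferred to the next lemma and is not needed here.)
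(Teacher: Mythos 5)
Your proposal is correct and matches the paper's own argument: both proceed by induction on $|U(e)|$, handle the base case $U(e)=\{e\}$ trivially, and in the inductive step combine the induction hypothesis with Observation~\ref{obs:esuccurm} and Corollary~\ref{cor:augment} to conclude that $\{e\}\cup U(f)$ is a uniquely restricted matching starting with $e$. No gaps.
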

\begin{proof}
We shall prove this by induction on $|U(e)|$. It is clear from the definition of $U(e)$ that $|U(e)|\geq 1$.
If $|U(e)|=1$, then it must be the case that $U(e)=\{e\}$. The statement of the lemma is easily seen to be true in this case.
We shall now assume that $|U(e)|>1$ and that the statement of the lemma has been shown to be true for all $e'$ such that $|U(e')|<|U(e)|$. In this case, from the definition of $U(e)$, one of the following occurs: either (a) $\sigma_l(e)$ exists and $U(e)=\{e\}\cup U(\sigma_l(e))$, or (b) $\sigma_r(e)$ exists and $U(e)=\{e\}\cup U(\sigma_r(e))$.
If (a) occurs, then we have $|U(\sigma_l(e))|=|U(e)|-1$, and therefore by the induction hypothesis, $U(\sigma_l(e))$ is a uniquely restricted matching starting with $\sigma_l(e)$. Now, it follows from Observation~\ref{obs:esuccurm} and Corollary~\ref{cor:augment} that $\{e\}\cup U(\sigma_l(e))=U(e)$ is a uniquely restricted matching in $G$ starting with $e$. On the other hand, if (b) occurs, then $|U(\sigma_r(e))|=|U(e)|-1$, and therefore by the induction hypothesis, $U(\sigma_r(e))$ is a uniquely restricted matching starting with $\sigma_r(e)$. Then it follows from Observation~\ref{obs:esuccurm} and Corollary~\ref{cor:augment} that $\{e\}\cup U(\sigma_r(e))=U(e)$ is a uniquely restricted matching in $G$ starting with $e$. This completes the proof.
\end{proof}

\begin{lemma}\label{lem:algomax}
Let $M$ be a uniquely restricted matching starting with an edge $e\in E(G)$. Then, $|M|\leq |U(e)|$.
\end{lemma}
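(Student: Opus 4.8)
The plan is to prove this by induction on $|M|$, mirroring the recursive structure of the definition of $U(e)$. The base case $|M| = 1$ is immediate since $|U(e)| \geq 1$ always. For the inductive step, suppose $|M| = t > 1$ and write $M = \{e, f_2, f_3, \ldots, f_t\}$ where, using Lemma~\ref{lem:unitintdisjurm} and the convention that $M$ starts with $e$, we have $r(e) < l(f_2) < r(f_2) < \cdots < l(f_t) < r(f_t)$. Let $M' = M \setminus \{e\} = \{f_2, \ldots, f_t\}$, which is a uniquely restricted matching starting with $f_2$, with $|M'| = t - 1$.

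The key step is to show that $f_2$ can be ``pushed left'' to one of the canonical successors: that is, either $l(\sigma_l(e)) \leq l(f_2)$ and $r(\sigma_l(e)) \leq r(f_2)$, or the analogous inequalities hold for $\sigma_r(e)$. To see this, recall $\sigma_l(e) = v_{i+1}v_{i+2}$ with $v_i = \rho(l(e))$, so $l(\sigma_l(e))$ is the immediate successor of $\rho(l(e))$ in the ordering. Since $\{e, f_2\}$ is uniquely restricted, by Lemma~\ref{lem:unitinturmpair} either $l(e)$ and $l(f_2)$ are nonadjacent or $r(e)$ and $r(f_2)$ are nonadjacent. In the first case, $l(f_2) > \rho(l(e))$ by Observation~\ref{obs:lambdarho}, so $l(\sigma_l(e)) \leq l(f_2)$, and then $r(\sigma_l(e)) \leq r(f_2)$ follows since both edges lie ``to the right'' (using that $\sigma_l(e) = v_{i+1}v_{i+2}$ consists of consecutive vertices and a proper ordering orders left and right endpoints identically, so $r(\sigma_l(e)) = v_{i+2}$ is at most $l(f_2)+1 \leq r(f_2)$, or more carefully via Observation~\ref{obs:consecedge} and Definition~\ref{def:proporder}). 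In the second case, where $r(e), r(f_2)$ are nonadjacent, one gets the corresponding bounds for $\sigma_r(e)$: since $r(f_2) > \rho(r(e))$ and $\sigma_r(e) = \lambda(v_{i+1})v_{i+1}$ with $v_i = \rho(r(e))$, we have $r(\sigma_r(e)) = v_{i+1} \leq r(f_2)$, and then $l(\sigma_r(e)) \leq l(f_2)$ can be argued from the proper ordering.

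Once we know (say) $l(\sigma_l(e)) \leq l(f_2)$ and $r(\sigma_l(e)) \leq r(f_2)$, the next step is to upgrade $M'$ to a uniquely restricted matching starting with $\sigma_l(e)$ of the same cardinality: replace $f_2$ by $\sigma_l(e)$ to get $M'' = \{\sigma_l(e), f_3, \ldots, f_t\}$. We need $M''$ to be uniquely restricted, which by Theorem~\ref{thm:consecedges} reduces to checking consecutive pairs; the only new pair is $\{\sigma_l(e), f_3\}$, and this is handled by Lemma~\ref{lem:e1e2e3} applied to $\sigma_l(e), f_2, f_3$ (ordered correctly by the endpoint inequalities just established) together with the fact that $\{f_2, f_3\}$ is uniquely restricted — wait, that lemma goes the wrong direction, so instead we apply Lemma~\ref{lem:e1e2e3} in the contrapositive form, or more directly: since $\{f_2,f_3\}$ is uniquely restricted, by Lemma~\ref{lem:unitinturmpair} either $l(f_2),l(f_3)$ or $r(f_2),r(f_3)$ are nonadjacent, and since $\sigma_l(e)$ has endpoints no larger than $f_2$'s, Definition~\ref{def:proporder} preserves the relevant non-adjacency for $\sigma_l(e)$ in place of $f_2$. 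Then $|M''| = t-1$ and $M''$ starts with $\sigma_l(e)$, so by the induction hypothesis $t - 1 = |M''| \leq |U(\sigma_l(e))|$. By the definition of $U(e)$, since $\sigma_l(e)$ exists we have $|U(e)| \geq 1 + |U(\sigma_l(e))| \geq 1 + (t-1) = t = |M|$ (using that $U(e)$ always picks the larger of the two available successor sets, so it is at least $1 + |U(\sigma_l(e))|$ regardless of whether $\sigma_r(e)$ exists). The symmetric argument handles the $\sigma_r(e)$ case.

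I expect the main obstacle to be the endpoint-bookkeeping in the key step: showing cleanly that the second edge $f_2$ of $M$ dominates (from the right, in both coordinates) one of $\sigma_l(e)$ or $\sigma_r(e)$, and that the replacement preserves unique restrictedness of the pair with $f_3$. This requires careful use of Observation~\ref{obs:lambdarho}, Observation~\ref{obs:consecedge}, and Definition~\ref{def:proporder}, and one must be attentive to the precise meaning of ``starts with'' and to the edge cases where $\sigma_l(e)$ or $\sigma_r(e)$ fails to exist (in which case the relevant non-adjacency from Lemma~\ref{lem:unitinturmpair} would force $l(f_2)$ or $r(f_2)$ past $v_{n-1}$ or $v_n$, a contradiction, so the needed successor does exist). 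Everything else is a routine induction driven by the recursive definition of $U(e)$.
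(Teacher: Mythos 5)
Your proof is correct and follows essentially the same route as the paper's: the heart of both arguments is showing, via the dichotomy of Lemma~\ref{lem:unitinturmpair}, that the second edge of $M$ dominates one of $\sigma_l(e),\sigma_r(e)$ in both endpoints, and then swapping that successor in via Lemma~\ref{lem:e1e2e3} and Corollary~\ref{cor:augment} before applying the induction hypothesis. The only differences are organizational---you induct on $|M|$ and argue directly, whereas the paper inducts on $|U(e)|$ and argues by contradiction (its Claim~1 is the contrapositive of your key step)---and your momentary worry that Lemma~\ref{lem:e1e2e3} ``goes the wrong direction'' is unfounded, since its contrapositive (if $\{f_2,f_3\}$ is uniquely restricted then so is $\{\sigma_l(e),f_3\}$) is exactly what you need and is precisely how the paper uses it.
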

\begin{proof}
We prove this by induction on $|U(e)| = k$.

Assume that $k = 1$. In this case, $U(e) =\{e\}$.
Suppose that there exists a uniquely restricted matching $M$ starting with $e$ such that $|M|>1$.
Then there exists at least one edge $e'$ in $M$ such that $e'\neq e$.
Now since $U(e)=\{e\}$, we can see from the definition of $U(e)$ that neither $\sigma_l(e)$ nor $\sigma_r(e)$ exist. As we noted earlier, this means that $\rho(l(e))\geq v_{n-1}$ and $\rho(r(e))\geq v_n$. As $M$ starts with $e$ and $e'\in M$, we know by Lemma~\ref{lem:unitintdisjurm} that $l(e)<r(e)<l(e')<r(e')$. Since $\rho(l(e))\geq v_{n-1}$, it must be the case that $\rho(l(e))\geq l(e')$, which by Observation~\ref{obs:lambdarho} implies that $l(e)l(e')\in E(G)$. Similarly, since $\rho(r(e))\geq v_n$, we have $\rho(r(e))\geq r(e')$, from which it follows by Observation~\ref{obs:lambdarho} that $r(e)r(e')\in E(G)$. From Lemma~\ref{lem:unitinturmpair}, we now have that $\{e,e'\}$ is not a uniquely restricted matching, which is a contradiction to the fact that $\{e,e'\}\subseteq M$.
Therefore, the statement of the lemma is true when $k=1$.

Now, assume that $k>1$ and that the result is true for every edge $e'\in E(G)$ with $|U(e')|<k$.
Suppose that there exists a uniquely restricted matching $M$ starting with $e$ in $G$ such that $|M|>k$.
Let $U(e)=\{e=e_1, e_2,\ldots,e_k\}$ where $l(e_1)<l(e_2)<\cdots<l(e_k)$. From Lemma~\ref{lem:Uisurm}, we know that $U(e)$ is a uniquely restricted matching in $G$ and therefore by Lemma~\ref{lem:unitintdisjurm}, it follows that $l(e_1)<r(e_1)<l(e_2)<r(e_2)<\cdots<l(e_k)<r(e_k)$. Similarly let $M=\{e=e'_1,e'_2,\ldots,e'_{|M|}\}$ where $l(e'_1)<r(e'_1)<l(e'_2)<r(e'_2)<\cdots<l(e'_{|M|})<r(e'_{|M|})$. Note that as $k>1$ and $|M|>k$, the edges $e'_1$, $e'_2$ and $e'_3$ definitely exist.
\medskip

\noindent\textit{Claim 1.} Let $f\in\{\sigma_l(e),\sigma_r(e)\}$. Then either $l(f)>l(e'_2)$ or $r(f)>r(e'_2)$.

Suppose that this not the case. That is, for some $f\in\{\sigma_l(e),\sigma_r(e)\}$, we have $l(f)\leq l(e'_2)$ and $r(f)\leq r(e'_2)$. Then we can apply Lemma~\ref{lem:e1e2e3} to the edges $f$, $e'_2$ and $e'_3$ to conclude that $\{f,e'_3\}$ is a uniquely restricted matching in $G$. Note that as $l(f)\leq l(e'_2)<l(e'_3)$, this means by Lemma~\ref{lem:unitintdisjurm} that $r(f)<l(e'_3)$. Applying Corollary~\ref{cor:augment} to the edge $f$ and the uniquely restricted matching $\{e'_3,e'_4,\ldots,e'_{|M|}\}$, we can now infer that $M'=\{f,e'_3,e'_4,\ldots,e'_{|M|}\}$ is a uniquely restricted matching starting with $f$ in $G$. From the definition of $U(e)$, $|U(e)|\geq |U(f)|+1$. Therefore, we can infer that $|U(f)|\leq |U(e)|-1=k-1$. Then, by applying the induction hypothesis on $f$ and the matching $M'$, we have that $|M'|\leq |U(f)|\leq k-1$. As $|M'|=|M|-1$, we now have $|M|\leq k$, which is a contradiction to our assumption that $|M|>k$. This proves the claim.
\medskip

\noindent\textit{Claim 2.} $l(e)l(e'_2)\in E(G)$.

Suppose that $l(e)l(e'_2)\notin E(G)$. In this case, it is clear from Observation~\ref{obs:lambdarho} that $\rho(l(e))<l(e'_2)<r(e'_2)$. This implies that $\rho(l(e))<v_{n-1}$, telling us that $\sigma_l(e)$ exists, and that $l(\sigma_l(e))\leq l(e'_2)$. Note that as $l(\sigma_l(e))\leq l(e'_2)$, we have from the definition of $\sigma_l(e)$ that $r(\sigma_l(e))\leq r(e'_2)$. This contradicts Claim~1.
\medskip

As $\{e,e'_2\}$ is a uniquely restricted matching in $G$, we now have by Claim~2 and Lemma~\ref{lem:unitinturmpair} that $r(e)r(e'_2)\notin E(G)$. This implies by Observation~\ref{obs:lambdarho} that $\rho(r(e))<r(e'_2)$. Note that we now have $\rho(r(e))<v_n$, which means that $\sigma_r(e)$ exists. It is easy to see using the definition of $\sigma_r(e)$ that $r(\sigma_r(e))\leq r(e'_2)$. As $l(e)<r(e)<l(e'_2)$, we can deduce from Claim~2 and Definition~\ref{def:proporder} that $r(e)l(e'_2)\in E(G)$. Therefore, by Observation~\ref{obs:lambdarho}, $\rho(r(e))\geq l(e'_2)$. As $r(\sigma_r(e))>\rho(r(e))$, this gives us $l(e'_2)<r(\sigma_r(e))\leq r(e'_2)$. As $l(e'_2)r(e'_2)\in E(G)$, we now have by Definition~\ref{def:proporder} that $l(e'_2)r(\sigma_r(e))\in E(G)$, which implies by Observation~\ref{obs:lambdarho} that $\lambda(r(\sigma_r(e)))\leq l(e'_2)$. It can be seen from the definition of $\sigma_r(e)$ that $\lambda(r(\sigma_r(e)))=l(\sigma_r(e))$. Thus, we now have $l(\sigma_r(e))\leq l(e'_2)$. We again have a contradiction to Claim~1. This concludes the proof.
\end{proof}
\begin{remark}\label{rem:v1v2}
$U(v_1v_2)$ is a uniquely restricted matching of maximum cardinality in $G$.
\end{remark}
\begin{proof}
From Lemma~\ref{lem:Uisurm}, it is clear that $U(v_1v_2)$ is a uniquely restricted matching starting with $v_1v_2$ in $G$.
Let $\{e_1,e_2,\ldots,e_k\}$ be any uniquely restricted matching in $G$ where $l(e_1)<l(e_2)<\cdots<l(e_k)$. Clearly, $v_1\leq l(e_1)$ and $v_2\leq r(e_1)$. From Lemma~\ref{lem:unitintdisjurm}, we have $l(e_1)<r(e_1)<l(e_2)<r(e_2)<\cdots<l(e_k)<r(e_k)$. Therefore, we have $v_1\leq l(e_1)<l(e_2)$ and $v_2\leq r(e_1)<r(e_2)$. We can now apply Lemma~\ref{lem:e1e2e3} to the edges $v_1v_2$, $e_1$ and $e_2$ to conclude that $\{v_1v_2,e_2\}$ is a uniquely restricted matching in $G$. By Corollary~\ref{cor:augment}, we now have that $\{v_1v_2,e_2,e_3,\ldots,e_k\}$ is a uniquely restricted matching in $G$ starting with $v_1v_2$. As the cardinality of this matching is $k$, we have by Lemma~\ref{lem:algomax} that $|U(v_1v_2)|\geq k$.
\end{proof}
\begin{theorem}\label{thm:propintmaxurm}
There is a linear-time algorithm that computes a maximum cardinality uniquely restricted matching in a given proper interval graph.
\end{theorem}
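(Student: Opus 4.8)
The plan is to convert the recursion defining $U(\cdot)$, together with Remark~\ref{rem:v1v2}, into an algorithm, and then observe that although a proper interval graph $G$ on $n$ vertices may have $\Theta(n^2)$ edges, the computation only ever needs the numbers $|U(e)|$ for $O(n)$ carefully chosen edges. As already noted in the text, we may assume $G$ is connected (otherwise we run the algorithm on each component and output the union), and using the linear-time recognition algorithm of~\cite{Corneilproper} we obtain a proper interval representation, hence a proper vertex ordering $v_1<v_2<\cdots<v_n$, in $O(n+m)$ time. Scanning the adjacency lists (or the representation) once, we compute $\lambda(u)$ and $\rho(u)$ for every vertex $u$ in $O(n+m)$ time as the minimum- and maximum-indexed vertex of $N(u)\cup\{u\}$.

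The key step is to isolate a linear-size family of relevant edges. Let $R$ consist of the $n-1$ consecutive edges $v_iv_{i+1}$ (for $1\le i<n$; these are edges by Observation~\ref{obs:consecedge}) together with the edges $\lambda(v_j)v_j$ for $2\le j\le n$ (edges since $\lambda(v_j)\neq v_j$ for $j\ge 2$). Then $|R|\le 2n-3$ and $v_1v_2\in R$. The crucial point is that $R$ is closed under successors: by definition $\sigma_l(e)$ is always a consecutive edge, while $\sigma_r(e)$ is always of the form $\lambda(v_j)v_j$ with $j\ge 2$; hence whenever $\sigma_l(e)$ or $\sigma_r(e)$ exists it lies in $R$. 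Therefore the recursion defining $U(e)$, started from any $e\in R$, only refers to values $|U(e')|$ with $e'\in R$. Using the precomputed values $\rho(\cdot)$, we can determine $\sigma_l(e)$ and $\sigma_r(e)$ (and whether they exist, via the conditions $\rho(l(e))<v_{n-1}$ and $\rho(r(e))<v_n$) in $O(1)$ time per edge of $R$, so $O(n)$ time overall.

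For the dynamic program, recall from Observation~\ref{obs:esuccurm} that $r(e)<l(\sigma_l(e))<r(\sigma_l(e))$ and $r(e)<l(\sigma_r(e))<r(\sigma_r(e))$, so every existing successor of $e$ has strictly larger right endpoint than $e$. Hence, processing the edges of $R$ in non-increasing order of the index of $r(e)$ — obtainable by a bucket sort in $O(n)$ time — guarantees that $|U(\sigma_l(e))|$ and $|U(\sigma_r(e))|$ are already known when $e$ is handled; we then evaluate the displayed recurrence for $|U(e)|$ in $O(1)$ time, also storing a pointer recording which successor (if any) realized the maximum. Following these pointers from $v_1v_2$ reconstructs the set $U(v_1v_2)$ explicitly in $O(|U(v_1v_2)|)=O(n)$ time, and we output it. Correctness is immediate from Lemma~\ref{lem:Uisurm}, Lemma~\ref{lem:algomax}, and Remark~\ref{rem:v1v2}, which together say that $U(v_1v_2)$ is a uniquely restricted matching of maximum cardinality in $G$; the total running time is $O(n+m)$.

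The main obstacle I anticipate is the second step — exhibiting the linear-size family $R$ of edges that contains $v_1v_2$ and is closed under $\sigma_l$ and $\sigma_r$. This is exactly what rescues the approach from the possibility that $G$ is dense: without it, a naive implementation of the recursion over all edges would cost $\Theta(m)=\Theta(n^2)$ time. Once $R$ is identified, the remaining work is routine bookkeeping; the only minor care needed is in the boundary cases of the definitions of $\sigma_l$ and $\sigma_r$, so that the base case $U(e)=\{e\}$ of the dynamic program is triggered correctly, but these are already spelled out in the text preceding Observation~\ref{obs:esuccurm}.
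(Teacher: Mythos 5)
Your proposal is correct and follows essentially the same approach as the paper: both turn the recursive definition of $U(\cdot)$ into a linear-time dynamic program and invoke Lemmas~\ref{lem:Uisurm} and~\ref{lem:algomax} together with Remark~\ref{rem:v1v2} for correctness, the only difference being that you evaluate the recurrence bottom-up over an explicit successor-closed family $R$ of $O(n)$ edges while the paper uses top-down memoized recursion from $v_1v_2$. Your observation that $R$ has linear size is correct and a nice refinement, but it is not what "rescues" linearity: since each edge triggers at most one $O(1)$-time call to the memoized procedure, the paper's $\Theta(m)$ bound on the recursion is already $O(n+m)$, i.e., linear in the size of the input adjacency list even for dense proper interval graphs.
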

\begin{proof}
Let the input graph $G$ have $n$ vertices and $m$ edges. We can assume that $G$ is connected, as if it is not, we can just run the algorithm separately in each component of $G$, and then return the union of the maximum cardinality uniquely restricted matchings found in each component.
From the input adjacency list, we can use the well known $O(n+m)$ time algorithms to generate a proper vertex ordering of $G$ (for example, \cite{Corneilproper} or \cite{HellHuang}). Assuming that the position of each vertex in the ordering is known as a unique integer in $[1,n]$ associated with each vertex, we can easily generate $l(e)$ and $r(e)$ for every edge $e\in E(G)$ in $O(n+m)$ time by making one pass through the adjacency list. During the same pass, we can compute $\lambda(u)$ and $\rho(u)$ for every vertex $u\in V(G)$. Once we are done with this, we can easily find $\sigma_l(e)$ and $\sigma_r(e)$ for any edge $e\in E(G)$ in $O(1)$ time. For every edge $e\in E(G)$, $U(e)$ can be stored as a list, which is empty to start with. The following subroutine computes $U(e)$ for a given edge $e$.

\noindent\rule{\textwidth}{.25mm}\\\textbf{Procedure} ComputeU($e$)\vspace{-.1in}\\\rule{\textwidth}{.25mm}

\begin{enumerate}
\itemsep 0in
\item \textbf{if} $\sigma_l(e)$ exists \textbf{and} $U(\sigma_l(e))=\emptyset$ \textbf{then} ComputeU($\sigma_l(e)$)
\item \textbf{if} $\sigma_r(e)$ exists \textbf{and} $U(\sigma_r(e))=\emptyset$ \textbf{then} ComputeU($\sigma_r(e)$)
\item \textbf{if} both $\sigma_l(e)$ and $\sigma_r(e)$ exist \textbf{then}
\item\hspace{.2in} \textbf{if} $|U(\sigma_l(e))|\geq |U(\sigma_r(e))|$ \textbf{then} set $U(e)=\{e\}\cup U(\sigma_l(e))$
\item\hspace{.2in} \textbf{else} set $U(e)=\{e\}\cup U(\sigma_r(e))$
\item \textbf{else if} $\sigma_l(e)$ exists \textbf{then} set $U(e)=\{e\}\cup U(\sigma_l(e))$
\item \textbf{else if} $\sigma_r(e)$ exists \textbf{then} set $U(e)=\{e\}\cup U(\sigma_r(e))$
\item \textbf{else} set $U(e)=\{e\}$
\end{enumerate}

The main algorithm just calls the procedure ComputeU($v_1v_2$), where $v_1$ and $v_2$ are the vertices at the first and second positions respectively in the ordering (recall that by Observation~\ref{obs:consecedge}, $v_1v_2\in E(G)$). The algorithm then finishes by returning the set of edges $U(v_1v_2)$. The correctness of the algorithm is guaranteed by Remark~\ref{rem:v1v2}. As every other part of the algorithm except the call to the procedure ComputeU($v_1v_2$) takes $O(n+m)$ time, we shall restrict our attention to the time taken to complete this call. Notice that for any edge $e\in E(G)$, the time spent inside the procedure ComputeU($e$) outside of the recursive calls to ComputeU is $O(1)$. Also observe that for any edge $e\in E(G)$, the call to ComputeU($e$) happens at most once, which means that there are at most $m$ calls to the procedure ComputeU. Therefore, the total time taken to complete the procedure ComputeU($v_1v_2$) is $O(m)$. Thus, the algorithm runs in time $O(n+m)$.
\end{proof}
\section{Bipartite Permutation Graphs}\label{sec:bipperm}
A bijection $\pi$ of $\{1, 2,\ldots,n\}$ to itself is called a permutation of order $n$. We write $\pi=(\pi(1),\pi(2),\ldots,\pi(n))$ to define a permutation of order $n$. The simple undirected graph $G_\pi$ associated with a permutation $\pi$ is a graph with $V(G_\pi)=\{1,2,\ldots,n\}$ and $E(G_\pi)=\{ij\colon (i-j)(\pi(i)-\pi(j))<0\}$. A graph $G$ on $n$ vertices is said to be a \emph{permutation graph} if it is isomorphic to $G_\pi$ for some permutation $\pi$ of order $n$. In other words, a graph $G$ on $n$ vertices is a permutation graph if there exists a bijection $f:V(G)\rightarrow\{1,2,\ldots,n\}$ and a permutation $\pi$ of order $n$ such that for $u,v\in V(G)$, $uv\in E(G)\Leftrightarrow (f(u)-f(v))(\pi(f(u))-\pi(f(v)))<0$.
Let $V(G)=\{v_1,v_2,\ldots,v_n\}$, where $v_i=f^{-1}(i)$ for $1\leq i\leq n$.

\begin{definition}
For a graph $G=(V,E)$, an ordering $<$ of $V(G)$ is said to be a \emph{transitive vertex ordering} if for $u,v,w\in V(G)$ such that $u<v<w$,
\begin{myenumeratea}
\item $uv,vw\in E(G)\Rightarrow uw\in E(G)$, and
\item $uw\in E(G)\Rightarrow uv\in E(G)$ or $vw\in E(G)$.
\end{myenumeratea}
\end{definition}

It is easy to see that $v_1,v_2,\ldots,v_n$ is a transitive vertex ordering of $G$.
(It is well known and easy to see that the digraph $\hat{G}$ with vertex set $V(G)$ and edge set $\{(u,v)\colon uv\in E(G)$ and $f(u)<f(v)\}$ is a partial order---in other words $\hat{G}$ corresponds to a transitive orientation of $G$. The bijection $f$ can be seen as a linear order of $V(G)$ that extends this partial order).
Using the fact that permutation graphs are exactly the graphs that are both comparability and co-comparability~\cite{DushnikMiller}, it is easy to see (and folklore) that a graph $G$ is a permutation graph if and only if it has a transitive vertex ordering (to see the sufficiency, observe that orienting every edge from left to right gives a transitive orientation of $G$ and orienting every non-edge from left to right gives a transitive orientation of $\overline{G}$, the complement of $G$).

A permutation graph that is also bipartite is called a \emph{bipartite permutation graph}. The class of bipartite permutation graphs is known to be the same as the classes of proper interval bigraphs, bipartite asteroidal-triple free graphs, bipartite co-comparability graphs and bipartite trapezoid graphs~\cite{HellHuang}. As permutation graphs do not contain odd induced cycles of length more than three, it is straightforward to verify that bipartite permutation graphs are exactly triangle-free permutation graphs. For such a graph, the following observation can easily be seen to be true.
\begin{observation}\label{obs:biptransitive}
Let $<$ be a transitive vertex ordering of a bipartite permutation graph $G$.
\begin{myenumeratea}
\item\label{it:now} If $u<v$ and $uv\in E(G)$, then there exists no $w>v$ such that $vw\in E(G)$.
\item\label{it:uvorvw} If $u<v<w$ and $uw\in E(G)$, then $uv\in E(G)$ or $vw\in E(G)$ but not both.
\end{myenumeratea}
\end{observation}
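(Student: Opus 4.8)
The plan is to derive both statements directly from the two defining conditions of a transitive vertex ordering, combined with the fact recorded just before the statement that a bipartite permutation graph is triangle-free. Both parts are handled by contradiction, in each case by producing a triangle on the vertex set $\{u,v,w\}$.

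For part~\ref{it:now}, I would suppose toward a contradiction that $uv\in E(G)$ with $u<v$ and that some $w>v$ has $vw\in E(G)$. Then $u<v<w$ with $uv,vw\in E(G)$, so the first condition of the transitive vertex ordering (namely $uv,vw\in E(G)\Rightarrow uw\in E(G)$) forces $uw\in E(G)$. But then $u,v,w$ are pairwise adjacent and induce a triangle, contradicting triangle-freeness of $G$; hence no such $w$ exists.

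For part~\ref{it:uvorvw}, I would first invoke the second condition of the transitive vertex ordering on $u<v<w$ with $uw\in E(G)$ to conclude that at least one of $uv,vw$ lies in $E(G)$. To rule out the possibility that both do, I would observe that if $uv,vw\in E(G)$, then together with $uw\in E(G)$ this again yields a triangle on $\{u,v,w\}$, which is impossible. Hence exactly one of $uv$ and $vw$ is an edge of $G$.

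I do not expect any genuine obstacle. The only point to be careful about is that the argument uses triangle-freeness of bipartite permutation graphs, which has already been established in the surrounding discussion (bipartite permutation graphs are exactly the triangle-free permutation graphs); nothing specific about permutations or about the particular transitive vertex ordering is needed beyond its two defining properties.
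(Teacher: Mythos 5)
Your proof is correct and is precisely the argument the paper intends: the paper states this observation without proof, remarking only that it "can easily be seen to be true" for a triangle-free (bipartite) permutation graph, and your derivation from the two defining conditions of a transitive vertex ordering plus triangle-freeness fills in exactly that reasoning.
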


\renewcommand{\vertexset}{(3,0,0),(1,1,1),(2,0,2),(5,2,2),(6,2,0),(4,3,1),(7,4,2),(8,4,0)}
\renewcommand{\edgeset}{(2,1),(3,1),(5,1),(6,1),(5,4),(6,4),(4,7),(4,8)}
\renewcommand{\radius}{.1}
\begin{figure}[h!]
\centering
\begin{tikzpicture}
\drawgraph
\node at (-0.25,-0.25) {3};
\node at (0.75,1) {1};
\node at (-0.25,2.25) {2};
\node at (2,2.3) {5};
\node at (2,-0.3) {6};
\node at (3.25,1) {4};
\node at (4.25,2.3) {7};
\node at (4.25,-0.3) {8};
\end{tikzpicture}
\hspace{.3in}
\renewcommand{\vertexset}{(3,3,1),(1,1,1),(2,2,1),(5,5,1),(6,6,1),(4,4,1),(7,7,1),(8,8,1)}
\renewcommand{\edgeset}{(1,2,,,.25),(1,3,,,.5),(1,5,,,1),(1,6,,,1.5),(4,5,,,.25),(4,6,,,.5),(4,7,,,1),(4,8,,,1.5)}
\begin{tikzpicture}
\drawgraph
\node at (1,0.6) {1};
\node at (2,0.6) {2};
\node at (3,0.6) {3};
\node at (4,0.6) {4};
\node at (5,0.6) {5};
\node at (6,0.6) {6};
\node at (7,0.6) {7};
\node at (8,0.6) {8};
\end{tikzpicture}
\caption{The figure on the left shows a bipartite permutation graph where $f(u)$ is written near each vertex $u$ and $\pi=(2,3,5,6,1,7,8,4)$. The figure on the right shows a transitive vertex ordering of the graph on the left.}
\end{figure}
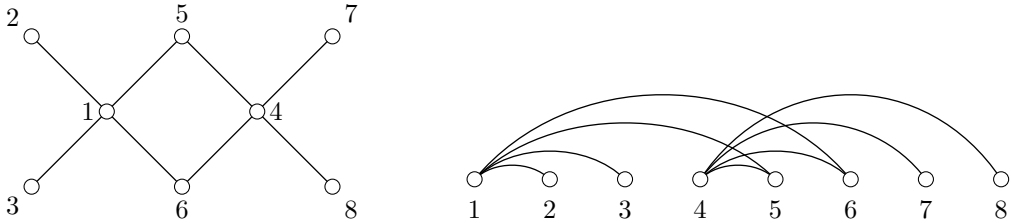

Let $G$ be a bipartite permutation graph with a transitive vertex ordering $<$.
For an edge $e=uv\in E(G)$, we define $l_<(e)=\min_<\{u,v\}$ and $r_<(e)=\max_<\{u,v\}$. When the ordering $<$ is clear from the context, we shorten $l_<(e)$ and $r_<(e)$ to $l(e)$ and $r(e)$ respectively.

A vertex $u\in V(G)$ is said to be a \emph{left-vertex} if there is some edge $e\in E(G)$ such that $u=l(e)$. Similarly, a vertex $u\in V(G)$ is said to be a \emph{right-vertex} if there exists some edge $e\in E(G)$ such that $u=r(e)$. As $G$ is a connected graph, it is clear that every vertex in $G$ is either a left-vertex or a right-vertex. We claim that no vertex can be both a left-vertex and a right-vertex. This is because if a vertex $u\in V(G)$ is such that $u=l(e)=r(e')$, for some $e,e'\in E(G)$, then we have $l(e')<u<r(e)$ and $l(e')u,ur(e)\in E(G)$, which contradicts Observation~\ref{obs:biptransitive}\ref{it:now}. Thus each vertex of $G$ is either a left-vertex or a right-vertex but not both.

If $u$ is a left-vertex, then it can have no neighbour $v$ such that $v<u$, because if it does, then $u$ becomes the right-vertex of the edge $vu$, which is a contradiction to the fact that no vertex can be both a left-vertex and a right-vertex. For the same reason, if $u$ is a right-vertex, it can have no neighbour $v$ such that $u<v$.

Since for every edge $e\in E(G)$, $l(e)$ is a left-vertex and $r(e)$ is a right-vertex, it can be concluded that every edge of $G$ is between a left-vertex and a right-vertex. This tells us that the set of left-vertices and the set of right-vertices are both independent sets of $G$. If $G$ has no isolated vertices, these sets form a bipartition of the bipartite graph $G$.

We say that a vertex $u\in V(G)$ is \emph{underneath} an edge $e\in E(G)$ if $l(e)\leq u\leq r(e)$. Suppose that a left-vertex $u\in V(G)$ is underneath an edge $e\in E(G)$. Clearly, $u\neq r(e)$, as it is a left-vertex and no vertex can be both a left-vertex and a right-vertex. If $u\neq l(e)$, it follows from Observation~\ref{obs:biptransitive}\ref{it:uvorvw} that $u$ is adjacent to $r(e)$ (note that $u$ cannot be adjacent to $l(e)$ as both are left-vertices). If $u=l(e)$, then clearly it is adjacent to $r(e)$. Therefore, we can conclude that if a left-vertex is underneath an edge $e\in E(G)$, then it is adjacent to $r(e)$. Using very similar arguments, we can also see that if a right-vertex is underneath an edge $e\in E(G)$, then it is adjacent to $l(e)$.

\begin{lemma}\label{lem:bippurmpair}
Let $G$ be a bipartite permutation graph with a transitive vertex ordering $<$ and let $e,e'\in E(G)$ such that $l(e)<l(e')$.
\begin{myenumeratea}
\item\label{it:r1l2} If $r(e)<l(e')$, then $\{e,e'\}$ is a uniquely restricted matching in $G$.
\item\label{it:l2r1r2} If $l(e')<r(e)<r(e')$, then $\{e,e'\}$ is a uniquely restricted matching in $G$ if and only if $l(e)r(e')\notin E(G)$.
\item\label{it:r2r1} If $r(e')<r(e)$, then $\{e,e'\}$ is not a uniquely restricted matching in $G$.
\end{myenumeratea}
\end{lemma}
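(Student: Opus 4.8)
The plan is to prove each of the three parts by combining Lemma~\ref{lem:altc4} (the characterization of alternating $4$-cycles by the ``each vertex has a neighbour on the other edge'' condition), Theorem~\ref{thm:urmintc4}\ref{it:noaltC4} (which in the interval setting — and hence also here, once we observe a bipartite permutation graph is an interval graph, or more directly via Theorem~\ref{thm:urmaltcycle} together with Lemma~\ref{lem:altc4}) says that a two-edge matching is uniquely restricted iff it admits no alternating $4$-cycle, and the structural facts established just above the lemma, namely: every edge runs from a left-vertex to a right-vertex; a left-vertex underneath an edge $e$ is adjacent to $r(e)$; a right-vertex underneath $e$ is adjacent to $l(e)$; and Observation~\ref{obs:biptransitive}. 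Throughout, since $\{e,e'\}$ is assumed to be a matching the four vertices $l(e),r(e),l(e'),r(e')$ are distinct; also $l(e),l(e')$ are left-vertices and $r(e),r(e')$ are right-vertices, so $l(e)$ is never adjacent to $l(e')$ and $r(e)$ is never adjacent to $r(e')$.

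For part~\ref{it:r1l2}: if $r(e)<l(e')$ then $l(e)<r(e)<l(e')<r(e')$. I would show $l(e)$ has no neighbour in $\{l(e'),r(e')\}$: it is not adjacent to $l(e')$ (both left-vertices), and if $l(e)r(e')\in E(G)$ then, since $l(e)<r(e)<r(e')$ and $l(e)r(e')\in E(G)$, Observation~\ref{obs:biptransitive}\ref{it:uvorvw} forces $l(e)r(e)\in E(G)$ or $r(e)r(e')\in E(G)$; the former is true but then applying~\ref{it:uvorvw} again to $l(e)<r(e)<r(e')$ says \emph{exactly one} of $l(e)r(e)$, $r(e)r(e')$ holds — wait, rather: I will instead note that $l(e)r(e')\in E(G)$ with $r(e)$ lying strictly between would make $r(e)$ underneath the edge $l(e)r(e')$, so the right-vertex $r(e)$ is adjacent to the left endpoint $l(e)$ of that edge, contradicting that $r(e)$ and $l(e)$ are the two endpoints of the matching edge $e$ only if... actually $r(e)l(e)\in E(G)$ is exactly edge $e$, so that's no contradiction. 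Cleanest: if $l(e)r(e')\in E(G)$ then $r(e)$ is underneath $l(e)r(e')$, hence the \emph{right}-vertex $r(e)$ is adjacent to $l(e)$ — true — and also $l(e')$ is underneath edge $e'=l(e')r(e')$, fine; the real contradiction is that $r(e')>l(e')>r(e)$ together with $l(e)r(e')\in E(G)$ and $l(e)<l(e')$ puts $l(e')$ underneath $l(e)r(e')$, so left-vertex $l(e')$ is adjacent to $r(e')$ — true again. So I must argue differently: use Observation~\ref{obs:biptransitive}\ref{it:now} — $l(e)r(e)\in E(G)$ with $l(e)<r(e)$ means $r(e)$ has no neighbour $w>r(e)$; but $r(e')>r(e)$, so $r(e)r(e')\notin E(G)$ (already known), and symmetrically $l(e)$ as a left-vertex with neighbour $r(e)>l(e)$ can still have other neighbours $>l(e)$. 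Ultimately I expect the slick route is: $l(e)r(e')\in E(G)$ and $r(e)$ strictly between $l(e)$ and $r(e')$ forces, by~\ref{it:uvorvw}, that $r(e)$ is adjacent to $l(e)$ or $r(e')$; it is adjacent to $l(e)$, but~\ref{it:uvorvw} says \emph{not both}, which is consistent; so instead I bound using $l(e')$: $l(e)<l(e')<r(e')$, $l(e)r(e')\in E$, so by~\ref{it:uvorvw} $l(e)l(e')\in E$ or $l(e')r(e')\in E$ — the first is impossible (two left-vertices) and the second holds, again consistent. The genuine obstruction to $l(e)r(e')\in E(G)$ must come from $r(e)<l(e')$: we'd then have $l(e)<r(e)<l(e')<r(e')$ with $l(e)\sim r(e)$ and $l(e)\sim r(e')$, and $r(e)$ between $r(e)$... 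I think the correct tool is: $l(e)\sim r(e)$ with $l(e)<r(e)$ means by~\ref{it:now} that $r(e)$ has no right-neighbour, which is automatic; but apply~\ref{it:now} to $l(e)$ viewed as... Since $l(e)$ is a left-vertex it \emph{can} have $r(e')$ as a neighbour. So $l(e)r(e')$ is \emph{not} in general forbidden by position alone! Hence for~\ref{it:r1l2} I should instead look at whether $r(e)$ has a neighbour in $\{l(e'),r(e')\}$: $r(e)\not\sim r(e')$ (right-vertices); $r(e)\sim l(e')$ would mean, since $r(e)$ is a right-vertex and $l(e')$ a left-vertex with $l(e')>r(e)$, that $l(e')$ is the right-vertex of edge $r(e)l(e')$ — contradiction, $l(e')$ is a left-vertex. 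So $r(e)$ has no neighbour in $\{l(e'),r(e')\}$, and by Lemma~\ref{lem:altc4} there is no alternating $4$-cycle, so $\{e,e'\}$ is uniquely restricted by Theorem~\ref{thm:urmintc4}.

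For part~\ref{it:l2r1r2}, where $l(e)<l(e')<r(e)<r(e')$: here $l(e')$ is a left-vertex underneath $e$, so $l(e')\sim r(e)$; and $r(e)$ is a right-vertex underneath $e'$, so $r(e)\sim l(e')$ (same edge). Thus among the four endpoints, $l(e')\sim r(e)$ always holds. Now $r(e)$ has the neighbour $l(e')\in\{l(e'),r(e')\}$, and $l(e')$ has the neighbour $r(e)\in\{l(e),r(e)\}$. So by Lemma~\ref{lem:altc4}\ref{it:neigh}, an alternating $4$-cycle exists iff \emph{also} $l(e)$ has a neighbour in $\{l(e'),r(e')\}$ and $r(e')$ has a neighbour in $\{l(e),r(e)\}$. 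But $l(e)\not\sim l(e')$ (left-vertices), so $l(e)$ needs $l(e)\sim r(e')$; and $r(e')\not\sim r(e)$ (right-vertices), so $r(e')$ needs $r(e')\sim l(e)$ — the same condition. Moreover when $l(e)\sim r(e')$ does hold, $r(e')$ automatically has the required neighbour $l(e)$. Hence an alternating $4$-cycle exists iff $l(e)r(e')\in E(G)$, and by Theorem~\ref{thm:urmintc4}, $\{e,e'\}$ is uniquely restricted iff $l(e)r(e')\notin E(G)$, which is exactly the claim.

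For part~\ref{it:r2r1}, where $l(e)<l(e')<r(e')<r(e)$ (using also that $l(e')<r(e')$ and the four vertices are distinct, and $l(e')>l(e)$ while $r(e')<r(e)$, so indeed $r(e')$ lies strictly under edge $e$ and $l(e')$ lies strictly under edge $e$): $l(e')$ underneath $e$ gives $l(e')\sim r(e)$; $r(e')$ underneath $e$ gives $r(e')\sim l(e)$. So each of $l(e'),r(e')$ has a neighbour in $\{l(e),r(e)\}$. Conversely, $l(e)\sim r(e')$ gives $l(e)$ a neighbour in $\{l(e'),r(e')\}$, and $r(e)\sim l(e')$ gives $r(e)$ a neighbour there. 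Thus condition~\ref{it:neigh} of Lemma~\ref{lem:altc4} holds, so an alternating $4$-cycle exists, and by Theorem~\ref{thm:urmaltcycle} (or Theorem~\ref{thm:urmintc4}) $\{e,e'\}$ is not uniquely restricted.

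The main obstacle, as the discussion of part~\ref{it:r1l2} above shows, is keeping straight which adjacencies are \emph{forced} and which are \emph{forbidden} purely from the left/right-vertex dichotomy versus from the betweenness (the ``underneath'' lemmas and Observation~\ref{obs:biptransitive}). Once one has the three clean facts — (i) $l$-endpoints are mutually non-adjacent and $r$-endpoints are mutually non-adjacent; (ii) a vertex strictly underneath an edge is adjacent to the \emph{opposite}-side endpoint of that edge; (iii) the only ``free'' potential adjacency left among the four endpoints in the overlapping cases is $l(e)r(e')$ — each part is a one-line application of Lemma~\ref{lem:altc4} and Theorem~\ref{thm:urmintc4}. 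I would write the proof by first recording facts (i) and (ii) as a short preamble, then disposing of the three cases in a few sentences each.
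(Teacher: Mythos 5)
Your proposal is correct and follows essentially the same route as the paper: part~\ref{it:r1l2} via the fact that the right-vertex $r(e)$ has no neighbour after it in the ordering, part~\ref{it:l2r1r2} via the forced edge $l(e')r(e)$ (from the ``underneath'' facts) reducing everything to whether $l(e)r(e')\in E(G)$, and part~\ref{it:r2r1} via both forced edges $l(e')r(e)$ and $l(e)r(e')$, all combined with Lemma~\ref{lem:altc4} and Theorem~\ref{thm:urmaltcycle}. The lengthy detour in part~\ref{it:r1l2} (trying to forbid $l(e)r(e')$, which indeed need not be forbidden there) should be excised from a final write-up, but the argument you ultimately settle on is exactly the paper's.
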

\begin{proof}
Suppose that $r(e)<l(e')$. Then as we have $l(e)<r(e)<l(e')<r(e')$ and that $r(e)$ is a right-vertex, neither $l(e')$ nor $r(e')$ is a neighbour of $r(e)$. Then it can be seen from Lemma~\ref{lem:altc4} that there is no alternating cycle with respect to $\{e,e'\}$ in $G$, which further implies by Theorem~\ref{thm:urmaltcycle} that $\{e,e'\}$ is a uniquely restricted matching in $G$. This proves~\ref{it:r1l2}.

Now let us consider the case when $l(e')<r(e)<r(e')$. Then, as $r(e)$ is a right-vertex underneath the edge $e'$, we know that $l(e')r(e)\in E(G)$. If $l(e)r(e')\in E(G)$, then we have the alternating cycle $l(e)r(e')l(e')r(e)l(e)$ with respect to $\{e,e'\}$ in $G$. If $l(e)r(e')\notin E(G)$, then we have that there is no neighbour of $r(e')$ in $\{l(e),r(e)\}$ (note that $r(e)$ and $r(e')$ cannot be adjacent as they are both right-vertices), implying by Lemma~\ref{lem:altc4} that there is no alternating cycle with respect to $\{e,e'\}$ in $G$. This shows that there is an alternating cycle with respect to $\{e,e'\}$ in $G$ if an only if $l(e)r(e')\in E(G)$. This implies, by Theorem~\ref{thm:urmaltcycle}, that $\{e,e'\}$ is a uniquely restricted matching in $G$ if and only if $l(e)r(e')\notin E(G)$. This proves~\ref{it:l2r1r2}.

Finally, let us consider the case when $r(e')<r(e)$. Then we have $l(e)<l(e')<r(e')<r(e)$. Now, $l(e')$ is a left-vertex underneath $e$ and $r(e')$ is a right-vertex underneath $e$, implying that we have the edges $l(e)r(e'),l(e')r(e)\in E(G)$. Now we have the cycle $l(e)r(e')l(e')r(e)l(e)$ in $G$, which is an alternating cycle with respect to $\{e,e'\}$ in $G$. By Theorem~\ref{thm:urmaltcycle}, $\{e,e'\}$ is not a uniquely restricted matching in $G$. This completes the proof of~\ref{it:r2r1}.
\end{proof}

\begin{lemma}\label{lem:bipe1e2e3}
Let $G$ be a bipartite permutation graph with a transitive vertex ordering~$<$. Let $e_1,e_2,e_3\in E(G)$ such that $l(e_1)\leq l(e_2)\leq l(e_3)$ and $r(e_1)\leq r(e_2)\leq r(e_3)$. If $\{e_1,e_3\}$ is not a uniquely restricted matching in $G$ then neither $\{e_1,e_2\}$ nor $\{e_2,e_3\}$ is a uniquely restricted matching in $G$.
\end{lemma}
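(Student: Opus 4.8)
The statement is the exact analogue of Lemma~\ref{lem:e1e2e3} for bipartite permutation graphs, so the plan is to mimic that proof but using Lemma~\ref{lem:bippurmpair} in place of Lemma~\ref{lem:unitinturmpair}. By symmetry I will only argue that $\{e_2,e_3\}$ is not a uniquely restricted matching, the case of $\{e_1,e_2\}$ being entirely analogous (and left to the reader). If $\{e_2,e_3\}$ is not even a matching we are done immediately, so I assume $l(e_2),r(e_2),l(e_3),r(e_3)$ are four distinct vertices; combined with the hypothesis $l(e_1)\leq l(e_2)\leq l(e_3)$ and $r(e_1)\leq r(e_2)\leq r(e_3)$ this forces $l(e_1)\leq l(e_2)<l(e_3)$ and $r(e_1)\leq r(e_2)<r(e_3)$.

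Since $\{e_1,e_3\}$ is not a uniquely restricted matching, Lemma~\ref{lem:bippurmpair} tells us what the relative order of the four endpoints $l(e_1),r(e_1),l(e_3),r(e_3)$ must be: part~\ref{it:r1l2} rules out $r(e_1)<l(e_3)$, so we must be in case~\ref{it:l2r1r2} or case~\ref{it:r2r1}. In case~\ref{it:r2r1} we would have $r(e_3)<r(e_1)$, contradicting $r(e_1)\leq r(e_3)$; hence we are in case~\ref{it:l2r1r2}, i.e. $l(e_1)<l(e_3)<r(e_1)<r(e_3)$, and moreover $l(e_1)r(e_3)\in E(G)$. Now I want to transfer this to $e_2$. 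From $l(e_2)\leq r(e_2)<r(e_3)$ and $l(e_1)\leq l(e_2)$ I need to check where $r(e_2)$ sits relative to $l(e_3)$. Here is the key point: $l(e_3)$ is a left-vertex, and since $l(e_3)\leq r(e_2)$ while $r(e_2)$ is a right-vertex, the two are distinct; if $l(e_3)<r(e_2)$ then $l(e_3)$ would be a left-vertex lying underneath $e_2$ (using $l(e_2)<l(e_3)<r(e_2)$), hence adjacent to $r(e_2)$ — but then $r(e_2)$ is a neighbour strictly to the right of the left-vertex $l(e_3)$, impossible by the structural fact that a left-vertex has no neighbour to its right. So $r(e_2)\leq l(e_3)$, and since they are distinct, $r(e_2)<l(e_3)$. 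Wait — that would put us in case~\ref{it:r1l2} for $\{e_2,e_3\}$, which would make it uniquely restricted, the wrong conclusion. I need to reexamine: the hypothesis $r(e_1)\leq r(e_2)$ together with $l(e_3)<r(e_1)$ gives $l(e_3)<r(e_1)\leq r(e_2)$, so in fact $l(e_3)<r(e_2)$ is forced, and the contradiction I just derived shows this situation cannot arise under the additional constraint that $l(e_3)$ is a left-vertex lying strictly between $l(e_2)$ and $r(e_2)$ — unless $l(e_3)=l(e_2)$, which is excluded, or unless my claim ``$l(e_3)$ underneath $e_2$'' needs $l(e_2)<l(e_3)$, which holds. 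So the honest statement is: the configuration $\{e_1,e_3\}$ non-uniquely-restricted with $l(e_1)\leq l(e_2)<l(e_3)$, $r(e_1)\leq r(e_2)<r(e_3)$, and $\{e_2,e_3\}$ a matching simply does not occur, because it forces the impossible adjacency above. Hence $\{e_2,e_3\}$ cannot be a matching that fails to be uniquely restricted while being a matching — meaning in every case where it is a matching, it inherits non-unique-restrictedness vacuously? That cannot be right either.

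Let me restructure cleanly. The correct route: from $l(e_1)<l(e_3)<r(e_1)<r(e_3)$ and $l(e_1)r(e_3)\in E(G)$, I argue $l(e_2)r(e_3)\in E(G)$ as well. Indeed $l(e_1)<r(e_3)$ and $r(e_3)$ is a right-vertex with $l(e_1)$ underneath the edge $l(e_1)r(e_3)$; the left-vertex $l(e_2)$ satisfies $l(e_1)\leq l(e_2)$, and I must show $l(e_2)<r(e_3)$ and that $l(e_2)$ is underneath some edge incident to $r(e_3)$. Since $l(e_2)<l(e_3)<r(e_3)$, the vertex $l(e_2)$ lies underneath the edge $l(e_3)r(e_3)=e_3$, and being a left-vertex it is therefore adjacent to $r(e_3)$ — wait, but it could equal $l(e_1)$; that is fine. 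So $l(e_2)r(e_3)\in E(G)$ in all cases. Next I locate $r(e_2)$: as argued, $l(e_3)<r(e_2)$ is forced by $r(e_1)\leq r(e_2)$ and $l(e_3)<r(e_1)$, and together with $r(e_2)<r(e_3)$ we get $l(e_3)<r(e_2)<r(e_3)$, so $\{e_3,e_2\}$ (with roles $l(e_3)<l(e_2)$? no — $l(e_2)<l(e_3)$) — I must apply Lemma~\ref{lem:bippurmpair} with the first edge being $e_2$ (smaller left endpoint) and second edge $e_3$: we have $l(e_2)<l(e_3)<r(e_2)<r(e_3)$, which is case~\ref{it:l2r1r2}, and the lemma says $\{e_2,e_3\}$ is uniquely restricted iff $l(e_2)r(e_3)\notin E(G)$. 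But we just showed $l(e_2)r(e_3)\in E(G)$, so $\{e_2,e_3\}$ is not uniquely restricted, as desired.

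\textbf{Main obstacle.} The only delicate point is establishing $l(e_3)<r(e_2)$ so that we land in case~\ref{it:l2r1r2} rather than case~\ref{it:r1l2} of Lemma~\ref{lem:bippurmpair}; this uses the chain $l(e_3)<r(e_1)\leq r(e_2)$, where $l(e_3)<r(e_1)$ comes precisely from the fact that $\{e_1,e_3\}$ falls in case~\ref{it:l2r1r2} (the other cases having been excluded by the ordering hypotheses). Everything else is a routine application of the "left-vertex underneath an edge is adjacent to its right endpoint" principle established just before Lemma~\ref{lem:bippurmpair}, exactly parallel to how Lemma~\ref{lem:e1e2e3} uses Definition~\ref{def:proporder}.
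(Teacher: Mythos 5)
Your final argument lands on essentially the same proof as the paper's (reduce to the case where $\{e_2,e_3\}$ is a matching, use Lemma~\ref{lem:bippurmpair} to pin down the configuration $l(e_1)<l(e_3)<r(e_1)<r(e_3)$ with $l(e_1)r(e_3)\in E(G)$, transfer the adjacency to $l(e_2)$, and apply Lemma~\ref{lem:bippurmpair}\ref{it:l2r1r2} to $e_2,e_3$), but two steps are wrong as written. First, your ``key point'' invokes a structural fact that a left-vertex has no neighbour to its right; the paper establishes the opposite: a left-vertex has no neighbour to its \emph{left} (and a right-vertex none to its right). So $l(e_3)$ being adjacent to $r(e_2)>l(e_3)$ is not a contradiction at all, and the phantom contradiction you derive from it is what sends you into the confused middle section of the write-up. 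Second, in your ``restructured'' argument the justification for $l(e_2)r(e_3)\in E(G)$ is that $l(e_2)$ lies underneath the edge $e_3$; this is false, since being underneath $e_3$ requires $l(e_3)\leq l(e_2)$, whereas here $l(e_2)<l(e_3)$. The correct move --- which you set up and then abandon --- is to observe that $l(e_1)\leq l(e_2)<l(e_3)<r(e_3)$ places the left-vertex $l(e_2)$ underneath the edge $l(e_1)r(e_3)$ (which exists by Lemma~\ref{lem:bippurmpair}\ref{it:l2r1r2} applied to $\{e_1,e_3\}$), whence $l(e_2)r(e_3)\in E(G)$. This is exactly what the paper does.

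With those two repairs (delete the reversed structural fact and the digression it causes; replace ``underneath $e_3$'' by ``underneath $l(e_1)r(e_3)$''), your proof coincides with the paper's. The remaining steps --- ruling out case~\ref{it:r1l2} and case~\ref{it:r2r1} for $\{e_1,e_3\}$ using the ordering hypotheses, deducing $l(e_3)<r(e_1)\leq r(e_2)<r(e_3)$, and concluding via Lemma~\ref{lem:bippurmpair}\ref{it:l2r1r2} --- are correct and match the paper.
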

\begin{proof}
We shall show only that $\{e_2,e_3\}$ is not a uniquely restricted matching, but the same kind of reasoning can be used to show that $\{e_1,e_2\}$ is also not a uniquely restricted matching. If $\{e_2,e_3\}$ is not even a matching, then we are immediately done. So let us suppose that $\{e_2,e_3\}$ is a matching. Thus, we have $l(e_2)<l(e_3)$ and $r(e_2)<r(e_3)$, which implies that $l(e_1)<l(e_3)$ and $r(e_1)<r(e_3)$. As $\{e_1,e_3\}$ is not a uniquely restricted matching, Lemma~\ref{lem:bippurmpair}\ref{it:r1l2} tells us that we cannot have $r(e_1)<l(e_3)$. Therefore, it must be the case that $l(e_1)<l(e_3)<r(e_1)<r(e_3)$ (recall that $l(e_3)\neq r(e_1)$ as one is a left-vertex and the other a right-vertex), which by Lemma~\ref{lem:bippurmpair}\ref{it:l2r1r2} means that $l(e_1)r(e_3)\in E(G)$. Note that the previous inequality means that $l(e_1)\leq l(e_2)<l(e_3)<r(e_1)\leq r(e_2)<r(e_3)$. Thus, the left-vertex $l(e_2)$ is underneath the edge $l(e_1)r(e_3)$, which tells us that $l(e_2)r(e_3)\in E(G)$. We can now apply Lemma~\ref{lem:bippurmpair}\ref{it:l2r1r2} to the edges $e_2$ and $e_3$ to conclude that $\{e_2,e_3\}$ is not a uniquely restricted matching in $G$.
\end{proof}

We now show that the statement of Theorem~\ref{thm:urmintc4} also holds for bipartite permutation graphs.

\begin{theorem}\label{thm:urmbippc4}
Let $G=(V,E)$ be a bipartite permutation graph and let $M$ be a matching in it. Then, the following statements are equivalent:
\begin{myenumeratei}
\item\label{it:bipMurm} $M$ is uniquely restricted
\item\label{it:bipaltC4} There is no alternating cycle of length 4 with respect to $M$ in $G$
\item\label{it:bipee'urm} For any two edges $e,e'\in M$, $\{e,e'\}$ is a uniquely restricted matching in $G$
\end{myenumeratei}
\end{theorem}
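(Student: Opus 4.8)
The plan is to mirror the proof of Theorem~\ref{thm:urmintc4} almost line for line, with the structure of bipartite permutation graphs entering only at the very end to forbid long induced cycles. The equivalence \ref{it:bipaltC4}$\Leftrightarrow$\ref{it:bipee'urm} I would prove exactly as there: an alternating cycle of length $4$ with respect to $M$ uses precisely two edges $e,e'$ of $M$ and is an alternating cycle with respect to $\{e,e'\}$, while conversely, by Lemma~\ref{lem:altc4}, any alternating cycle with respect to a two-edge matching $\{e,e'\}\subseteq M$ has length $4$ and is an alternating cycle with respect to $M$; so, using Theorem~\ref{thm:urmaltcycle}, statements \ref{it:bipaltC4} and \ref{it:bipee'urm} both fail exactly when some pair $\{e,e'\}\subseteq M$ fails to be uniquely restricted. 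The implication \ref{it:bipMurm}$\Rightarrow$\ref{it:bipaltC4} is immediate from Theorem~\ref{thm:urmaltcycle}. So all the work is in \ref{it:bipaltC4}$\Rightarrow$\ref{it:bipMurm}, and I would prove its contrapositive: if $M$ is not uniquely restricted, I will exhibit an alternating cycle of length $4$ with respect to $M$.

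To do this, fix a transitive vertex ordering $<$ of $G$ (we may as well assume $G$ is connected), and by Theorem~\ref{thm:urmaltcycle} pick a shortest alternating cycle $C = u_1u_2\cdots u_ku_1$ with respect to $M$; then $k$ is even and $k\geq 4$, and it suffices to show $k=4$. Just as in the proof of Theorem~\ref{thm:urmintc4}, the minimality of $C$ rules out any chord $u_iu_j$ with $i$ and $j$ of opposite parity, since such a chord would split $C$ into two strictly shorter cycles, at least one of which is still an alternating cycle with respect to $M$. I would then use bipartiteness to strengthen this to: $C$ has no chord whatsoever. Indeed, every vertex of $C$ is a left-vertex or a right-vertex (each lies on an edge of $C$), no two adjacent vertices of $C$ can both be left-vertices or both be right-vertices (the left-vertices and the right-vertices each form an independent set), and $k$ is even, so around $C$ the left-vertices and right-vertices alternate; since any edge inside $V(C)$ joins a left-vertex to a right-vertex, that is, vertices of opposite index-parity along $C$, combining with the previous sentence shows $C$ is chordless.

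It remains to rule out $k\geq 6$, and this is the step I expect to be the real obstacle --- though once set up correctly it is short. Suppose $k\geq 6$. Relabel the vertices of $C$ cyclically so that $u_1 = \min_<\{u_1,\dots,u_k\}$ and, after reversing the cyclic orientation if necessary, so that $u_2 < u_k$. Since $u_1$ is the smallest vertex of $C$, it is below its cycle-neighbour $u_2$, so $u_1 = l(u_1u_2)$ is a left-vertex; hence, by the alternation established above, $u_3$ is also a left-vertex and therefore has no neighbour below it, so in particular $u_3 < u_2$ (recall $u_2u_3\in E(G)$). Thus $u_1 < u_3 < u_2 < u_k$, and in particular $u_1 < u_3 < u_k$ while $u_1u_k\in E(G)$. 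The second defining property of a transitive vertex ordering then forces $u_1u_3\in E(G)$ or $u_3u_k\in E(G)$; since $k\geq 6$, in either case this is a chord of $C$ (the cycle-neighbours of $u_1$ are $u_2,u_k$, and those of $u_3$ are $u_2,u_4$), contradicting chordlessness. Hence $k=4$. The crux is precisely this last move --- recognising that a bipartite permutation graph has no chordless cycle of length $\geq 6$, and finding the right triple on which to invoke the transitivity axiom, namely the minimum vertex of $C$ together with the smaller of its two neighbours; everything else is routine bookkeeping with the left-vertex/right-vertex dichotomy developed before Lemma~\ref{lem:bippurmpair}. (An alternative would be to first establish a bipartite-permutation analogue of Theorem~\ref{thm:consecedges} from Lemmas~\ref{lem:bippurmpair} and~\ref{lem:bipe1e2e3} and deduce \ref{it:bipee'urm} from that, but the direct argument above seems cleaner and shorter.)
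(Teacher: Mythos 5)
Your proposal is correct and follows essentially the same route as the paper: a shortest alternating cycle normalized so that $u_1$ is minimum and $u_2<u_k$, the observation that $u_3$ is a left-vertex forcing $u_3<u_2$, and the resulting chord $u_3u_k$ (the paper obtains it via the ``left-vertex underneath an edge'' fact, which is just the transitivity axiom you invoke directly) yielding a shorter alternating cycle. Your intermediate step of establishing full chordlessness of $C$ is a harmless repackaging of the same parity argument the paper uses at the end.
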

\begin{proof}
As \ref{it:bipaltC4}$\Leftrightarrow$\ref{it:bipee'urm} is straightforward consequence of Lemma~\ref{lem:altc4} and Theorem~\ref{thm:urmaltcycle}, we shall only prove \ref{it:bipMurm}$\Leftrightarrow$\ref{it:bipaltC4}.

By Theorem~\ref{thm:urmaltcycle}, if $M$ is a uniquely restricted matching, then there is no alternating cycle of any length with respect to $M$ in $G$. So we only need to show that if $M$ is not uniquely restricted, then there is an alternating cycle of length 4 with respect to $M$ in $G$.

Let $<$ be a transitive vertex ordering of $G$. Suppose that the matching $M$ is not uniquely restricted. Then, by Theorem~\ref{thm:urmaltcycle}, there exists some alternating cycle with respect to $M$ in $G$. Let $u_1u_2u_3\ldots u_ku_1$ be an alternating cycle with respect to $M$ of smallest possible length in $G$. Clearly, $k$ is even and $4\leq k\leq |V(G)|$. If $k=4$, then this cycle is an alternating cycle of length 4 with respect to $M$ and we are done. So let us assume that $k>4$. We shall also assume without loss of generality that $u_1=\min_<\{u_1,u_2,\ldots,u_k\}$ and that $u_2<u_k$ (otherwise we can relabel the vertices of the cycle to satisfy both these conditions). Note that this means that $u_1$ is a left-vertex and that both $u_2$ and $u_k$ are right-vertices. Since set of left-vertices and set of right-vertices are both independent sets, we can see that $u_i$ is a left-vertex if and only if $i$ is odd. Now, let us examine the position of $u_3$ in the ordering $<$. As $u_3$ is a left-vertex and $u_2$ a neighbour of it, we must have $u_3<u_2$. As $u_2<u_k$, this means that $u_3$ is underneath the edge $u_1u_k$, which implies that $u_3u_k\in E(G)$. Then, at least one of the cycles $u_1u_2u_3u_ku_1$ or $u_3u_4u_5\ldots u_ku_3$ is an alternating cycle with respect to $M$ in $G$ having length smaller than $k$. This contradicts our assumption that $u_1u_2\ldots u_k$ is an alternating cycle with respect to $M$ of smallest possible length in $G$.
\end{proof}

Let $M$ be a matching in a bipartite permutation graph with a transitive vertex ordering $<$. The edges of $M$ can be labelled as $e_1,e_2,\ldots,e_{|M|}$ such that $l(e_1)<l(e_2)<\cdots<l(e_{|M|})$. The matching $M$ is then said to \emph{start with} the edge $e_1$.

\begin{theorem}\label{thm:bippconsec}
Let $G$ be a bipartite permutation graph with a transitive vertex ordering $<$.
Let $M=\{e_1, e_2,\ldots,$ $e_t\}$ be a matching in $G$ where $l(e_1)<l(e_2)<\cdots<l(e_t)$. The matching $M$ is uniquely restricted if and only if $\{e_i,e_{i+1}\}$ is a uniquely restricted matching in $G$, for each $i\in\{1,2,\ldots,t-1\}$.
\end{theorem}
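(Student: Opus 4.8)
The plan is to mimic the proof of Theorem~\ref{thm:consecedges} for proper interval graphs, replacing the appeals to Lemma~\ref{lem:unitintdisjurm}, Lemma~\ref{lem:unitinturmpair} and Lemma~\ref{lem:e1e2e3} with their bipartite-permutation-graph analogues, namely Lemma~\ref{lem:bippurmpair}, Theorem~\ref{thm:urmbippc4} and Lemma~\ref{lem:bipe1e2e3}. The forward direction is trivial: every subset of a uniquely restricted matching is uniquely restricted, so if $M$ is uniquely restricted then each $\{e_i,e_{i+1}\}$ is uniquely restricted. For the converse, I would argue the contrapositive: assume $M$ is not uniquely restricted and show that some consecutive pair $\{e_i,e_{i+1}\}$ fails to be uniquely restricted.

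First I would invoke Theorem~\ref{thm:urmbippc4} to obtain a pair $\{e_p,e_q\}$ with $p<q$ that is not uniquely restricted, and choose such a pair minimizing $q-p$, so that every pair $\{e_{p'},e_{q'}\}$ with $q'-p'<q-p$ is uniquely restricted. The goal is to show $q=p+1$. Suppose for contradiction $q>p+1$, so $e_{q-1}$ lies strictly between. The key point where the argument diverges slightly from the proper interval case is that here I do not automatically have $l(e_1)<\cdots<l(e_t)$ forcing a matching between each pair; but since $M$ is itself a matching, all its edges are pairwise vertex-disjoint, so for any $p'<q'$ we have $l(e_{p'})<l(e_{q'})$ and the endpoints are all distinct. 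I would first rule out $r(e_{q-1})<r(e_p)$ and $r(e_q)<r(e_{q-1})$: in either case Lemma~\ref{lem:bippurmpair}\ref{it:r2r1} would make $\{e_p,e_{q-1}\}$ or $\{e_{q-1},e_q\}$ respectively not uniquely restricted (note $l(e_p)<l(e_{q-1})<l(e_q)$), contradicting the minimality of $q-p$. Hence $r(e_p)\le r(e_{q-1})\le r(e_q)$, and together with $l(e_p)\le l(e_{q-1})\le l(e_q)$ the hypotheses of Lemma~\ref{lem:bipe1e2e3} are met with the triple $e_p,e_{q-1},e_q$. Since $\{e_p,e_q\}$ is not uniquely restricted, Lemma~\ref{lem:bipe1e2e3} yields that $\{e_{q-1},e_q\}$ is not uniquely restricted either, contradicting the minimality of $q-p$. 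Therefore $q=p+1$, and taking $i=p$ completes the proof.

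I expect no serious obstacle here: the proof is essentially a transcription of the proof of Theorem~\ref{thm:consecedges}, and all three supporting lemmas for bipartite permutation graphs are already in hand and have exactly the shape needed. The only mild subtlety to double-check is that the case analysis on the relative order of $r(e_p), r(e_{q-1}), r(e_q)$ correctly covers all possibilities and that the appropriate part of Lemma~\ref{lem:bippurmpair} applies in each case — in particular using part~\ref{it:r2r1} to kill a pair when the right endpoints are "out of order," which is the bipartite analogue of the role played by Lemma~\ref{lem:unitintdisjurm} in the proper interval proof.
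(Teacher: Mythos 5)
Your proposal is correct and follows essentially the same strategy as the paper: the forward direction by heredity of uniquely restricted matchings, and the converse by taking a non-uniquely-restricted pair $\{e_p,e_q\}$ with $q-p$ minimal, ruling out the bad orderings of the right endpoints via Lemma~\ref{lem:bippurmpair}\ref{it:r2r1}, and then deriving a contradiction with minimality. The only cosmetic difference is that you close the argument by citing Lemma~\ref{lem:bipe1e2e3}, whereas the paper inlines the equivalent reasoning (establishing $l(e_p)<l(e_{q-1})<l(e_q)<r(e_p)<r(e_{q-1})<r(e_q)$ and then combining Lemma~\ref{lem:bippurmpair}\ref{it:l2r1r2} with the ``underneath'' property); both routes are valid, and yours is, if anything, the more faithful transcription of the proof of Theorem~\ref{thm:consecedges}.
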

\begin{proof}
The proof this theorem closely follows the proof of Theorem~\ref{thm:consecedges}.
As every subset of a uniquely restricted matching is also a uniquely restricted matching, to prove the theorem, we only need to show that whenever $M$ is not a uniquely restricted matching, there exists some $i\in\{1,2,\ldots,t-1\}$ such that $\{e_i,e_{i+1}\}$ is not a uniquely restricted matching. Suppose that $M$ is not a uniquely restricted matching. Then, by Theorem~\ref{thm:urmbippc4}, we know that there exists $e_p,e_q\in M$ with $1\leq p<q\leq t$ such that $\{e_p,e_q\}$ is not a uniquely restricted matching. We choose $p$ and $q$ such that for any $p',q'$ with $1\leq p'<q'\leq t$ and $q'-p'<q-p$, $\{e_{p'},e_{q'}\}$ is a uniquely restricted matching in $G$. Suppose that $q>p+1$. By our choice of $p$ and $q$, we know that both $\{e_p,e_{q-1}\}$ and $\{e_{q-1},e_q\}$ are uniquely restricted matchings. As we have $l(e_p)<l(e_{q-1})<l(e_q)$, we know by Lemma~\ref{lem:bippurmpair}\ref{it:r1l2} that $l(e_q)<r(e_p)$ and by Lemma~\ref{lem:bippurmpair}\ref{it:r2r1}, we have that $r(e_p)<r(e_{q-1})$ and $r(e_{q-1})<r(e_q)$. We now have $l(e_p)<l(e_{q-1})<l(e_q)<r(e_p)<r(e_{q-1})<r(e_q)$. By Lemma~\ref{lem:bippurmpair}\ref{it:l2r1r2}, we can now say that $l(e_p)r(e_q)\in E(G)$ and that $l(e_{q-1})r(e_q)\notin E(G)$. But this is impossible as $l(e_{q-1})$ is now a left-vertex underneath the edge $l(e_p)r(e_q)$. Therefore, we can conclude that $q=p+1$. For $i=p$, we now have that $\{e_i,e_{i+1}\}$ is not a uniquely restricted matching, thereby completing the proof.

\end{proof}
\begin{corollary}\label{cor:bippaugment}
Let $M$ be a uniquely restricted matching in a bipartite permutation graph $G$ starting with $e'\in E(G)$. Let $e\in E(G)$ such that $l(e)<l(e')$ and $\{e,e'\}$ is a uniquely restricted matching in $G$. Then $\{e\}\cup M$ is a uniquely restricted matching in $G$ starting with $e$.
\end{corollary}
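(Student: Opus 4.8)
The plan is to mirror the proof of Corollary~\ref{cor:augment}: the statement should follow almost immediately from Theorem~\ref{thm:bippconsec}. Write $M=\{e'=e_1,e_2,\ldots,e_t\}$ with $l(e_1)<l(e_2)<\cdots<l(e_t)$ (this is the labelling witnessing that $M$ starts with $e'$), and set $e_0=e$. Since $l(e_0)=l(e)<l(e')=l(e_1)\le l(e_i)$ for every $i$, the intended conclusion is exactly that $\{e_0,e_1,\ldots,e_t\}$ is a uniquely restricted matching, and it will automatically start with $e_0=e$.

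The first thing to check is that $\{e\}\cup M$ is actually a matching and that its edges, ordered by left endpoint, are $e_0,e_1,\ldots,e_t$. Because $\{e,e'\}$ is a uniquely restricted matching with $l(e)<l(e')$, Lemma~\ref{lem:bippurmpair} rules out the case $r(e')<r(e)$, so in fact $r(e)<r(e')$; similarly, applying Lemma~\ref{lem:bippurmpair} to the (uniquely restricted) pairs $\{e_i,e_j\}\subseteq M$ gives $r(e_1)<r(e_2)<\cdots<r(e_t)$. Hence $l(e_0)<l(e_1)<\cdots<l(e_t)$ and $r(e_0)<r(e_1)<\cdots<r(e_t)$; moreover, no $l(e_i)$ equals any $r(e_j)$ since one is a left-vertex and the other a right-vertex. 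Thus the $2(t+1)$ endpoints involved are pairwise distinct, so $\{e\}\cup M$ is a matching whose left-endpoint order is $e_0,e_1,\ldots,e_t$.

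Next, I would verify the hypothesis of the ``if'' direction of Theorem~\ref{thm:bippconsec}, namely that $\{e_i,e_{i+1}\}$ is a uniquely restricted matching for every $i\in\{0,1,\ldots,t-1\}$. For $i=0$ this is the hypothesis $\{e,e'\}=\{e_0,e_1\}$. For $i\ge 1$, apply the ``only if'' direction of Theorem~\ref{thm:bippconsec} to the uniquely restricted matching $M$: this gives that $\{e_i,e_{i+1}\}$ is uniquely restricted for each $i\in\{1,\ldots,t-1\}$. Feeding all of this back into the ``if'' direction of Theorem~\ref{thm:bippconsec} yields that $\{e_0,e_1,\ldots,e_t\}=\{e\}\cup M$ is a uniquely restricted matching, and since $l(e)<l(e_i)$ for all $i\ge 1$, it starts with $e$.

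I do not expect any genuine obstacle here; the only care needed is the bookkeeping in the second paragraph (disjointness of endpoints and the left-endpoint ordering), which is where one must invoke the left-vertex/right-vertex dichotomy together with Lemma~\ref{lem:bippurmpair} to exclude the ``nesting'' configuration $r(e')<r(e)$.
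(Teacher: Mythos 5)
Your proposal is correct and follows essentially the same route as the paper: establish that $\{e\}\cup M$ is a matching ordered by left endpoints via Lemma~\ref{lem:bippurmpair}\ref{it:r2r1} and the left-vertex/right-vertex dichotomy, then invoke Theorem~\ref{thm:bippconsec} on consecutive pairs. The only difference is that you spell out the final application of Theorem~\ref{thm:bippconsec} in both directions, which the paper leaves implicit.
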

\begin{proof}
We shall first show that $\{e\}\cup M$ is a matching. As $\{e,e'\}$ is a uniquely restricted matching, it follows from Lemma~\ref{lem:bippurmpair}\ref{it:r2r1} that $r(e)<r(e')$. For any edge $e''\in M\setminus\{e'\}$, we have $l(e')<l(e'')$ as $M$ starts with $e'$ and therefore, from Lemma~\ref{lem:bippurmpair}\ref{it:r2r1} and the fact that $\{e',e''\}$ is a uniquely restricted matching, we have $r(e')<r(e'')$. This tells us that for every edge $e''\in M$, $r(e)<r(e'')$. Note that we also have $l(e)<l(e'')$ for every edge $e''\in M$. Then, $l(e)$ and $r(e)$ are distinct from $l(e'')$ and $r(e'')$ for any edge $e''\in M$ (recall that no vertex can be both a left-vertex and a right-vertex). This leads us to the conclusion that $\{e\}\cup M$ is a matching. The proof of the corollary now follows directly from Theorem~\ref{thm:bippconsec}.
\end{proof}

From here onwards, we assume that $G$ is a bipartite permutation graph with no isolated vertices and having a transitive vertex ordering~$<$.
Let $V(G)=\{v_1,v_2,\ldots,v_n\}$ where $v_1<v_2<\cdots<v_n$.
For a vertex $u\in V(G)$, as in Section~\ref{sec:propint}, we define $\lambda(u)=\min_<\{v\in N(u)\cup\{u\}\}$ and $\rho(u)=\max_<\{v\in N(u)\cup\{u\}\}$. For each left-vertex $u$, we further define $\gamma(u)=\min_<\{v\in N(u)\colon u<v\}$.

For every edge $e\in E(G)$, we now define a pair of edges $x(e)$ and $y(e)$ as follows.
Let $e\in E(G)$ such that $\rho(l(e))=v_i$. Then, we define $x(e)=\lambda(u)u$, where
$$u=\left\{\begin{array}{cl}v_{i+1}&\mbox{if }v_{i+1}\mbox{ is a right-vertex}\\\gamma(v_{i+1})&\mbox{otherwise}\end{array}\right.$$
It is easy to see that $x(e)$ does not exist if and only if $\rho(l(e))=v_n$ (recall that $G$ has no isolated vertices).
We define $y(e)=u\gamma(u)$ where $u=\min_<\{v\in V(G)\colon v>r(e)$ and $v$ is a left-vertex$\}$. Clearly, $y(e)$ does not exist if and only if either $r(e)=v_n$ or if every $v>r(e)$ is a right-vertex (again, recall that $G$ has no isolated vertices). 

\begin{observation}\label{obs:exyurm}
Let $e\in E(G)$.
\begin{myenumeratea}
\item\label{it:exurm} If $x(e)$ exists, then $l(e)<l(x(e))$ and $\{e,x(e)\}$ is a uniquely restricted matching in $G$.
\item\label{it:eyurm} If $y(e)$ exists, then $l(e)<l(y(e))$ and $\{e,y(e)\}$ is a uniquely restricted matching in $G$.
\end{myenumeratea}
\end{observation}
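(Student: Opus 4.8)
The plan is to treat the two parts separately; part~\ref{it:eyurm} is almost immediate, while part~\ref{it:exurm} needs a short case analysis, exactly mirroring the proof of Observation~\ref{obs:esuccurm} from Section~\ref{sec:propint}.

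For part~\ref{it:exurm}, I would begin by unpacking the definition of $x(e)$. Write $v_i=\rho(l(e))$; since $x(e)$ exists we have $v_i\neq v_n$, so $v_{i+1}$ is defined, and I would first observe that in \emph{both} branches of the definition the vertex $u$ appearing in $x(e)=\lambda(u)u$ is a right-vertex with $u>v_i$: if $v_{i+1}$ is a right-vertex then $u=v_{i+1}>v_i$, and otherwise $u=\gamma(v_{i+1})$ is a right-neighbour of the left-vertex $v_{i+1}$, so $u>v_{i+1}>v_i$. Since $u$ is a non-isolated right-vertex, $\lambda(u)<u$ and $\lambda(u)u\in E(G)$, hence $l(x(e))=\lambda(u)$ and $r(x(e))=u$. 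The first real step is to prove $l(e)<\lambda(u)$. I would argue by contradiction: note $l(e)<r(e)\le\rho(l(e))=v_i<u$, so if $\lambda(u)\le l(e)$ then either $\lambda(u)=l(e)$, in which case $l(e)u\in E(G)$ and hence $u\le\rho(l(e))=v_i$, or $\lambda(u)<l(e)<u$, in which case the left-vertex $l(e)$ is underneath the edge $\lambda(u)u$ and therefore adjacent to its right endpoint $u$ (by the fact noted earlier that a left-vertex underneath an edge is adjacent to that edge's right endpoint), again forcing $u\le\rho(l(e))=v_i$; both contradict $u>v_i$. Hence $l(e)<l(x(e))$.

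It then remains to show $\{e,x(e)\}$ is uniquely restricted, and here I would split on the position of $r(e)$. We have $r(e)\le\rho(l(e))=v_i<u=r(x(e))$, so $r(e)<r(x(e))$. If $r(e)<l(x(e))$, then Lemma~\ref{lem:bippurmpair}\ref{it:r1l2} gives the conclusion at once. Otherwise, since the right-vertex $r(e)$ cannot coincide with the left-vertex $l(x(e))$, we have $l(x(e))<r(e)<r(x(e))$, so by Lemma~\ref{lem:bippurmpair}\ref{it:l2r1r2} it is enough to verify that $l(e)\,r(x(e))\notin E(G)$; but $r(x(e))=u>v_i=\rho(l(e))$, so $u\notin N(l(e))$ and this non-edge holds. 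Thus $\{e,x(e)\}$ is uniquely restricted in either case.

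Part~\ref{it:eyurm} is short. With $u=\min_<\{v\in V(G)\colon v>r(e)$ and $v$ is a left-vertex$\}$, we have $y(e)=u\gamma(u)$ where $u$ is a left-vertex, $\gamma(u)$ a right-vertex and $u<\gamma(u)$, so $l(y(e))=u$ and $r(y(e))=\gamma(u)$. Since $l(e)<r(e)<u=l(y(e))$ we immediately get $l(e)<l(y(e))$, and, again because $l(e)<r(e)<l(y(e))$, Lemma~\ref{lem:bippurmpair}\ref{it:r1l2} directly yields that $\{e,y(e)\}$ is uniquely restricted. I expect the main (mild) obstacle to be the bookkeeping in part~\ref{it:exurm}: recognising that both definitional branches for $u$ behave uniformly (each producing a right-vertex strictly to the right of $\rho(l(e))$), and then choosing the correct clause of Lemma~\ref{lem:bippurmpair} according to where $r(e)$ sits relative to $l(x(e))$.
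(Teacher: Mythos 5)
Your proposal is correct and takes essentially the same route as the paper's own proof: both parts hinge on the observation that $r(x(e))>\rho(l(e))$ (so $l(e)r(x(e))\notin E(G)$), the ``left-vertex underneath an edge'' fact to rule out $l(x(e))\leq l(e)$, and then the appropriate clause of Lemma~\ref{lem:bippurmpair} depending on whether $r(e)<l(x(e))$ or $l(x(e))<r(e)<r(x(e))$, with part~\ref{it:eyurm} following immediately from $r(e)<l(y(e))$ and Lemma~\ref{lem:bippurmpair}\ref{it:r1l2}. The only difference is that you unpack the two definitional branches for $u$ explicitly, which the paper leaves implicit.
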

\begin{proof}
We shall first prove~\ref{it:exurm}. By definition of $x(e)$, we know that $\rho(l(e))<r(x(e))$. As $r(e)\leq\rho(l(e))$, this implies that $l(e)<r(e)<r(x(e))$ and that $l(e)r(x(e))\notin E(G)$. As $l(x(e))r(x(e))\in E(G)$, this means that $l(x(e))\neq l(e)$. If $l(x(e))<l(e)$, then as we have $l(x(e))<l(e)<r(x(e))$, the left-vertex $l(e)$ underneath the edge $l(x(e))r(x(e))$ has to be adjacent to $r(x(e))$, contradicting our previous observation. Therefore, we can conclude that $l(e)<l(x(e))$. If $r(e)<l(x(e))$, then by Lemma~\ref{lem:bippurmpair}\ref{it:r1l2}, we have that $\{e,x(e)\}$ is a uniquely restricted matching in $G$, and thus we are done. So let us assume that $l(x(e))<r(e)$ (note that they cannot be equal as one is left-vertex and the other a right-vertex). We now have the inequality $l(e)<l(x(e))<r(e)<r(x(e))$. Now since $l(e)r(x(e))\notin E(G)$, Lemma~\ref{lem:bippurmpair}\ref{it:l2r1r2} can be used to conclude that $\{e,x(e)\}$ is a uniquely restricted matching in $G$. This completes the proof of~\ref{it:exurm}.

Next, we shall prove~\ref{it:eyurm}. From the definition of $y(e)$, it is clear that $r(e)<l(y(e))$ and therefore $l(e)<l(y(e))$. Furthermore, from Lemma~\ref{lem:bippurmpair}\ref{it:r1l2}, we get that $\{e,y(e)\}$ is a uniquely restricted matching in $G$, thus proving~\ref{it:eyurm}.
\end{proof}
\medskip

We shall now define a set $U(e)$ of edges for every edge $e\in E(G$) as follows.
$$U(e)=\left\{\begin{array}{cl}\{e\}&\mbox{if neither }x(e)\mbox{ nor }y(e)\mbox{ exists}\\\{e\}\cup U(x(e))&\mbox{if }y(e)\mbox{ does not exist, or if both exist and }|U(x(e))|\geq |U(y(e))|\\\{e\}\cup U(y(e))&\mbox{if }x(e)\mbox{ does not exist, or if both exist and }|U(x(e))|<|U(y(e))|\end{array}\right.$$
From Observation~\ref{obs:exyurm}, we know that $l(e)<l(x(e))$ and $l(e)<l(y(e))$, which implies that $U(e)$ is well-defined. The next two lemmas will show that $U(e)$ is always a uniquely restricted matching starting with $e$ and that it has the maximum possible cardinality among all the uniquely restricted matchings starting with $e$ in $G$.

\begin{lemma}\label{lem:bipUisurm}
For any edge $e\in E(G)$, $U(e)$ is a uniquely restricted matching starting with $e$ in $G$.
\end{lemma}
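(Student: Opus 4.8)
The plan is to prove Lemma~\ref{lem:bipUisurm} by induction on $|U(e)|$, mirroring exactly the structure of the proof of Lemma~\ref{lem:Uisurm} in the proper interval case. The base case is $|U(e)| = 1$, which by the definition of $U(e)$ forces $U(e) = \{e\}$; a single edge is trivially a uniquely restricted matching, and it vacuously starts with $e$, so the statement holds.

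For the inductive step I would assume $|U(e)| > 1$ and that the lemma holds for every edge $e'$ with $|U(e')| < |U(e)|$. By the definition of $U(e)$, since $|U(e)| > 1$ we are in one of two cases: either (a) $x(e)$ exists and $U(e) = \{e\} \cup U(x(e))$, or (b) $y(e)$ exists and $U(e) = \{e\} \cup U(y(e))$. In case (a), we have $|U(x(e))| = |U(e)| - 1 < |U(e)|$, so by the induction hypothesis $U(x(e))$ is a uniquely restricted matching starting with $x(e)$. Observation~\ref{obs:exyurm}\ref{it:exurm} gives that $l(e) < l(x(e))$ and that $\{e, x(e)\}$ is a uniquely restricted matching in $G$. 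These are precisely the hypotheses of Corollary~\ref{cor:bippaugment} applied with the edge $e$ and the uniquely restricted matching $U(x(e))$ (which starts with $x(e)$), so we conclude that $\{e\} \cup U(x(e)) = U(e)$ is a uniquely restricted matching in $G$ starting with $e$. Case (b) is identical with $y(e)$ in place of $x(e)$, using Observation~\ref{obs:exyurm}\ref{it:eyurm} instead.

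There is no real obstacle here: the heavy lifting has already been done in Observation~\ref{obs:exyurm} (which establishes the $l(e) < l(x(e))$ / $l(e) < l(y(e))$ orderings and the pairwise uniquely-restricted property needed to invoke the augmentation corollary) and in Corollary~\ref{cor:bippaugment} (which packages the augmentation step, including the verification that prepending $e$ yields a matching). The only point requiring a line of care is making sure the induction is well-founded — i.e., that in both cases the recursive term has strictly smaller cardinality — but this is immediate from the definition of $U(e)$, which always adds the single new edge $e$ to one of $U(x(e))$ or $U(y(e))$, and from the well-definedness remark preceding the lemma. So the proof is a short, direct transcription of the argument for Lemma~\ref{lem:Uisurm}.
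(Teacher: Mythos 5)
Your proof is correct and follows essentially the same route as the paper: induction on $|U(e)|$, with the inductive step handled by combining Observation~\ref{obs:exyurm} (which supplies $l(e)<l(x(e))$, respectively $l(e)<l(y(e))$, and the pairwise uniquely restricted property) with Corollary~\ref{cor:bippaugment}. No gaps.
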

\begin{proof}
We shall prove this by induction on $|U(e)|$. If $|U(e)|=1$, then it must be the case that $U(e)=\{e\}$. In this case, the statement of the lemma is clearly true. Now let us assume that $|U(e)|>1$ and that the statement of the lemma is true for all $e'\in E(G)$ such that $|U(e')|<|U(e)|$. As $|U(e)|>1$, we know that at least one of $x(e),y(e)$ exists. From the definition of $U(e)$, it can be seen there are only two possibilities: $x(e)$ exists and $U(e)=\{e\}\cup U(x(e))$, or $y(e)$ exists and $U(e)=\{e\}\cup U(y(e))$. From the induction hypothesis, $U(x(e))$ is a uniquely restricted matching starting with $x(e)$ and $U(y(e))$ is a uniquely restricted matching starting with $y(e)$. It now follows from Observation~\ref{obs:exyurm} and Corollary~\ref{cor:bippaugment} that $U(e)$ is a uniquely restricted matching starting with $e$.
\end{proof}

\begin{lemma}\label{lem:bipalgomax}
Let $M$ be a uniquely restricted matching starting with $e$ in $G$. Then $|M|\leq |U(e)|$.
\end{lemma}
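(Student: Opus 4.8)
The plan is to prove this by induction on $k = |U(e)|$, closely mirroring the proof of Lemma~\ref{lem:algomax} with the roles of $\sigma_l,\sigma_r$ played by $x,y$. For the base case $k=1$ we have $U(e)=\{e\}$, so neither $x(e)$ nor $y(e)$ exists; by the characterizations of when these successor edges fail to exist, this forces $\rho(l(e))=v_n$ and also that every vertex larger than $r(e)$ is a right-vertex (the alternative $r(e)=v_n$ being impossible once we assume $|M|>1$). If $|M|>1$, pick $e'\in M$ with $e'\neq e$; since $M$ starts with $e$ and $\{e,e'\}$ is uniquely restricted, Lemma~\ref{lem:bippurmpair}\ref{it:r2r1} forces $l(e)<l(e')$ and $r(e)<r(e')$. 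As $l(e')$ is a left-vertex it cannot lie beyond $r(e)$, so $l(e')<r(e)$; and since $\rho(l(e))=v_n$, the right-vertex $r(e')$ is a neighbour of $l(e)$. Now Lemma~\ref{lem:bippurmpair}\ref{it:l2r1r2} applied to $e,e'$ shows $\{e,e'\}$ is not uniquely restricted, contradicting $\{e,e'\}\subseteq M$; hence $|M|\le 1=|U(e)|$.

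For the inductive step, assume $k>1$ and that the statement holds for every edge with a smaller $U$-value, and suppose for contradiction that some uniquely restricted matching $M$ starting with $e$ has $|M|>k$. Order the edges of $U(e)$ and of $M$ by their $l$-coordinate, say $U(e)=\{e=e_1,\ldots,e_k\}$ and $M=\{e=e'_1,e'_2,\ldots,e'_{|M|}\}$; since $M$ is uniquely restricted and $U(e)$ is too (Lemma~\ref{lem:bipUisurm}), consecutive pairs are uniquely restricted, so Lemma~\ref{lem:bippurmpair}\ref{it:r2r1} shows the $r$-coordinates increase as well, and in particular $e'_1,e'_2,e'_3$ exist (as $|M|>k\ge 2$) with $l(e'_1)<l(e'_2)<l(e'_3)$ and $r(e'_1)<r(e'_2)<r(e'_3)$. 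The first task is to establish the analogue of ``Claim~1'' from Lemma~\ref{lem:algomax}: for every $f\in\{x(e),y(e)\}$ that exists, either $l(f)>l(e'_2)$ or $r(f)>r(e'_2)$. If not, then $l(f)\le l(e'_2)$ and $r(f)\le r(e'_2)$, so Lemma~\ref{lem:bipe1e2e3} applied to $f,e'_2,e'_3$ forces $\{f,e'_3\}$ to be uniquely restricted (since $\{e'_2,e'_3\}\subseteq M$ is); as $l(f)\le l(e'_2)<l(e'_3)$, Corollary~\ref{cor:bippaugment} then produces a uniquely restricted matching $M'=\{f,e'_3,e'_4,\ldots,e'_{|M|}\}$ starting with $f$ with $|M'|=|M|-1$. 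Since $|U(f)|\le|U(e)|-1=k-1$ by the recursive definition of $U(e)$, the induction hypothesis gives $|M'|\le|U(f)|\le k-1$, contradicting $|M|>k$.

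It then remains to exhibit an existing $f\in\{x(e),y(e)\}$ with $l(f)\le l(e'_2)$ and $r(f)\le r(e'_2)$, contradicting the claim just proved. Since $\{e,e'_2\}$ is uniquely restricted with $l(e)<l(e'_2)$, Lemma~\ref{lem:bippurmpair} leaves exactly two possibilities: (i) $r(e)<l(e'_2)$, or (ii) $l(e'_2)<r(e)<r(e'_2)$ together with $l(e)r(e'_2)\notin E(G)$. In case~(i), the left-vertex $l(e'_2)$ exceeds $r(e)$, so $y(e)$ exists and $l(y(e))\le l(e'_2)$; and $r(y(e))\le r(e'_2)$, for otherwise $r(e'_2)$ would be a right-vertex lying underneath the edge $y(e)$, hence adjacent to $l(y(e))$, contradicting that $r(y(e))$ is the smallest neighbour of $l(y(e))$ exceeding $l(y(e))$. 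In case~(ii), $l(e)r(e'_2)\notin E(G)$ yields $\rho(l(e))<r(e'_2)$, so $x(e)$ exists; writing $v_i=\rho(l(e))$ we get $l(e'_2)<r(e)\le v_i<v_{i+1}\le r(e'_2)$, so $v_{i+1}$ lies underneath $e'_2$, and a short case split on whether $v_{i+1}$ is a right-vertex or a left-vertex — using repeatedly that a vertex underneath an edge is adjacent to the appropriate endpoint — shows in both subcases that $l(x(e))\le l(e'_2)$ and $r(x(e))\le r(e'_2)$. Either way we contradict the claim, so no such $M$ exists and $|M|\le|U(e)|$.

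The main obstacle is the last paragraph. Unlike the proper interval setting, where Lemma~\ref{lem:unitintdisjurm} forces uniquely restricted pairs to be disjoint on the line, here $\{e,e'_2\}$ may overlap; this is precisely why both the ``left successor'' $x(e)$ and the ``right successor'' $y(e)$ are needed, and why one must carefully track which vertices are left-vertices and which are right-vertices, leaning throughout on the basic fact established earlier that a left-vertex (resp.\ right-vertex) underneath an edge is adjacent to the edge's right (resp.\ left) endpoint.
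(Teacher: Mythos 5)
Your proposal is correct and follows essentially the same route as the paper's proof: induction on $|U(e)|$, the identical ``Claim~1'' via Lemma~\ref{lem:bipe1e2e3}, Corollary~\ref{cor:bippaugment} and the recursive bound $|U(f)|\leq k-1$, and the same concluding case split on whether $r(e)<l(e'_2)$ or $l(e'_2)<r(e)<r(e'_2)$, contradicting Claim~1 with $y(e)$ or $x(e)$ respectively. The only difference is presentational: your base case runs the implications in the reverse order from the paper's, and your case~(ii) is left as a sketch (the split on whether $v_{i+1}$ is a left- or right-vertex), but the steps you indicate are exactly those the paper carries out in full.
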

\begin{proof}
We will use induction on $|U(e)|=k$ to prove this. Suppose first that $k=1$. Then $U(e)=\{e\}$, which can be the case only if neither $x(e)$ nor $y(e)$ exist. As $x(e)$ does not exist, we have $\rho(l(e))=v_n$. In this case, for an edge $e'\neq e$ in $M$, we must have $r(e')$ underneath the edge $l(e)\rho(l(e))$, implying that $l(e)r(e')\in E(G)$. As $\{e,e'\}$ is a uniquely restricted matching, from Lemmas~\ref{lem:bippurmpair}\ref{it:l2r1r2} and~\ref{lem:bippurmpair}\ref{it:r2r1}, we have $r(e)<l(e')$. This tells us that $r(e)\neq v_n$ and that there exists a left-vertex $l(e')>r(e)$. But this contradicts the fact that $y(e)$ does not exist. We can therefore conclude that $e'$ does not exist, or in other words, $M=\{e\}$, thereby proving the statement of the lemma. We shall now assume that $k>1$ and that for any edge $e'\in E(G)$ such that $|U(e')|<k$, the statement of the lemma is true.

Let us assume for the sake of contradiction that $|M|>k$.
From Lemma~\ref{lem:bipUisurm}, we know that $U(e)$ is a uniquely restricted matching starting with $e$ in $G$. Let $U(e)=\{e=e_1,e_2,\ldots,e_k\}$ such that $l(e_1)<l(e_2)<\cdots<l(e_k)$. Let $M=\{e=e'_1,e'_2,\ldots,e'_{|M|}\}$ such that $l(e'_1)<l(e'_2)<\cdots<l(e'_{|M|})$. Note that as $k>1$, we have $|M|>2$, implying that the edges $e'_1$, $e'_2$ and $e'_3$ definitely exist.
\medskip

\noindent\textit{Claim 1.} Let $f\in\{x(e),y(e)\}$. Then either $l(f)>l(e'_2)$ or $r(f)>r(e'_2)$.

Suppose that for some $f\in\{x(e),y(e)\}$, we have $l(f)\leq l(e'_2)$ and $r(f)\leq r(e'_2)$. Then we have $l(f)\leq l(e'_2)<l(e'_3)$ and $r(f)\leq r(e'_2)<r(e'_3)$. Applying Lemma~\ref{lem:bipe1e2e3} to the edges $f$, $e'_2$ and $e'_3$, we now get that $\{f,e'_3\}$ is a uniquely restricted matching. As $\{e'_3,e'_4,\ldots,e'_{|M|}\}$ is a uniquely restricted matching starting with $e'_3$, we can now use Corollary~\ref{cor:bippaugment} to conclude that $M'=\{f,e'_3,e'_4,\ldots,e'_{|M|}\}$ is a uniquely restricted matching in $G$ starting with $f$. From the definition of $U(e)$, we can see that $|U(e)|\geq |U(f)|+1$. Therefore, we have $|U(f)|\leq k-1$. Then, by applying the induction hypothesis on $f$ and $M'$, we have $|M'|\leq |U(f)|$. As $|M'|=|M|-1$, this means that $|M|-1\leq |U(f)|\leq k-1$. We thus have $|M|\leq k$, which is a contradiction to our assumption that $|M|>k$. This proves the claim.
\medskip

Suppose that $l(e'_2)<r(e)$. Since $\{e,e'_2\}$ is a uniquely restricted matching, we have from Lemma~\ref{lem:bippurmpair}\ref{it:r2r1} that $r(e)<r(e'_2)$ and then further from Lemma~\ref{lem:bippurmpair}\ref{it:l2r1r2} that $l(e)r(e'_2)\notin E(G)$. This means that $r(e'_2)$ cannot be underneath the edge $l(e)\rho(l(e))$, which implies that $\rho(l(e))<r(e'_2)$. Let $\rho(l(e))=v_i$. Then, $v_{i+1}\leq r(e'_2)$. Now suppose that $r(x(e))>r(e'_2)$. Then clearly, $r(x(e))\neq v_{i+1}$. This can only mean that $v_{i+1}$ is a left-vertex, which implies that $v_{i+1}<r(e'_2)$, and that $r(x(e))=\gamma(v_{i+1})$, which implies that $\gamma(v_{i+1})>r(e'_2)$ (as we have assumed that $r(x(e))>r(e'_2)$).
Since we now have $v_{i+1}<r(e'_2)<\gamma(v_{i+1})$, the vertex $r(e'_2)$ is a right-vertex underneath the edge $v_{i+1}\gamma(v_{i+1})$, implying that $v_{i+1}r(e'_2)\in E(G)$. But this contradicts our choice of $\gamma(v_{i+1})$. Therefore, we can conclude that $r(x(e))\leq r(e'_2)$. It is easy to see that by the definition of $x(e)$, we always have $r(e)\leq\rho(l(e))<r(x(e))$. As $l(e'_2)<r(e)$, we now have $l(e'_2)<r(x(e))\leq r(e'_2)$. Then, $r(x(e))$ is a right-vertex underneath the edge $e'_2$, which means that $l(e'_2)r(x(e))\in E(G)$. Since by definition of $x(e)$, we have $l(x(e))=\lambda(r(x(e)))$, this implies that $l(x(e))\leq l(e'_2)$. Since we also have $r(x(e))\leq r(e'_2)$, we now have a contradiction to Claim~1.

The only remaining case is when $r(e)<l(e'_2)$. As $l(e'_2)$ is a left-vertex that comes after $r(e)$ in the ordering $<$, we know that $y(e)$ exists and that $l(y(e))\leq l(e'_2)$. It can be seen from the definition of $y(e)$ that $r(y(e))=\gamma(l(y(e)))$. If $r(y(e))>r(e'_2)$, then we have $l(y(e))\leq l(e'_2)<r(e'_2)<r(y(e))$, and therefore $r(e'_2)$ is a right-vertex underneath the edge $y(e)$, which implies that $l(y(e))r(e'_2)\in E(G)$. But this contradicts the earlier observation that $r(y(e))=\gamma(l(y(e)))$. We can therefore conclude that $r(y(e))\leq r(e'_2)$. Recalling that $l(y(e))\leq l(e'_2)$, we now have a contradiction to Claim~1. This completes the proof.
\end{proof}

\begin{remark}\label{rem:v1gammav1}
$U(v_1\gamma(v_1))$ is a uniquely restricted matching of maximum cardinality in $G$.
\end{remark}
\begin{proof}
From Lemma~\ref{lem:bipUisurm}, it is clear that $U(v_1\gamma(v_1))$ is a uniquely restricted matching starting with $v_1\gamma(v_1)$ in $G$. Let $\{e_1,e_2,\ldots,e_k\}$ be any uniquely restricted matching in $G$ where $l(e_1)<l(e_2)<\cdots<l(e_k)$. Clearly, $v_1\leq l(e_1)<l(e_2)$. As $\{e_1,e_2\}$ is a uniquely restricted matching, we know from Lemma~\ref{lem:bippurmpair}\ref{it:r2r1} that $r(e_1)<r(e_2)$. If $r(e_1)<\gamma(v_1)$, then $r(e_1)$ is a right-vertex underneath the edge $v_1\gamma(v_1)$, implying that $v_1r(e_1)\in E(G)$. But this would be a contradiction to the choice of $\gamma(v_1)$. It must therefore be the case that $\gamma(v_1)\leq r(e_1)$. We now have $v_1\leq l(e_1)<l(e_2)$ and $\gamma(v_1)\leq r(e_1)<r(e_2)$. We can now apply Lemma~\ref{lem:bipe1e2e3} to the edges $v_1\gamma(v_1)$, $e_1$ and $e_2$ to conclude that $\{v_1\gamma(v_1),e_2\}$ is a uniquely restricted matching in $G$. By Corollary~\ref{cor:bippaugment}, we now have that $\{v_1\gamma(v_1),e_2,e_3,\ldots,e_k\}$ is a uniquely restricted matching in $G$ starting with $v_1\gamma(v_1)$. As the cardinality of this matching is $k$, we have by Lemma~\ref{lem:bipalgomax} that $|U(v_1\gamma(v_1))|\geq k$.
\end{proof}

\begin{theorem}
There is a linear-time algorithm that given a bipartite permutation graph as input, computes a maximum cardinality uniquely restricted matching in it.
\end{theorem}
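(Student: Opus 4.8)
The plan is to mirror the development of Theorem~\ref{thm:propintmaxurm} for proper interval graphs almost verbatim. The edges $x(e)$ and $y(e)$ and the set $U(e)$ defined above already play exactly the roles of $\sigma_l(e)$, $\sigma_r(e)$ and $U(e)$ from Section~\ref{sec:propint}, Observation~\ref{obs:exyurm} is the analogue of Observation~\ref{obs:esuccurm}, Lemmas~\ref{lem:bipUisurm} and~\ref{lem:bipalgomax} are the analogues of Lemmas~\ref{lem:Uisurm} and~\ref{lem:algomax}, and Remark~\ref{rem:v1gammav1} is the analogue of Remark~\ref{rem:v1v2}. So the algorithm is simply: compute $U(v_1\gamma(v_1))$ by a memoized recursion and return it. Correctness is immediate from Remark~\ref{rem:v1gammav1} (note that $v_1$ is a left-vertex, since $G$ has no isolated vertices, so the edge $v_1\gamma(v_1)$ exists), and the entire content of the proof is the $O(n+m)$ running-time bound.

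First I would carry out the easy reductions and the preprocessing. Isolated vertices lie in no matching, so we may delete them and assume the standing hypotheses of Section~\ref{sec:bipperm} hold; a transitive vertex ordering $<$ of $G$ can be produced in $O(n+m)$ time (e.g.\ \cite{HellHuang}), and we store the position of each vertex as an integer in $[1,n]$. A single forward pass over the adjacency lists then yields, in $O(n+m)$ time: $\lambda(u)$ and $\rho(u)$ for every vertex $u$; $l(e)$ and $r(e)$ for every edge $e$; the partition of $V(G)$ into left-vertices and right-vertices (a vertex is a left-vertex iff it has a neighbour after it); and $\gamma(u)$ for every left-vertex $u$. A single backward pass over $v_n,v_{n-1},\ldots,v_1$ records, for each position $i$, the least left-vertex whose position exceeds $i$ (if any). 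With these tables available, $x(e)$ and $y(e)$ can each be evaluated in $O(1)$ time straight from their definitions: for $x(e)$ one looks at $v_{i+1}$ where $v_i=\rho(l(e))$ and branches on whether $v_{i+1}$ is a right-vertex (using $\gamma(v_{i+1})$ otherwise), and for $y(e)$ one reads off the next left-vertex strictly after $r(e)$ from the backward-pass table.

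Next I would give the recursion, exactly as in the proof of Theorem~\ref{thm:propintmaxurm}. For every edge $e$ keep a record, initialised to ``undefined'', that will hold $|U(e)|$ together with a pointer to whichever of $x(e)$, $y(e)$ is selected in the definition of $U(e)$; this is enough to reconstruct $U(e)$ as an explicit edge set by following pointers once at the end. A procedure $\textsc{ComputeU}(e)$ first, if the record of $x(e)$ (resp.\ $y(e)$) is still undefined and that edge exists, calls $\textsc{ComputeU}$ on it, and then fills in $|U(e)|$ and the successor pointer according to the three cases of the definition of $U(e)$. Since $\textsc{ComputeU}(e)$ is invoked only when the record of $e$ is undefined, it is entered at most once per edge, and it does $O(1)$ work outside its recursive calls, so the recursion completes in $O(m)$ time in total; termination (well-definedness) is guaranteed by Observation~\ref{obs:exyurm}, which shows $l(e)<l(x(e))$ and $l(e)<l(y(e))$. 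The main routine calls $\textsc{ComputeU}(v_1\gamma(v_1))$, reconstructs $U(v_1\gamma(v_1))$ by following pointers, and returns it; by Remark~\ref{rem:v1gammav1} this is a maximum cardinality uniquely restricted matching in $G$. Adding the costs of the reductions, the ordering computation, the two preprocessing passes and the recursion, the overall running time is $O(n+m)$.

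The only step needing genuine (if minor) care is the constant-time evaluation of $x(e)$ and $y(e)$ after linear preprocessing — in particular the backward scan underlying $y(e)$ and the case split underlying $x(e)$ — together with the bookkeeping that lets $U(e)$ be shared as a pointer chain rather than copied, so that $\textsc{ComputeU}$ really does only $O(1)$ work per edge. I expect this to be the main obstacle; everything else is a direct transcription of the proper interval argument, with Lemma~\ref{lem:bippurmpair}, Lemma~\ref{lem:bipe1e2e3}, Corollary~\ref{cor:bippaugment}, Lemmas~\ref{lem:bipUisurm} and~\ref{lem:bipalgomax}, and Remark~\ref{rem:v1gammav1} substituting for their Section~\ref{sec:propint} counterparts.
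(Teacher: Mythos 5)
Your proposal is correct and follows essentially the same route as the paper: remove isolated vertices, compute a transitive vertex ordering, precompute $\lambda$, $\rho$, $\gamma$, the left/right-vertex classification and the backward-scan table (the paper's $\nu$) so that $x(e)$ and $y(e)$ are available in $O(1)$ time, then run the memoized recursion for $U$ and return $U(v_1\gamma(v_1))$, with correctness from Lemmas~\ref{lem:bipUisurm} and~\ref{lem:bipalgomax} and Remark~\ref{rem:v1gammav1}. The pointer-chain representation of $U(e)$ is a sensible implementation detail the paper leaves implicit.
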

\begin{proof}
We can construct a linear-time algorithm along the lines of the proof of Theorem~\ref{thm:propintmaxurm}. We first remove isolated vertices from $G$ and then use one of the known linear-time algorithms to generate a transitive vertex ordering $<$ of $V(G)$ (for example,~\cite{HellHuang}). In a single pass through the adjacency list that takes time $O(n+m)$, every vertex in $G$ can be marked as a left-vertex or right-vertex and the values $\lambda(u)$, $\rho(u)$ for each vertex $u\in V(G)$ and the value $\gamma(u)$ for each left-vertex $u$ can be computed. The algorithm further computes for all $u\in V(G)$, a value $\nu(u)=\min_<\{v\in V(G)\colon v>u$ and $v$ is a left-vertex$\}$ using a single pass in the backward direction through the vertex ordering $<$, taking $O(n)$ time. It is not hard to see that once this is done, the values $x(e)$ and $y(e)$ for an edge $e\in E(G)$ can be computed in $O(1)$ time. Then, a dynamic programming algorithm very similar to the one from the proof of Theorem~\ref{thm:propintmaxurm} can be used to compute $U(e)$ for every edge $e\in E(G)$, in $O(n+m)$ time. Finally, the algorithm returns $U(v_1\gamma(v_1))$. The correctness of the algorithm follows from Lemma~\ref{lem:bipUisurm}, Lemma~\ref{lem:bipalgomax} and Remark~\ref{rem:v1gammav1} and the algorithm clearly runs in $O(n+m)$ time.
\end{proof}

\section{Interval Graphs}\label{sec:int}
In this section, we present a polynomial-time algorithm that computes a maximum cardinality uniquely restricted matching in a given interval graph. Our approach will be to reduce the problem to the problem of finding a maximum cardinality ``strong independent set'' in an interval nest digraph, whose interval nest representation is given.
\subsection{The Strong Independent Set Problem in Interval Nest Digraphs}
For directed graphs, the term ``interval representation'' has a different meaning. Given a digraph $G$, a collection $\{(S_u,T_u)\}_{u\in V(G)}$ of pairs of intervals is said to be an interval representation of $G$ if for distinct $u,v\in V(G)$, we have $(u,v)\in E(G)$ if and only if $S_u\cap T_v\neq\emptyset$. The directed graphs which have interval representations are called \emph{interval digraphs}~\cite{Senetal}. If the collection $\{(S_u,T_u)\}_{u\in V(G)}$ has the property that $T_u\subseteq S_u$ for every $u\in V(G)$, then it is called an \emph{interval nest representation} of $G$. The digraphs that have interval nest representations are called \emph{interval nest digraphs}~\cite{Prisner}.

For a directed graph $G$, a set of vertices $S\subseteq V(G)$ is said to be a \emph{strong independent set} if for any two vertices $u,v\in S$, either $(u,v)\notin E(G)$ or $(v,u)\notin E(G)$.

In this section, we present a polynomial-time dynamic programming algorithm to compute a maximum cardinality strong independent set in an interval nest digraph $G$, whose interval nest representation is given.
Note that we can assume that the endpoints of all the intervals in the interval nest representation are distinct---otherwise, we can slightly perturb the endpoints of the intervals to obtain a new interval nest representation of the digraph in which this is true. Note also that we can assume every interval in the interval nest representation has integer endpoints. As there are four endpoints corresponding each vertex, we can assume that each endpoint in the representation is a unique integer in $[1,4|V(G)|]$. 

Let the interval nest representation of the input interval nest digraph $G$ be $\{(S_u,T_u)\}_{u\in V(G)}$. For a vertex $u\in V(G)$, let $S_u=[L_u,R_u]$ and $T_u=[l_u,r_u]$. Because of our assumptions about the interval nest representation, we have $L_u<l_u<r_u<R_u$.

For any vertex $x\in V(G)$, let $\eta(x)$ denote the vertex such that $l_x<l_{\eta(x)}$ but there does not exist any vertex $x'\in V(G)$ such that $l_x<l_{x'}<l_{\eta(x)}$. 

For vertices $u,v\in V(G)$, we define
$$X(u,v)=\left\{\begin{array}{cl}\{y\in V(G)\colon r_u<L_y<R_y<l_v\}&\mbox{when }r_u<l_v\mbox{ and }L_v<r_u,\\
\emptyset&\mbox{otherwise}\end{array}\right.$$

In addition, for $u,v,x\in V(G)$, define $$Y(u,v,x)=\left\{\begin{array}{cl}\{y\in X(u,v)\colon l_y\geq l_x\}&\mbox{when }r_u<l_x<l_v\\\emptyset&\mbox{otherwise}\end{array}\right.$$

Note that $X(u,v)=Y(u,v,\eta(u))$.
We shall now define our dynamic programming table $S$ in which there is an entry $S(u,v,x)\subseteq Y(u,v,x)$ for every triple of vertices $(u,v,x)\in V(G)^3$ ($=V(G)\times V(G)\times V(G)$).
Note that by our definition of $X(u,v)$ and $Y(u,v,x)$, the entry $S(u,v,x)$ corresponding to the triple $(u,v,x)$ will be $\emptyset$ if at least one of the conditions $r_u<l_v$, $L_v<r_u$ or $r_u<l_x<l_v$ is not true. We shall ensure that $S(u,v,x)$ is a strong independent set of maximum possible cardinality among all the strong independent sets that contain only vertices in $Y(u,v,x)$. In other words, $S(u,v,x)$ is a maximum cardinality strong independent set in the subdigraph induced in $G$ by $Y(u,v,x)$.

We give below the pseudocode for a procedure that computes $S(u,v,x)$, given $u,v,x\in V(G)$.
\medskip

\noindent\rule{\textwidth}{.25mm}\\\textbf{Procedure} ComputeS($u,v,x$)\vspace{-.1in}\\\rule{\textwidth}{.25mm}

\begin{enumerate}
\itemsep 0in
\item \textbf{if} $Y(u,v,x)=\emptyset$ \textbf{then}
\item \hspace{.2in} set $S(u,v,x)=\emptyset$
\item \hspace{.2in} \textbf{return}
\item\label{step:assign1} Set $T=S(u,v,\eta(x))$
\item \textbf{if} $x\in X(u,v)$ \textbf{then}
\item \hspace{.2in} Set $T'=\{x\}\cup S(x,v,\eta(x))$
\item\label{step:assign2} \hspace{.2in} \textbf{if} $|T'|>|T|$ \textbf{then} set $T=T'$
\item \hspace{.2in} Set $B=\{y\in Y(u,v,x)\colon L_y<r_x\mbox{ and }R_x<l_y\}$
\item \hspace{.2in} \textbf{for each} $y\in B$,
\item \hspace{.4in} Set $T'=\{x\}\cup S(x,y,\eta(x))\cup S(u,v,y)$
\item\label{step:assign3} \hspace{.4in} \textbf{if} $|T'|>|T|$ \textbf{then} set $T=T'$
\item Set $S(u,v,x)=T$
\end{enumerate}

\begin{lemma}\label{lem:issis}
For $(u,v,x)\in V(G)^3$, $S(u,v,x)$ is a subset of $Y(u,v,x)$ and is a strong independent set in $G$.
\end{lemma}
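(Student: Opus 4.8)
The plan is to prove this by induction on $|Y(u,v,x)|$, following the recursive structure of the procedure \textbf{ComputeS}. The base case is when $Y(u,v,x)=\emptyset$: then the procedure sets $S(u,v,x)=\emptyset$, which is trivially a subset of $Y(u,v,x)$ and a strong independent set. For the inductive step, I would first observe that every recursive call made inside \textbf{ComputeS}$(u,v,x)$ is to a triple whose associated $Y$-set is strictly smaller, so the induction hypothesis applies to each of $S(u,v,\eta(x))$, $S(x,v,\eta(x))$, $S(x,y,\eta(x))$ and $S(u,v,y)$. The key containments I need are: $Y(u,v,\eta(x))\subseteq Y(u,v,x)$ (which follows since $l_{\eta(x)}>l_x$, so the constraint $l_y\ge l_{\eta(x)}$ is stronger than $l_y\ge l_x$), and—when $x\in X(u,v)$—that $x\in Y(u,v,x)$, that $Y(x,v,\eta(x))\subseteq Y(u,v,x)$, and for each $y\in B$ that $Y(x,y,\eta(x))\subseteq Y(u,v,x)$ and $Y(u,v,y)\subseteq Y(u,v,x)$. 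Each of these is a routine check against the defining inequalities of $X$ and $Y$, using $r_x<R_x<l_y$ for $y\in B$ and $l_x\le l_{\eta(x)}$.

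Granting these containments, the subset claim follows immediately: in every branch, $S(u,v,x)$ is assigned a value that is a union of $\{x\}$ (when applicable) with sets that, by the induction hypothesis together with the containments above, all lie inside $Y(u,v,x)$. So $T\subseteq Y(u,v,x)$ throughout, and hence $S(u,v,x)\subseteq Y(u,v,x)$.

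The more delicate part is showing $S(u,v,x)$ is a strong independent set. By the induction hypothesis each of the sets $S(u,v,\eta(x))$, $S(x,v,\eta(x))$, $S(x,y,\eta(x))$, $S(u,v,y)$ is itself a strong independent set, so I only need to rule out a ``bad pair'' $\{a,b\}$ with $(a,b),(b,a)\in E(G)$ straddling two of the combined pieces—i.e. $a\in\{x\}$ or one piece and $b$ in another piece. Consider the branch $T'=\{x\}\cup S(x,y,\eta(x))\cup S(u,v,y)$. For $a=x$ and $b\in S(x,y,\eta(x))$: here $b\in Y(x,y,\eta(x))$, so $r_x<l_b<R_b<l_y\le$ (and in particular $R_b<l_y$), which forces $S_b=[L_b,R_b]$ to lie strictly to the right of $r_x$ and strictly to the left of $l_y$; combined with $L_b<r_x$ being \emph{false} and the bounds on $T_x$, one checks that $S_x\cap T_b=\emptyset$ or $S_b\cap T_x=\emptyset$, so $\{x,b\}$ is not a bad pair. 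For $a=x$ and $b\in S(u,v,y)\subseteq Y(u,v,y)$: by definition of $y\in B$ we have $R_x<l_b$, so $S_x$ lies entirely to the left of $T_b$, giving $S_x\cap T_b=\emptyset$. For $b_1\in S(x,y,\eta(x))$ and $b_2\in S(u,v,y)$: then $b_1\in Y(x,y,\eta(x))$ gives $R_{b_1}<l_y$ while $b_2\in Y(u,v,y)$ gives $l_y\le l_{b_2}$, hence $R_{b_1}<l_{b_2}<L_{b_2}$—wait, rather $R_{b_1}<l_y\le l_{b_2}<r_{b_2}<R_{b_2}$ and also $l_{b_1}<R_{b_1}<l_{b_2}$, so $S_{b_1}$ lies entirely left of $T_{b_2}$, killing the arc $(b_1,b_2)$; symmetrically the nesting $T\subseteq S$ and $L_{b_2}<R_{b_2}$ with $R_{b_1}<l_{b_2}$ forces $S_{b_2}\cap T_{b_1}=\emptyset$ once one notes $L_{b_2}$ could be small but $T_{b_1}\subseteq[l_{b_1},R_{b_1}]$ lies left of $l_{b_2}\le L$... this requires care: the cleanest route is to argue that $r_{b_1}<R_{b_1}<l_y\le l_{b_2}$, so $T_{b_1}$ lies strictly left of $T_{b_2}$, and since $T_{b_2}\subseteq S_{b_2}$ we only get $T_{b_1}$ left of $T_{b_2}$, not of $S_{b_2}$; so instead I use that $b_2\in Y(u,v,y)$ also entails $b_2\in X(u,v)$, hence $r_u<L_{b_2}$, while $b_1\in Y(x,y,\eta(x))\subseteq X(x,y)$ gives $R_{b_1}<l_y$ and $b_1$'s position relative to $x$; pushing through these interval inequalities rules out the bad pair. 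The analogous, easier, cases cover the branches $T'=\{x\}\cup S(x,v,\eta(x))$ and $T=S(u,v,\eta(x))$.

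The main obstacle I anticipate is precisely this last case-analysis: verifying that vertices drawn from $S(x,y,\eta(x))$ and from $S(u,v,y)$ cannot form a bad pair, because here one must exploit both the $X$-membership conditions (the $S$-interval constraints $r_u<L_y<R_y<l_v$) and the nesting $T_u\subseteq S_u$, rather than just the $l$-coordinate orderings. Everything else—the subset claim and the within-piece strong independence—is bookkeeping that falls out of the induction hypothesis and the monotonicity of the $Y$-sets. I would organize the proof so that the containment lemmas among the various $Y$-sets are stated and proved first, then the subset claim, then the strong-independence claim with the straddling-pair analysis as the technical heart.
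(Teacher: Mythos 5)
Your overall strategy is the same as the paper's: induct along the recursion of \textbf{ComputeS}, use monotonicity of the $Y$-sets for the containment claim, and rule out ``straddling'' arcs between the pieces $\{x\}$, $S(x,y,\eta(x))$ and $S(u,v,y)$ for strong independence. Two points need fixing, one of which is a real flaw in the proof as proposed.

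First, your induction measure does not work. You claim that every recursive call is to a triple with a strictly smaller $Y$-set, but consider the call to $S(u,v,\eta(x))$ in the branch where $x\notin X(u,v)$ and $Y(u,v,x)\neq\emptyset$: every $y\in Y(u,v,x)$ then satisfies $l_y>l_x$, hence $l_y\geq l_{\eta(x)}$, so $Y(u,v,\eta(x))=Y(u,v,x)$ and $|Y|$ does not decrease. The induction is therefore not well-founded as stated. The paper instead inducts on $l_v-l_x$, which strictly decreases in all four recursive calls (since $l_{\eta(x)}>l_x$ and, for $y\in B$, $l_x<l_y<l_v$); switching to that measure, or to a lexicographic refinement of yours, repairs the argument.

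Second, the case you single out as the ``technical heart''---a pair $b_1\in S(x,y,\eta(x))$, $b_2\in S(u,v,y)$---is already finished by your own first observation and requires none of the subsequent maneuvering. A strong independent set needs only \emph{one} of the two arcs between $b_1$ and $b_2$ to be absent. Since $b_1\in X(x,y)$ gives $R_{b_1}<l_y$ and $b_2\in Y(u,v,y)$ gives $l_y\leq l_{b_2}$, the interval $S_{b_1}$ lies entirely to the left of $T_{b_2}$, so $(b_1,b_2)\notin E(G)$, and you are done. Your attempt to additionally show $S_{b_2}\cap T_{b_1}=\emptyset$ is not only unnecessary but impossible in general: nothing prevents $L_{b_2}<l_{b_1}$, in which case $T_{b_1}\subseteq S_{b_2}$ and the arc $(b_2,b_1)$ is genuinely present. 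The same remark streamlines your $x$-versus-$S(x,y,\eta(x))$ case: $r_x<L_w$ for $w\in X(x,y)$ immediately kills the single arc $(w,x)$, which is all that is needed. With the measure corrected and this simplification made, your proof coincides with the paper's.
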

\begin{proof}
We shall prove this by induction on $l_v-l_x$. If $l_v-l_x\leq 0$, then we have $Y(u,v,x)=\emptyset$ and therefore $S(u,v,x)=\emptyset$. Clearly, the statement of the lemma is true in this case. Now let us assume that the statement has been proved for all $(u',v',x')\in V(G)^3$ such that $l_{v'}-l_{x'}<l_v-l_x$.

If $S(u,v,x)=\emptyset$, then there is nothing to prove.
Otherwise, it is the set that got assigned to $T$ in the last step where the value of $T$ was changed. This last step where $T$'s value was changed might be step~\ref{step:assign1}, step~\ref{step:assign2} or an iteration of step~\ref{step:assign3}.
Moreover, $Y(u,v,x)\neq\emptyset$ from which it follows that $l_x>r_u$.

First, let us consider the case when
the last time $T$ got assigned was in step~\ref{step:assign1}. In this case,
$S(u,v,x)=S(u,v,\eta(x))$. As $l_v-l_x>l_v-l_{\eta(x)}$, we can use the induction hypothesis to conclude that $S(u,v,\eta(x))\subseteq Y(u,v,\eta(x))$. Since $Y(u,v,\eta(x))\subseteq Y(u,v,x)$ (recall that $l_x>r_u$), we have $S(u,v,x)=S(u,v,\eta(x))\subseteq Y(u,v,x)$. It is immediately clear from the induction hypothesis that $S(u,v,\eta(x))=S(u,v,x)$ is a strong independent set in $G$.

Next, we consider the case when
the last time $T$ got assigned a set was in step~\ref{step:assign2}. Then, we know that $x\in X(u,v)$ which
implies that $x\in Y(u,v,x)$. We also have $S(u,v,x)=\{x\}\cup S(x,v,\eta(x))$. Again, by the induction hypothesis, we have $S(x,v,\eta(x))\subseteq Y(x,v,\eta(x))$ and that $S(x,v,\eta(x))$ is a strong independent set in $G$. As $x\in X(u,v)$, we have $Y(x,v,\eta(x))\subseteq Y(u,v,x)$. Since we also have $x\in Y(u,v,x)$, we can conclude that $S(u,v,x)=(\{x\}\cup S(x,v,\eta(x)))\subseteq Y(u,v,x)$. To see that $\{x\}\cup S(x,v,\eta(x))$ is a strong independent set in $G$, observe that for every vertex $w\in S(x,v,\eta(x))\subseteq Y(x,v,\eta(x))=X(x,v)$, we have $r_x<L_w$, implying that $(w,x)\notin E(G)$.

Finally, consider the case when the last time that an assignment to $T$ took place was in an iteration of step~\ref{step:assign3}. Again, it must be the case that $x\in X(u,v)$, which implies that $x\in Y(u,v,x)$. Also, we have $S(u,v,x)=\{x\}\cup S(x,y,\eta(x))\cup S(u,v,y)$ for some $y\in B\subseteq Y(u,v,x)\subseteq X(u,v)$. By the induction hypothesis, we have $S(x,y,\eta(x))\subseteq Y(x,y,\eta(x))$ and $S(u,v,y)\subseteq Y(u,v,y)$. As we have $x,y\in X(u,v)$, we can conclude that $Y(x,y,\eta(x))\subseteq Y(u,v,x)$ and thereby $S(x,y,\eta(x))\subseteq Y(u,v,x)$. From the definition of $B$, it is clear that $l_x<l_y$, implying that $Y(u,v,y)\subseteq Y(u,v,x)$, and therefore $S(u,v,y)\subseteq Y(u,v,x)$. Altogether, we now have $S(u,v,x)=(\{x\}\cup S(x,y,\eta(x))\cup S(u,v,y))\subseteq Y(u,v,x)$. It only remains to be shown that $S(u,v,x)=\{x\}\cup S(x,y,\eta(x))\cup S(u,v,y)$ is a strong independent set in $G$. It is easy to see that for every vertex $w\in S(x,y,\eta(x))\subseteq Y(x,y,\eta(x))=X(x,y)$, we have $r_x<L_w$ and therefore, $(w,x)\notin E(G)$. Now consider a vertex $w'\in S(u,v,y)\subseteq Y(u,v,y)$. Clearly, $l_{w'}\geq l_y$. From the definition of $B$, we have $R_x<l_y$ which now gives us $R_x<l_{w'}$. This means that $(x,w')\notin E(G)$. Finally, let us consider a vertex $w\in S(x,y,\eta(x))\subseteq Y(x,y,\eta(x))=X(x,y)$ and a vertex $w'\in S(u,v,y)\subseteq Y(u,v,y)$. Clearly, $R_w<l_y\leq l_{w'}$, implying that $(w,w')\notin E(G)$.
\end{proof}

\begin{lemma}\label{lem:ismax}
Let $(u,v,x)\in V(G)^3$ and let $S'\subseteq Y(u,v,x)$ be a strong independent set in $G$. Then $|S'|\leq |S(u,v,x)|$.
\end{lemma}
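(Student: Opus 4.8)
The plan is to mirror the structure of the proof of Lemma~\ref{lem:issis}: induct on $l_v - l_x$, and in the inductive step, argue that whatever the optimal strong independent set $S'$ looks like, one of the candidate sets $T'$ examined by \textbf{ComputeS}$(u,v,x)$ is at least as large as $S'$. If $l_v-l_x\leq 0$ or $Y(u,v,x)=\emptyset$, then $S'=\emptyset$ and there is nothing to prove, so assume $Y(u,v,x)\neq\emptyset$, which forces $r_u<l_x<l_v$ and $L_v<r_u$. For the inductive step I would assume the lemma holds for all triples $(u',v',x')$ with $l_{v'}-l_{x'}<l_v-l_x$.

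The key case split is on whether $S'$ contains a vertex $w$ with $l_w<l_x$ or not; and if it does, whether $x$ itself can be chosen in $S'$. First, if $S'$ contains no vertex $w$ with $l_w < l_x$, then $S'\subseteq Y(u,v,\eta(x))$, and by the induction hypothesis $|S'|\leq |S(u,v,\eta(x))|\leq |S(u,v,x)|$ (the last inequality because $T$ is initialized to $S(u,v,\eta(x))$ in step~\ref{step:assign1} and only ever replaced by something strictly larger). Otherwise let $x^*$ be the vertex of $S'$ minimizing $l_{x^*}$; then $l_{x^*}<l_x$ is impossible since $S'\subseteq Y(u,v,x)$ forces $l_{x^*}\geq l_x$, so in fact $l_{x^*}\geq l_x$, and combined with the previous case we may assume $x^*$ is precisely the ``leftmost'' vertex of $S'$. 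I would then run the algorithm not at $x$ but show by a second (inner) induction, or by invoking the induction hypothesis at $(u,v,x')$ for the successor $x'=\eta(x)$, that we may assume $x^*=x$: if $l_{x^*}>l_x$ then $S'\subseteq Y(u,v,\eta(x))$ still and we are done as above. So assume $x\in S'$ and $x$ is its leftmost vertex; note then $x\in X(u,v)$, so the algorithm does enter the \textbf{if} block. Now partition $S'\setminus\{x\}$ into $A=\{w\in S'\colon w\neq x,\ (w,x)\in E(G)\text{ or }(x,w)\in E(G)\}$ — but since $S'$ is a strong independent set and contains $x$, every $w\in S'\setminus\{x\}$ satisfies at most one of the two arcs — and more usefully into those $w$ with $l_w$ ``close enough'' to interact with $S_x$ versus those lying entirely to the right of $R_x$.

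The crux is the following structural claim: because $x$ is the $<_l$-leftmost vertex of $S'$ and $S'\cup\{x\}$ is strongly independent, every other $w\in S'$ satisfies either $r_x<L_w$ (so $w\in X(x,v)$, and in fact $w\in Y(x,v,\eta(x))$ since $l_w>l_x$), or else $w$ overlaps $S_x$ on its left, i.e. $L_w<r_x$ and (by strong independence, since $(x,w)\notin E(G)$ forces $R_x<l_w$) we get $L_w<r_x$ and $R_x<l_w$, which is exactly membership in the set $B$. Let $y\in S'$ be the such a vertex with the smallest $l_y$, if one exists. Then $S'\setminus\{x\}$ splits as $(S'\cap Y(x,y,\eta(x)))\cup(S'\cap Y(u,v,y))$ — the vertices to the left of $l_y$ lie in $X(x,y)=Y(x,y,\eta(x))$ (using $r_x<L_w<R_w<l_y$, which needs checking from $R_w<l_y$ and $w\in X(x,v)$), and the rest lie in $Y(u,v,y)$ since they have $l_w\geq l_y$. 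By the induction hypothesis applied to $(x,y,\eta(x))$ and to $(u,v,y)$ — both of which have strictly smaller $l$-gap, since $l_y<l_v$ and $\eta(x)$ has $l_{\eta(x)}>l_x$ — each piece is bounded by the corresponding table entry, so $|S'|\leq 1+|S(x,y,\eta(x))|+|S(u,v,y)| = |T'|\leq |S(u,v,x)|$ for the iteration of step~\ref{step:assign3} with this $y$. If no such $y$ exists, then all of $S'\setminus\{x\}$ lies in $X(x,v)$, hence in $Y(x,v,\eta(x))$, and $|S'|\leq 1+|S(x,v,\eta(x))|=|T'|\leq|S(u,v,x)|$ via step~\ref{step:assign2}.

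The main obstacle I anticipate is the careful verification of the two containments in the split of $S'\setminus\{x\}$ — specifically, showing that a vertex $w\in S'$ with $l_w<l_y$ actually satisfies $R_w<l_y$ (not just $l_w<l_y$), which is what is needed to place it in $X(x,y)$. This should follow from $w\in X(x,v)$ (giving $R_w<l_v$) together with the choice of $y$ as the $<_l$-minimal ``$B$-type'' vertex and the fact that any vertex with $l_y\leq l_w$ is handled in the other piece; but one must rule out a vertex straddling $l_y$, and here strong independence of $S'$ together with the interval-nest condition $T_w\subseteq S_w$ and the ordering of endpoints is exactly what makes it work. I would also need the small observation, analogous to one in the proof of Lemma~\ref{lem:issis}, that $\eta(x)$ is well-defined whenever there is any vertex with $l$-endpoint exceeding $l_x$, which is guaranteed here since $v$ itself has $l_v>l_x$.
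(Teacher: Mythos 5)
Your proposal is correct and follows essentially the same route as the paper's proof: the same induction on $l_v-l_x$, the same split into the cases $x\notin S'$ versus $x\in S'$, and in the latter case the same decomposition of $S'\setminus\{x\}$ around the $<_l$-minimal vertex $y$ with $L_y<r_x$ (the paper's $z$), matched against the candidates from steps~\ref{step:assign1}, \ref{step:assign2} and~\ref{step:assign3}. The one step you leave as ``should follow'' --- that every $w\in S'$ with $l_w<l_y$ in fact satisfies $r_x<L_w<R_w<l_y$ --- is exactly the verification the paper spells out, using strong independence of $w$ and $y$ together with the minimality of $l_y$, and the ingredients you name are the right ones.
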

\begin{proof}
We shall prove this by induction on $l_v-l_x$. If $l_v-l_x\leq 0$, then we have $Y(u,v,x)=\emptyset$ and therefore $S(u,v,x)=\emptyset$. Clearly, the statement of the lemma is true in this case. Now let us assume that the statement has been proved for all $(u',v',x')\in V(G)^3$ such that $l_{v'}-l_{x'}<l_v-l_x$.

First let us note that the procedure ComputeS($u,v,x$) actually computes $S(u,v,x)$ as given by the following expression, where $\mathrm{Max}(\mathcal{F})$ denotes a set of maximum cardinality in a family $\mathcal{F}$ of sets.
\medskip

\begin{equation}\label{eqn:max}
S(u,v,x)=\left\{\begin{array}{cl}S(u,v,\eta(x))&\mbox{if }x\notin X(u,v)\\\\
\mathrm{Max}\left(\begin{array}{c}\left\{S(u,v,\eta(x)),\{x\}\cup S(x,v,\eta(x))\right\}\\
\cup\\\left\{\{x\}\cup S(x,y,\eta(x))\cup S(u,v,y)\colon y\in B\right\}\end{array}\right)&\mbox{if }x\in X(u,v)\end{array}\right.
\end{equation}
\medskip

Let $S'\subseteq Y(u,v,x)$ be a strong independent set in $G$. Let us first consider the case in which $x\notin S'$. In this case, it is easy to see that $S'\subseteq Y(u,v,\eta(x))$. From the induction hypothesis, we have $|S'|\leq |S(u,v,\eta(x))|$. It follows from equation~(\ref{eqn:max}) that $|S(u,v,x)|\geq |S(u,v,\eta(x))|$ and therefore we are done.

Now let us consider the case when $x\in S'$. Note that since $S'\subseteq Y(u,v,x)\subseteq X(u,v)$, we now have $x\in X(u,v)$.

Suppose first that there exists some vertex $z\in S'\setminus\{x\}$ such that $L_z<r_x$. Then let $z$ be that vertex in $S'\setminus\{x\}$ with $L_z<r_x$ such that there exists no vertex $z'\in S'\setminus\{x\}$ with $l_{z'}<l_z$ and $L_{z'}<r_x$. Let $S'_1=S'\cap X(x,z)$ and $S'_2=S'\setminus (\{x\}\cup S'_1)$. Note that $S'$ is a disjoint union of the sets $\{x\}$, $S'_1$ and $S'_2$ and that $z\in S'_2$. We claim that for each vertex $w\in S'_2$, we have $l_w\geq l_z$. Suppose that there exists $w\in S'_2$ such that $l_w<l_z$. As $S'$ is a strong independent set containing both $w$ and $z$, it must be the case that $r_w<l_z$ (otherwise, $[l_w,r_w]\cap [l_z,r_z]\neq\emptyset$, implying that both $(w,z),(z,w)\in E(G)$). If $L_w<r_x$, then we have a contradiction to our choice of $z$. Therefore, we have $r_x<L_w$. Recalling that $L_z<r_x$, we now have $L_z<r_x<L_w<r_w<l_z$. Then, the only reason $w\notin X(x,z)$ must be the fact that $l_z<R_w$. But now we have $L_z<r_w<l_z<R_w$, implying that both $(w,z),(z,w)\in E(G)$. But this is impossible as both $z$ and $w$ belong to a strong independent set $S'$ of $G$. This allows us to conclude that every vertex $w\in S'_2$ has the property that $l_w\geq l_z$. Therefore, recalling that $S'\subseteq X(u,v)$, we can infer that $S'_2\subseteq Y(u,v,z)$. Clearly, $S'_1\subseteq X(x,z)=Y(x,z,\eta(x))$. Since $S'\subseteq Y(u,v,x)$ and $z\in S'\setminus\{x\}$, we have $l_x<l_z<R_z<l_v$, implying that $l_v-l_z<l_v-l_x$ and $l_z-l_{\eta(x)}<l_v-l_x$. By the induction hypothesis, we now have $|S'_2|\leq |S(u,v,z)|$ and $|S'_1|\leq |S(x,z,\eta(x))|$. Therefore, $|S'|=1+|S'_1|+|S'_2|\leq 1+|S(x,z,\eta(x))|+|S(u,v,z)|$. Recalling that $l_x<l_z$, $L_z<r_x$ and that both $z$ and $x$ belong to a strong independent set $S'$ of $G$, we can conclude that $R_x<l_z$. This means that $z\in B$ and from equation~(\ref{eqn:max}), we now have $|S(u,v,x)|\geq |\{x\}\cup S(x,z,\eta(x))\cup S(u,v,z)|=1+|S(x,z,\eta(x))|+|S(u,v,z)|$ (as the sets $\{x\}$, $S(x,z,\eta(x))$ and $S(u,v,z)$ are pairwise disjoint). This shows that $|S(u,v,x)|\geq |S'|$.

Next, we shall consider the case when there does not exist any vertex $z\in S'\setminus\{x\}$ such that $L_z<r_x$. Then for every $w\in S'\setminus \{x\}$, we have $r_x<L_w$, which implies that $S'\setminus\{x\}\subseteq X(x,v)=Y(x,v,\eta(x))$. As $S'\setminus\{x\}$ is a strong independent set in $G$ and $l_v-l_{\eta(x)}<l_v-l_x$, we have $|S'\setminus\{x\}|\leq |S(x,v,\eta(x))|$ by our induction hypothesis.
Therefore, $|S'|=1+|S'\setminus\{x\}|\leq 1+|S(x,v,\eta(x))|=|\{x\}\cup S(x,v,\eta(x))|$ (note that $x\notin S(x,v,\eta(x))$). From equation~(\ref{eqn:max}), it is clear that $|S(u,v,x)|\geq |\{x\}\cup S(x,v,\eta(x))|$. We thus have $|S'|\leq |S(u,v,x)|$ as required.
\end{proof}
\begin{theorem}\label{thm:intnestmaxsi}
There is an $O(n^4)$ algorithm that computes a maximum cardinality strong independent set in an interval nest digraph $G$, given the interval nest representation of $G$ as input.
\end{theorem}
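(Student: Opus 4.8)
The plan is to observe that the procedure ComputeS, together with Lemmas~\ref{lem:issis} and~\ref{lem:ismax}, already guarantees that for every triple $(u,v,x)$ the table entry $S(u,v,x)$ is a maximum cardinality strong independent set among the strong independent sets of $G$ contained in $Y(u,v,x)$. So the three remaining tasks are: reducing the original problem to reading off a single table entry, fixing an evaluation order, and bounding the total running time by $O(n^4)$.

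For the reduction I would augment $G$ with two auxiliary vertices $u_0$ and $v_0$. Give $u_0$ a tiny pair of nested intervals placed entirely to the left of all real intervals, so that $r_{u_0}<L_y$ for every original vertex $y$; and give $v_0$ a pair in which $S_{v_0}$ is a huge interval whose left endpoint is the globally smallest endpoint, while the nested interval $T_{v_0}$ lies entirely to the right of all real intervals, so that $R_y<l_{v_0}$ for every original vertex $y$ (this keeps $T_{v_0}\subseteq S_{v_0}$, so the augmented representation is still an interval nest representation). With these choices $r_{u_0}<l_{v_0}$ and $L_{v_0}<r_{u_0}$, hence $X(u_0,v_0)=\{y\colon r_{u_0}<L_y<R_y<l_{v_0}\}$ is precisely the set of original vertices, and since $\eta(u_0)$ is the original vertex with smallest $l$-value, $Y(u_0,v_0,\eta(u_0))=X(u_0,v_0)=V(G)$. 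Therefore, by Lemmas~\ref{lem:issis} and~\ref{lem:ismax}, $S(u_0,v_0,\eta(u_0))$ is a maximum cardinality strong independent set of $G$ (it cannot contain $u_0$ or $v_0$, since those are not in $Y(u_0,v_0,\eta(u_0))$), and the algorithm simply computes all table entries and returns this one.

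For the evaluation order, I would note that in ComputeS$(u,v,x)$ every table entry consulted---namely $S(u,v,\eta(x))$, $S(x,v,\eta(x))$, and $S(x,y,\eta(x))$, $S(u,v,y)$ for $y\in B$---has a strictly smaller value of $l_v-l_x$ (using $l_{\eta(x)}>l_x$, that $l_y>l_x$ for $y\in B$, and $l_y<l_v$ for $y\in X(u,v)$). So after an $O(n\log n)$ preprocessing step that perturbs and relabels the $4n$ endpoints as distinct integers, sorts them and computes $\eta$, it suffices to memoize, or to fill the $O(n^3)$ entries in non-decreasing order of $l_v-l_x$. The crucial point for the time bound is that each call to ComputeS$(u,v,x)$ must take only $O(n)$ time, which rules out storing the sets $S(u,v,x)$ explicitly: a union of $\Theta(n)$-sized sets repeated $|B|=\Theta(n)$ times would cost $\Theta(n^2)$ per entry and $\Theta(n^5)$ overall. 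Instead, the table stores only the integer $|S(u,v,x)|$ and a back-pointer recording which alternative of equation~(\ref{eqn:max}) was chosen (and, in the third alternative, the witness $y$). With cardinalities available in $O(1)$, forming $B$ by one scan over $V(G)$ and doing $O(1)$ work per $y\in B$, each entry costs $O(n)$ and the dynamic programming runs in $O(n^4)$. Finally, one recovers the actual set $S(u_0,v_0,\eta(u_0))$ by following back-pointers: each step either advances $x$ to $\eta(x)$ (at most $n$ times in a row, since $l_x$ strictly increases) or commits one new vertex $x$ to the output (at most $n$ of these, pairwise distinct by the disjointness already shown in the proof of Lemma~\ref{lem:issis}), so reconstruction touches $O(n^2)$ triples and is dominated by the main computation; overall the algorithm runs in $O(n^4)$ time.

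The step I expect to be the main obstacle is this last amortization---checking that keeping only cardinalities and back-pointers really suffices to achieve $O(n^4)$ rather than the naive $O(n^5)$, and that the back-pointer reconstruction is both correct (it produces a set of the recorded cardinality, by unrolling equation~(\ref{eqn:max})) and cheap. By comparison the reduction via $u_0$ and $v_0$ and the verification of the evaluation order are routine, although the auxiliary vertices do need to be chosen with a little care so that $Y(u_0,v_0,\eta(u_0))$ equals $V(G)$ while $T_{v_0}\subseteq S_{v_0}$ continues to hold.
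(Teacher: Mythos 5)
Your proposal is correct and follows essentially the same route as the paper: augment the representation with two sentinel vertices so that $Y(a,b,\eta(a))=V(G)$, fill (or memoize) the table in increasing order of $l_v-l_x$, and invoke Lemmas~\ref{lem:issis} and~\ref{lem:ismax} for correctness of the returned entry. Your additional observation---that the table should store cardinalities plus back-pointers (exploiting the pairwise disjointness of $\{x\}$, $S(x,y,\eta(x))$ and $S(u,v,y)$) rather than explicit sets, since eagerly forming the unions in the loop over $B$ would cost $\Theta(n^2)$ per call---is a legitimate refinement of an implementation detail that the paper's $O(n)$-per-call accounting glosses over.
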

\begin{proof}
Add the intervals corresponding to two dummy vertices $a$ and $b$ to the input interval nest representation. Recalling that the leftmost endpoint in the input representation was 1 and the rightmost $4|V(G)|$, let $L_a=-4$, $l_a=-3$, $r_a=-1$, $R_a=0$, $L_b=-2$, $l_b=4|V(G)|+1$, $r_b=4|V(G)|+2$ and $R_b=4|V(G)|+3$. The sets $X(u,v)$ for all $u,v\in V(G)\cup\{a,b\}$ and $Y(u,v,x)$ for all $(u,v,x)\in (V(G)\cup\{a,b\})^3$ can be computed in $O(n^4)$ time. The algorithm then calls the procedure ComputeS($a,b,\eta(a)$) and outputs the set $S(a,b,\eta(a))$. Note that this being a dynamic programming algorithm, a call to $S(u,v,x)$ for some $(u,v,x)\in (V(G)\cup\{a,b\})^3$ is made only if $S(u,v,x)$ has not been computed before---or in other words, the algorithm ensures that a call to ComputeS($u,v,x$) is made at most once for each triple $(u,v,x)\in (V(G)\cup\{a,b\})^3$. Therefore, the total number of times the procedure ComputeS needs to be called recursively during the execution of ComputeS($a,b,\eta(a)$) is at most $(n+2)^3$. It is easy to see from the procedure ComputeS($u,v,x$) that the time spent in the computation of $S(u,v,x)$ outside the recursive calls to the procedure is $O(n)$. Therefore, the total running time of ComputeS($a,b,\eta(a)$) is $O(n^4)$, implying that our algorithm has time complexity $O(n^4)$. We only need to show that the output of the algorithm, $S(a,b,\eta(a))$, is a maximum cardinality strong independent set in $G$. It is clear that $X(a,b)=Y(a,b,\eta(a))=V(G)$. Therefore, by Lemmas~\ref{lem:issis} and~\ref{lem:ismax}, $S(a,b,\eta(a))$ is a maximum cardinality strong independent set in $G$.
\end{proof}

\subsection{The Uniquely Restricted Matching Problem in Interval Graphs}
Let $G$ be an interval graph for which we wish to compute a maximum cardinality uniquely restricted matching. Note that we can assume that the interval representation of the input graph $G$ is at our disposal. This is because even if the input graph is provided as an adjacency list, there are well-known algorithms that can generate an interval representation of $G$ in linear-time~\cite{BoothLueker,Corneiletal,Habibetal}. Let $\{I_u\}_{u\in V(G)}$ be an interval representation of $G$. For a vertex $u\in V(G)$, let $I_u=[l_u,r_u]$.

We shall define an interval nest digraph $D$ with $V(D)=E(G)$. The arcs of $D$ are defined by specifying the interval nest representation $\{(S_e,T_e)\}_{e\in V(D)}$ of $D$ as follows. For each $e=uv\in V(D)$, where $u,v\in V(G)$, we define $S_e = I_u\cup I_v$ and $T_e = I_u\cap I_v$. Clearly, for each $e\in V(D)$, we have $T_e\subseteq S_e$ and therefore this is an interval nest representation (note that the union or intersection of any two intervals that have a nonempty intersection is again an interval). Thus, $D$ is an interval nest digraph.

\begin{theorem}\label{thm:reduc}
Let $G$ and $D$ be as defined above.
Let $S\subseteq E(G)$. Then $S$ is a strong independent set in $D$ if and only if $S$ is a uniquely restricted matching in $G$.
\end{theorem}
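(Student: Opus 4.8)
The plan is to push everything down to pairs of edges and then exploit the one-dimensional Helly property of intervals. First I would dispose of a preliminary point: a strong independent set $S$ in $D$ is automatically a matching in $G$. If two distinct edges $e=uv$ and $e'=uw$ of $S$ shared a vertex $u$, then $T_e = I_u\cap I_v \subseteq I_u \subseteq I_u\cup I_w = S_{e'}$, so $S_{e'}\cap T_e = T_e \neq\emptyset$ (nonempty since $uv\in E(G)$), and symmetrically $S_e\cap T_{e'}\neq\emptyset$, making both $(e,e')$ and $(e',e)$ arcs of $D$---contradicting strong independence. Since a uniquely restricted matching is a matching by definition, in both directions I may assume $S$ is a matching in $G$, so that all pairs of elements of $S$ are disjoint edges.

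By definition $S$ is a strong independent set in $D$ if and only if $\{e,e'\}$ is one for every two $e,e'\in S$, and since $G$ is an interval graph, Theorem~\ref{thm:urmintc4} says $S$ is uniquely restricted if and only if $\{e,e'\}$ is uniquely restricted for every two $e,e'\in S$. So it suffices to show, for disjoint edges $e=uv$ and $e'=u'v'$ of $G$, that $\{e,e'\}$ is a strong independent set in $D$ if and only if $\{e,e'\}$ is a uniquely restricted matching in $G$. Using Theorem~\ref{thm:urmaltcycle} and Lemma~\ref{lem:altc4}, $\{e,e'\}$ fails to be uniquely restricted exactly when each of $u,v$ has a neighbour in $\{u',v'\}$ and each of $u',v'$ has a neighbour in $\{u,v\}$ (condition~\ref{it:neigh} of Lemma~\ref{lem:altc4}); and $\{e,e'\}$ fails to be a strong independent set exactly when $S_e\cap T_{e'}\neq\emptyset$ and $S_{e'}\cap T_e\neq\emptyset$. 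Hence the theorem reduces to proving that these two ``failure'' conditions coincide.

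For one inclusion, I would take a point $p\in S_e\cap T_{e'} = (I_u\cup I_v)\cap I_{u'}\cap I_{v'}$: then $p$ lies in $I_{u'}$, in $I_{v'}$, and in $I_u$ or in $I_v$. If $p\in I_u$ then $I_u$ meets both $I_{u'}$ and $I_{v'}$, so---since $e,e'$ are disjoint, ruling out $u=u'$ or $u=v'$---$u$ is adjacent to both $u'$ and $v'$; the case $p\in I_v$ is symmetric. Thus each of $u',v'$ has a neighbour in $\{u,v\}$, and symmetrically from $S_{e'}\cap T_e\neq\emptyset$ each of $u,v$ has a neighbour in $\{u',v'\}$.

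The converse is where the interval structure really matters---the analogous statement is false for general graphs (consider $e,e'$ joined only by the matching $uu',vv'$). The key observation is that $S_e = I_u\cup I_v$ is an interval, because $I_u\cap I_v\neq\emptyset$ as $uv\in E(G)$. So, assuming each of $u',v'$ has a neighbour in $\{u,v\}$: the interval $I_{u'}$ meets $I_u$ or $I_v$, hence meets $S_e$; likewise $I_{v'}$ meets $S_e$; and $I_{u'}$ meets $I_{v'}$ since $u'v'\in E(G)$. Three pairwise-intersecting intervals on the line have a common point, so $S_e\cap T_{e'} = S_e\cap I_{u'}\cap I_{v'}\neq\emptyset$; running the same argument with $e$ and $e'$ interchanged (now using that each of $u,v$ has a neighbour in $\{u',v'\}$) gives $S_{e'}\cap T_e\neq\emptyset$. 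The only thing to be careful about throughout is bookkeeping: keeping the disjointness of $e,e'$ in play so that ``intervals meet'' really translates to ``vertices are adjacent'', and arranging the quantifiers so that each half of condition~\ref{it:neigh} of Lemma~\ref{lem:altc4} is used exactly once---once for $S_e\cap T_{e'}$ and once for $S_{e'}\cap T_e$.
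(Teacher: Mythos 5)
Your proof is correct and follows essentially the same route as the paper: reduce to pairs via Theorem~\ref{thm:urmintc4}, characterize failure of unique restriction for a pair by condition~\emph{(iii)} of Lemma~\ref{lem:altc4}, and translate that condition into the arc conditions $S_e\cap T_{e'}\neq\emptyset$ and $S_{e'}\cap T_e\neq\emptyset$. The only (cosmetic) difference is that you name the one-dimensional Helly property explicitly where the paper invokes it silently in the step ``$I_u\cap I_v$ intersects $I_{u'}\cup I_{v'}$,'' and you organize the argument as a single equivalence of failure conditions rather than two separate contradiction arguments.
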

\begin{proof}
Suppose that $S$ is a strong independent set in $D$. Let $e,e'\in S$ and let $e=uv$ and $e'=u'v'$. We first show that $e$ and $e'$ cannot be incident on a common vertex. Suppose for the sake of contradiction that the edges $e$ and $e'$ of $G$ share a common vertex. We shall assume without loss of generality that $v=v'$. Then clearly, $T_{e'}=(I_{u'}\cap I_{v'})\subseteq I_{v'}=I_v\subseteq (I_u\cup I_v)=S_e$, implying that $S_e\cap T_{e'}\neq\emptyset$ and therefore, $(e,e')\in E(D)$. Similarly, we have $T_e=(I_u\cap I_v)\subseteq I_v=I_{v'}\subseteq (I_{u'}\cup I_{v'})=S_{e'}$, leading us to infer that $(e',e)\in E(D)$. But this contradicts the fact that both $e$ and $e'$ belong to a strong independent set $S$ in $D$. Thus, we can conclude that the edges $e$ and $e'$ in $G$ have no common vertex, or in other words, $\{e,e'\}$ is a matching in $G$. Next, we show that there is no alternating cycle with respect to $\{e,e'\}$ in $G$. Suppose for the sake of contradiction that there is such a cycle. Then by Lemma~\ref{lem:altc4}, we know that in $G$, each of $u,v$ has at least one neighbour in $\{u',v'\}$ and each of $u',v'$ has at least one neighbour in $\{u,v\}$. This means that each of $I_u$ and $I_v$ intersects $I_{u'}\cup I_{v'}$ and each of $I_{u'}$ and $I_{v'}$ intersects $I_u\cup I_v$. Since $uv,u'v'\in E(G)$, this implies that $I_u\cap I_v$ intersects $I_{u'}\cup I_{v'}$ and $I_{u'}\cap I_{v'}$ intersects $I_u\cup I_v$. We thus have $T_e\cap S_{e'}\neq\emptyset$ and $T_{e'}\cap S_e\neq\emptyset$. By definition of $D$, it must then be the case that $(e,e'),(e',e)\in E(D)$. But this contradicts the fact that both $e$ and $e'$ belong to a strong independent set $S$ in $D$. We thus conclude that for any two edges $e,e'\in S$, $\{e,e'\}$ is a matching in $G$ and that there is no alternating cycle with respect to $\{e,e'\}$ in $G$. As $G$ is an interval graph, this implies, by Theorem~\ref{thm:urmintc4}, that $S$ is a uniquely restricted matching in $G$.

Now suppose that $S$ is a uniquely restricted matching in $G$. Again let $e,e'\in S$ and let $e=uv$ and $e'=u'v'$.
Suppose for the sake of contradiction that $(e,e'),(e',e)\in E(D)$. As $(e,e')\in E(D)$, we can infer that $S_e\cap T_{e'}\neq\emptyset$, which means that $I_{u'}\cap I_{v'}$ intersects $I_u\cup I_v$. Therefore, both $I_{u'}$ and $I_{v'}$ intersect at least one of $I_u$ or $I_v$. We can thus conclude that each of $u',v'$ is adjacent to at least one vertex in $\{u,v\}$. Now since $(e',e)\in E(D)$, we can follow the same arguments to reach the conclusion that each of $u,v$ is adjacent to at least one vertex in $\{u',v'\}$. From Lemma~\ref{lem:altc4}, we now have that there is an alternating cycle with respect to $\{e,e'\}$ in $G$. But this contradicts the fact that both $e$ and $e'$ belongs to a uniquely restricted matching $S$ in $G$. Therefore, for any pair of edges $e,e'\in S$, we have either $(e,e')\notin E(D)$ or $(e',e)\notin E(D)$, which allows us to conclude that $S$ is a strong independent set in $D$.
\end{proof}

\begin{theorem}
There is a polynomial-time algorithm that computes a maximum cardinality uniquely restricted matching in an interval graph.
\end{theorem}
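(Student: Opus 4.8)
The plan is to assemble the two ingredients already in place. Theorem~\ref{thm:reduc} reduces the problem of finding a maximum cardinality uniquely restricted matching in an interval graph $G$ to the problem of finding a maximum cardinality strong independent set in the interval nest digraph $D$ with $V(D)=E(G)$, and Theorem~\ref{thm:intnestmaxsi} solves the latter in $O(n^4)$ time once an interval nest representation is given. So the algorithm is: first use one of the known linear-time algorithms (e.g.~\cite{BoothLueker,Corneiletal,Habibetal}) to obtain an interval representation $\{I_u\}_{u\in V(G)}$ of $G$; then build $D$ together with its interval nest representation $\{(S_e,T_e)\}_{e\in E(G)}$ by setting $S_e=I_u\cup I_v$ and $T_e=I_u\cap I_v$ for each $e=uv$, exactly as described just before Theorem~\ref{thm:reduc}; then run the algorithm of Theorem~\ref{thm:intnestmaxsi} on $D$ to obtain a maximum cardinality strong independent set $S$ in $D$; and finally output $S$.

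For correctness, observe that by Theorem~\ref{thm:reduc} a subset of $E(G)$ is a strong independent set in $D$ if and only if it is a uniquely restricted matching in $G$, so the family of strong independent sets of $D$ and the family of uniquely restricted matchings of $G$ are the same family of subsets of $E(G)$; hence a maximum cardinality member of one is a maximum cardinality member of the other, and $S$ is indeed a maximum cardinality uniquely restricted matching in $G$. For the running time, note that $|V(D)|=|E(G)|$ is at most quadratic in $|V(G)|$, so the construction of $D$ is polynomial and the call to the strong independent set algorithm takes $O(|E(G)|^4)$ time, which is polynomial in the size of $G$; the interval representation is obtained in linear time. The one technical point worth flagging is that the representation $\{(S_e,T_e)\}$ produced this way need not have distinct endpoints (and $T_e=S_e$ is possible, when $I_u=I_v$), whereas the algorithm of Theorem~\ref{thm:intnestmaxsi} presumes strictly nested endpoints that are all distinct; as already noted in that theorem, a slight perturbation of the endpoints fixes this without changing the digraph, so it poses no difficulty.

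I do not expect a genuine obstacle in this last theorem: all the substantive content lies in Theorems~\ref{thm:reduc} and~\ref{thm:intnestmaxsi}, and what remains is only to check that the reduction is polynomial-time computable and that it carries optimality across, both of which follow immediately from the fact that the two set families coincide.
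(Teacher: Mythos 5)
Your proof is correct and follows essentially the same route as the paper's: obtain an interval representation in linear time, construct the interval nest digraph $D$ via $S_e=I_u\cup I_v$ and $T_e=I_u\cap I_v$, invoke the algorithm of Theorem~\ref{thm:intnestmaxsi}, and transfer optimality back through Theorem~\ref{thm:reduc}. Your remark about perturbing coincident endpoints is a point the paper also handles (in the setup of Section~\ref{sec:int}), so nothing further is needed.
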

\begin{proof}
We can generate an interval representation of the input graph $G$ in $O(n+m)$ time using any of the several well-known algorithms~(for example, \cite{Kratschetal}). The interval nest representation of the digraph $D$ corresponding to the interval representation of $G$ can be computed in $O(m)$ time. The algorithm described in the proof of Theorem~\ref{thm:intnestmaxsi} can now be used to compute a maximum cardinality strong independent set in $D$ in $O(m^4)$ time. It follows from Theorem~\ref{thm:reduc} that this strong independent set corresponds to a maximum cardinality uniquely restricted matching in $G$.
\end{proof}
\section{Concluding Remarks}
The complexity status of the problem of computing a maximum cardinality uniquely restricted matching in a permutation graph remains open. We note here that in permutation graphs, unlike interval graphs or bipartite permutation graphs, the fact that a matching $M$ is not uniquely restricted does not necessarily mean that there is an alternating cycle of length 4 with respect to $M$ in the graph. In fact, there does not exist any constant $k$ such that in any permutation graph with a matching $M$ that is not uniquely restricted, there exists an alternating cycle of length $k$ with respect to $M$. The following theorem states this fact.
\begin{theorem}\label{thm:permnoaltCk}
For every even integer $k\geq 4$, there exists a permutation graph $G$ with a matching $M$ in it such that the only alternating cycle with respect to $M$ in $G$ has length $k$.
\end{theorem}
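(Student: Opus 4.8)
The plan is to construct, for each $t=k/2\ge 2$, a permutation graph $G_t$ with a matching $M_t$ in it such that $G_t$ has an alternating cycle with respect to $M_t$ and every such alternating cycle has length exactly $2t=k$ (this being the natural reading of ``the only alternating cycle $\dots$ has length $k$''). The organising observation is that in any graph every vertex of an alternating cycle with respect to $M$ is matched by an edge of $M$ lying on that cycle; hence an alternating cycle lies entirely inside the subgraph induced on the vertices saturated by $M$, and is simply a cycle of that subgraph along which edges of $M$ and edges not in $M$ strictly alternate. Consequently the set of alternating cycles of $G_t$, and in particular the set of their lengths, depends only on the subgraph induced on the matched vertices; vertices left unmatched by $M_t$, however they are attached, are invisible to it.

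The skeleton of $G_t$ is the cycle $C_{2t}$ on vertices $a_1,b_1,a_2,b_2,\dots,a_t,b_t$ taken in this cyclic order, with $M_t=\{a_ib_i:1\le i\le t\}$: read around $C_{2t}$ the edges $a_ib_i$ lie in $M_t$ and the edges $b_ia_{i+1}$ (indices mod $t$) do not, so this cycle is alternating of length $2t$, and $C_{2t}$ has no other cycle at all. The difficulty is that $C_{2t}$ is not a permutation graph for $t\ge 3$, since it contains asteroidal triples such as $\{b_1,b_3,b_5\}$, so edges must be added. Analysing which edges may safely be added --- using Lemma~\ref{lem:altc4} to exclude alternating $C_4$'s, together with a propagation argument in which the shape of an alternating cycle is forced through the low-non-matching-degree vertices to exclude longer short cycles --- one finds: adding an edge $a_ib_j$ with $j\notin\{i-1,i\}$ creates an alternating cycle of length at most $2(t-1)$ (and an alternating $C_4$ when $j=i+1$); adding both $a_ia_j$ and $b_ib_j$ for the same pair $\{i,j\}$ creates an alternating $C_4$ on $\{a_ib_i,a_jb_j\}$; and combinations such as $a_ia_{i+1}$ together with $b_ib_{i+2}$ create an alternating $C_6$. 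So the admissible additions are certain chords among the $a_i$'s together with certain chords among the $b_i$'s.

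The heart of the matter is then to pick an admissible chord set for which the resulting $G_t$ is a permutation graph. Chords must be used on both sides of the cycle, since a purely one-sided chord set always leaves an asteroidal triple on the other side (removing $N(b_l)=\{a_l,a_{l+1}\}$ from $C_{2t}$ together with any $a$-chords always leaves the remaining $b_i$'s connected, so $\{b_i,b_j,b_l\}$ stays asteroidal). For $t=3$ one may take $G_3$ to be $C_6$ together with the chords $a_1a_2$, $a_2a_3$ and $b_1b_3$; this is a permutation graph on six vertices (obtainable from $C_5$ plus one chord, itself a permutation graph, by duplicating a single vertex into a pair of true twins). For general $t$ the construction is given by exhibiting a permutation --- equivalently, a system of segments between two parallel lines --- that realises $G_t$. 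I expect the real work to be exactly this step: destroying every asteroidal triple of $C_{2t}$ (forced by $G_t$ being a permutation graph) while introducing no alternating cycle shorter than $2t$ (which forbids almost every edge that would enlarge a vertex's neighbourhood), and then verifying, for the chosen $G_t$, both that it is a permutation graph --- most transparently from its permutation diagram --- and, via the propagation argument, that $2t$ is the only possible length of an alternating cycle with respect to $M_t$.
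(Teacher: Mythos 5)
Your proposal is a plan rather than a proof: it correctly identifies the two constraints that must be balanced (every asteroidal triple of $C_{2t}$ must be destroyed for the graph to be a permutation graph, while almost every added edge creates an alternating cycle shorter than $2t$), and it handles $k=4$ and sketches $k=6$, but for general $k$ you explicitly defer ``the real work'' --- exhibiting the chord set, the permutation diagram certifying that $G_t$ is a permutation graph, and the argument that no shorter alternating cycle arises. Since the theorem is a for-every-$k$ existence statement, that deferred step \emph{is} the proof, so as written there is a genuine gap. A secondary worry is that your framework fixes the matched subgraph to be $C_{2t}$ plus chords with $M_t$ the ``odd'' edges of the Hamiltonian cycle; it is not obvious a priori that an admissible chord set (chords only among the $a_i$'s and among the $b_i$'s, never both $a_ia_j$ and $b_ib_j$) can ever kill all asteroidal triples for large $t$, and your $t=3$ justification (that the graph arises from $C_5$ plus a chord by splitting a vertex into true twins) does not check out as stated, even though the graph itself may well be a permutation graph.

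For contrast, the paper sidesteps the delicate chord analysis by writing down an explicit family: a ladder (two paths $2,4,\dots,k-2$ and $3,5,\dots,k-1$ joined by rungs $2$--$3$, $4$--$5$, \dots) with two extra vertices $1$ and $k$ each adjacent to both ends of the nearest rung. Viewed along its Hamiltonian cycle this is again a $C_k$ plus chords, but the matching is \emph{not} the rungs --- it is a perfect matching chosen from the Hamiltonian cycle edges (with a small case distinction depending on $k \bmod 4$) so that no two matching edges span an alternating $C_4$ and the only alternating cycle is the Hamiltonian one; membership in the class of permutation graphs is then certified by an explicit drawing of the graph as an intersection graph of segments between two parallel lines. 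If you want to complete your own route, you would need to produce an analogous explicit chord set and permutation diagram for every $t$, together with the ``propagation'' argument you allude to; the paper's example shows that choosing the matching, rather than the chords, as the free parameter makes both verifications essentially mechanical.
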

\begin{proof}
For $k=4$, the $C_4$ and any perfect matching in it can be easily seen to satisfy the statement of the theorem. For $k\geq 6$, we construct the graph $G$ as shown in Figure~\ref{fig:perm}. The figure on the top left shows how to choose the matching $M$ when $k$ is not a multiple of 4 and the figure on the top right shows how to choose the matching $M$ when $k$ is a multiple of 4. It is clear that the only alternating cycle with respect to $M$ in $G$ in both cases is the Hamiltonian cycle in $G$, which has length $k$. It only remains to be shown that this graph is a permutation graph. We will use the fact that permutation graphs are exactly the graphs which have an intersection representation using line segments whose endpoints lie on two parallel lines~\cite{Golumbic}. It is easy to verify that the diagram at the bottom of Figure~\ref{fig:perm} shows such a representation of $G$ (in the figure, the endpoints of each line segment have been labelled with the name of the vertex that is represented by that line segment). The graph $G$ is therefore a permutation graph.

\end{proof}
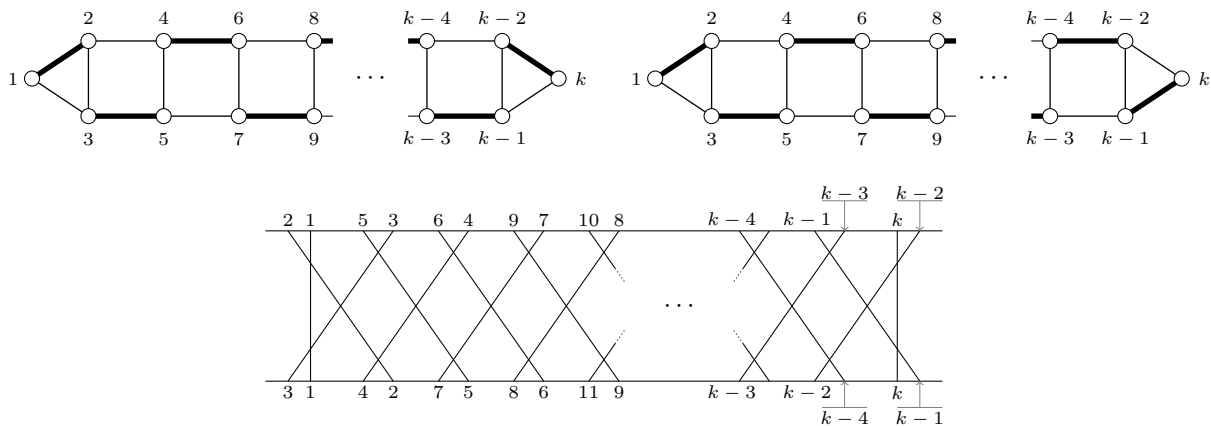
\begin{figure}
\begin{center}
\renewcommand{\vertexset}{(1,0.25,0.5),(2,1,1),(3,1,0),(4,2,1),(5,2,0),(6,3,1),(7,3,0),(8,4,1),(9,4,0),(k-4,5.5,1),(k-3,5.5,0),(k-2,6.5,1),(k-1,6.5,0),(k,7.25,0.5)}
\renewcommand{\edgeset}{(1,2,,2),(1,3),(2,4),(4,6,,2),(6,8),(3,5,,2),(5,7),(7,9,,2),(2,3),(4,5),(6,7),(8,9),(k-4,k-3),(k-2,k-1),(k-4,k-2),(k-3,k-1,,2),(k-2,k,,2),(k-1,k)}
\begin{tikzpicture}
\draw (4,0) -- (4.25,0);
\draw [line width=2] (4,1) -- (4.25,1);
\draw (5.5,0) -- (5.25,0);
\draw [line width=2] (5.5,1) -- (5.25,1);
\drawgraph
\node at (0,0.5) {\scriptsize 1};
\node at (1,1.3) {\scriptsize 2};
\node at (2,1.3) {\scriptsize 4};
\node at (3,1.3) {\scriptsize 6};
\node at (4,1.3) {\scriptsize 8};
\node at (5.5,1.3) {\scriptsize $k-4$};
\node at (6.5,1.3) {\scriptsize $k-2$};
\node at (1,-0.3) {\scriptsize 3};
\node at (2,-0.3) {\scriptsize 5};
\node at (3,-0.3) {\scriptsize 7};
\node at (4,-0.3) {\scriptsize 9};
\node at (5.5,-0.3) {\scriptsize $k-3$};
\node at (6.5,-0.3) {\scriptsize $k-1$};
\node at (7.55,0.5) {\scriptsize $k$};
\node at (4.75,0.5) {$\cdots$};
\end{tikzpicture}
\renewcommand{\vertexset}{(1,0.25,0.5),(2,1,1),(3,1,0),(4,2,1),(5,2,0),(6,3,1),(7,3,0),(8,4,1),(9,4,0),(k-4,5.5,1),(k-3,5.5,0),(k-2,6.5,1),(k-1,6.5,0),(k,7.25,0.5)}
\renewcommand{\edgeset}{(1,2,,2),(1,3),(2,4),(4,6,,2),(6,8),(3,5,,2),(5,7),(7,9,,2),(2,3),(4,5),(6,7),(8,9),(k-4,k-3),(k-2,k-1),(k-4,k-2,,2),(k-3,k-1),(k-2,k),(k-1,k,,2)}
\begin{tikzpicture}
\draw (4,0) -- (4.25,0);
\draw [line width=2] (4,1) -- (4.25,1);
\draw [line width=2] (5.5,0) -- (5.25,0);
\draw (5.5,1) -- (5.25,1);
\drawgraph
\node at (0,0.5) {\scriptsize 1};
\node at (1,1.3) {\scriptsize 2};
\node at (2,1.3) {\scriptsize 4};
\node at (3,1.3) {\scriptsize 6};
\node at (4,1.3) {\scriptsize 8};
\node at (5.5,1.3) {\scriptsize $k-4$};
\node at (6.5,1.3) {\scriptsize $k-2$};
\node at (1,-0.3) {\scriptsize 3};
\node at (2,-0.3) {\scriptsize 5};
\node at (3,-0.3) {\scriptsize 7};
\node at (4,-0.3) {\scriptsize 9};
\node at (5.5,-0.3) {\scriptsize $k-3$};
\node at (6.5,-0.3) {\scriptsize $k-1$};
\node at (7.55,0.5) {\scriptsize $k$};
\node at (4.75,0.5) {$\cdots$};
\end{tikzpicture}\vspace{0.1in}\\
\begin{tikzpicture}
\draw (0,0) -- (9,0);
\draw (0,2) -- (9,2);
\draw (0.3,2) -- (1.7,0);
\draw (0.3,0) -- (1.7,2);
\draw (1.3,2) -- (2.7,0);
\draw (1.3,0) -- (2.7,2);
\draw (2.3,2) -- (3.7,0);
\draw (2.3,0) -- (3.7,2);
\draw (3.3,2) -- (4.7,0);
\draw (3.3,0) -- (4.7,2);
\draw (6.3,2) -- (7.7,0);
\draw (6.3,0) -- (7.7,2);
\draw (7.3,2) -- (8.7,0);
\draw (7.3,0) -- (8.7,2);
\draw (4.3,2) -- (4.65,1.5);
\draw [densely dotted] (4.65,1.5) -- (4.79,1.3);
\draw (4.3,0) -- (4.65,0.5);
\draw [densely dotted] (4.65,0.5) -- (4.79,0.7);
\draw (6.7,2) -- (6.35,1.5);
\draw [densely dotted] (6.35,1.5) -- (6.21,1.3);
\draw (6.7,0) -- (6.35,0.5);
\draw [densely dotted] (6.35,0.5) -- (6.21,0.7);
\draw (0.6,2) -- (0.6,0);
\draw (8.4,2) -- (8.4,0);
\node at (0.6,2.15) {\scriptsize 1};
\node at (0.6,-0.15) {\scriptsize 1};
\node at (0.3,2.15) {\scriptsize 2};
\node at (1.7,-0.15) {\scriptsize 2};
\node at (0.3,-0.15) {\scriptsize 3};
\node at (1.7,2.15) {\scriptsize 3};
\node at (1.3,2.15) {\scriptsize 5};
\node at (2.7,-0.15) {\scriptsize 5};
\node at (1.3,-0.15) {\scriptsize 4};
\node at (2.7,2.15) {\scriptsize 4};
\node at (2.3,2.15) {\scriptsize 6};
\node at (3.7,-0.15) {\scriptsize 6};
\node at (2.3,-0.15) {\scriptsize 7};
\node at (3.7,2.15) {\scriptsize 7};
\node at (3.3,2.15) {\scriptsize 9};
\node at (4.7,-0.15) {\scriptsize 9};
\node at (3.3,-0.15) {\scriptsize 8};
\node at (4.7,2.15) {\scriptsize 8};
\node at (4.3,2.15) {\scriptsize 10};
\node at (4.3,-0.15) {\scriptsize 11};
\node at (8.4,2.15) {\scriptsize $k$};
\node at (8.4,-0.15) {\scriptsize $k$};
\node at (7.2,2.15) {\scriptsize $k-1$};
\node at (8.7,-0.5) {\scriptsize $k-1$};
\node at (7.2,-0.15) {\scriptsize $k-2$};
\node at (8.7,2.5) {\scriptsize $k-2$};
\node at (7.7,2.5) {\scriptsize $k-3$};
\node at (7.7,-0.5) {\scriptsize $k-4$};
\node at (6.2,2.15) {\scriptsize $k-4$};
\node at (6.2,-0.15) {\scriptsize $k-3$};
\draw [help lines] (8.4,2.4) -- (9,2.4);
\draw [help lines,->] (8.7,2.4) -- (8.7,2);
\draw [help lines] (8.4,-0.35) -- (9,-0.35);
\draw [help lines,->] (8.7,-0.35) -- (8.7,0);
\draw [help lines] (7.4,2.4) -- (8,2.4);
\draw [help lines,->] (7.7,2.4) -- (7.7,2);
\draw [help lines] (7.4,-0.35) -- (8,-0.35);
\draw [help lines,->] (7.7,-0.35) -- (7.7,0);
\node at (5.5,1) {$\cdots$};

\end{tikzpicture}
\end{center}
\caption{Construction of the graph $G$ in the proof of Theorem~\ref{thm:permnoaltCk}.}\label{fig:perm}
\end{figure}

Note that the algorithm described in Theorem~\ref{thm:intnestmaxsi} requires an interval nest representation of the interval nest digraph as input as the recognition problem for interval nest digraphs is still not known to be polynomial-time. On the other hand, the recognition problem for interval digraphs is known to be polynomial-time solvable~\cite{Muller}. It would be interesting to explore whether a polynomial-time algorithm can be constructed for the maximum cardinality strong independent set problem in interval digraphs.
\vspace{.1in}

\noindent\textbf{Acknowledgements.} We are grateful to Joydeep Mukherjee for his patient explanation of a result of his, which served as an inspiration for the proof of Theorem~\ref{thm:intnestmaxsi}. The first and second authors were partially supported by the ISI project CSRU2. The first and third authors were partially supported by the DST-INSPIRE Faculty Award IFA12-ENG-21.
\bibliography{urm}
\end{document}